\newcommand{\blind}{1}
\newtheorem{theorem}{Theorem}
\newtheorem{lemma}{Lemma}
\newtheorem{proposition}{Proposition}
\newcommand{\bA}{\mbox{\bf A}}
\newcommand{\bp}{\mbox{\bf p}}
\newcommand{\bAs}{\mbox{\bf A}^s}
\newcommand{\cI}{\mathcal{I}}
\newcommand{\cC}{\mathcal{C}}
\newcommand{\sA}{\mathscr{A}}
\newcommand{\bbR}{\mathbb{R}}
\newcommand{\cd}{\mathcal{d}}
\newcommand{\bone}{\mbox{\bf 1}}
\newcommand{\tS}{\Tilde{S}}
\newcommand{\tR}{\Tilde{R}}
\newcommand{\tG}{\Tilde{G}}
\renewcommand{\appendix}{
 \setcounter{section}{0}%
  \setcounter{subsection}{0}%
  \renewcommand\thesection{\Alph{section}}
  \setcounter{equation}{0}
  \renewcommand{\theequation}{S.\arabic{equation}}
  \setcounter{figure}{0}
  \renewcommand\thefigure{S\arabic{figure}}
  \setcounter{table}{0}
  \renewcommand\thetable{S\arabic{table}}  
  }
\def\spacingset#1{\renewcommand{\baselinestretch}%
{#1}\small\normalsize} \spacingset{1}
\begin{document}

\if1\blind
{
  \title{\bf Statistics in everyone's backyard: an impact study via citation network analysis}
  \author{Lijia Wang \hspace{.2cm}\\
    Department of Mathematics, University of Southern California\\
    and \\
    Xin Tong \thanks{
    Correspondence should be addressed to Xin Tong (xint@marshall.usc.edu) and Y. X. Rachel Wang (rachel.wang@sydney.edu.au)}  \hspace{.2cm}\\
    Department of Data Sciences and Operations, University of Southern California\\
    and \\
   Y. X. Rachel Wang \footnotemark[1]  \\
    School of Mathematics and Statistics, University of Sydney}
  \maketitle
} \fi

\if0\blind
{
  \bigskip
  \bigskip
  \bigskip
  \begin{center}
    {\LARGE\bf Title}
\end{center}
  \medskip
} \fi

\bigskip
\begin{abstract}
The increasing availability of curated citation data provides a wealth of resources for analyzing and understanding the intellectual influence of scientific publications. In the field of statistics, current studies of citation data have mostly focused on the interactions between statistical journals and papers, limiting the measure of influence to mainly within statistics itself. In this paper, we take the first step towards understanding the impact statistics has made on other scientific fields in the era of Big Data. By collecting comprehensive bibliometric data from the \textit{Web of Science} database for selected statistical journals, we investigate the citation trends and compositions of citing fields over time to show that their diversity has been increasing. Furthermore, we use the local clustering technique involving personalized PageRank with conductance for size selection to find the most relevant statistical research area for a given external topic of interest. We provide theoretical guarantees for the procedure and, through a number of case studies, show the results from our citation data align well with our knowledge and intuition about these external topics. Overall, we have found that the statistical theory and methods recently invented by the statistics community have made increasing impact on other scientific fields.


\end{abstract}

\noindent%
{\it Keywords:}  citation trends, influence of statistics, local clustering, personalized PageRank, conductance
\vfill

\newpage

\section{Introduction}

As a discipline that focuses on the collection, analysis and interpretation of data, statistics is outward facing and often serves as a
tool in other scientific investigations. The age of Big Data has brought about new challenges and opportunities in many fields, where the postulation, verification and refinement of scientific models rely on empirical data. In this sense, one would expect statistics to play an increasingly important role in these fields as the need for methods and tools for handling large, complex data increases. On the other hand, much of the fundamental research in statistical theory and methods requires rigorous mathematical arguments and abstract formulations for generalizability. It can be argued that the technical nature of such works serves as a barrier, making direct adoption of research developments difficult in other fields. In this paper, we consider measuring the impact of theoretical and methodological research in statistics on other scientific disciplines in recent decades. As John Tukey deftly put it: ``the best thing about being a statistician is that you get to play in everyone's backyard."      

One direct way to measure the impact of academic works is through citation data. In the digital age, comprehensive bibliometric studies have been made possible by the existence citation databases such as \textit{Web of Science} and \textit{Scopus}. From these databases, citations between papers can be extracted, represented as a network, and studied using network analysis techniques. These citation networks have been used to track the movements of ideas and measure the distance between different scientific fields \citep{shi2015weaving,V19}. Coauthorship networks can also be constructed from publication records for studying the structure of collaboration patterns \citep{newman2001structure, martin2013coauthorship}. More specifically in statistics, \cite{stigler1994citation} and \cite{varin2016statistical} used the Bradley-Terry model to measure the import and export of knowledge between statistical journals. \cite{ji2016coauthorship} collected and analyzed citation and coauthorship networks for papers in top statistical journals. Rather than focusing on the structure of citation patterns inside statistics, we provide the first comprehensive study analyzing the connections \textit{between statistics and other fields}.  

We collect citation information for papers published in selected statistical journals from the \textit{Web of Science} (WOS) Core Collection. These published papers are termed \textit{source papers} for being the source of knowledge export; our complete data contains citations between source papers as well as their citations by papers (termed \textit{citing papers}) in other journals and fields. Using descriptive statistics, we characterize the trends of citation volumes and compositions of citing fields for the source papers over time, paying attention to fields external to statistics. We compare the internal and external citations for highly cited source papers and identify the corresponding statistical research areas highly ranked by both criteria. Citation trend analysis of these areas allows us to associate them with external fields on which they have made an intellectual impact.      

Given a network, one of the most commonly used analysis techniques is community detection, also known as node clustering. On the citation network for source papers, \textit{global} clustering techniques can be used to partition the nodes into densely connected communities as has been done in \cite{ji2016coauthorship}, offering a global view of various research areas within statistics. However, in this paper, we are more interested in connecting these communities in statistics with research topics in other disciplines they have cast an influence on. That is, given an external research topic (e.g., Covid-19), we consider finding the most relevant community in statistics, with relevance measured by the citation data. A \textit{local} clustering perspective is particularly suitable in this case since i) we expect the relevant community to be small compared with the whole network, making it  challenging to detect by global clustering methods; and ii) the citations between the source papers and the citing papers give a natural way of finding ``seed nodes'' for local clustering algorithms.

A large class of local clustering algorithms is based on seed expansion: given a small subset of seed nodes from a community of interest, the rest of the community is detected by ranking the other nodes according to the landing probabilities of random walks started from the seeds. Different classes of algorithms correspond to different ways of combining these probabilities for random walks of different lengths, with the most popular ones being versions of personalized PageRank (PPR) \citep{andersen2006communities,whang2013overlapping, kloumann2014community} and heat kernels \citep{chung2009local, kloster2014heat}. These algorithms have been widely applied to large-scale real networks with much empirical success. More recently, attention has been paid to studying their theoretical properties on networks generated from the stochastic block model (SBM, \cite{holland1983stochastic}) and its variant. \cite{KUK17} showed that PPR corresponds to the optimal linear classifier under a suitable two-block SBM. Using the more general degree-corrected SBM (DC-SBM, \cite{KN11}), \cite{CZR20} showed that PPR can include high-degree nodes outside the community of interest, while using the adjusted PPR (aPPR) algorithm in \cite{andersen2006local} can correct the degree bias, achieving consistency in the detection of the target community with high probability.  

After the nodes have been ranked in terms of their relevance to the target community, it remains to choose the size of the local cluster and cut the sorted list of nodes at the desired size. A scoring function is thus needed to evaluate the quality of the communities found along the sorted list. One of the most widely used scoring functions is conductance \citep{andersen2006local, YL15, VM16}, which measures the fraction of total edge volume that points outside the cluster. A smaller conductance indicates the cluster is more separated from the rest of the network, hence more likely to be a community on its own. Assessing the performance of various scoring functions on a large number of real networks, \cite{YL15} showed conductance consistently gives
good performance in identifying ground-truth communities. The theoretical properties of conductance, however, has not been investigated under the local clustering setting with generative network models. For our local clustering procedure, we adopt aPPR followed by conductance local minimization. Under the DC-SBM, we show that with high probability, this procedure finds all the nodes in the community to which the seed nodes belong.

The rest of the paper is organized as follows. In Section \ref{sec:data_summary}, we  describe the data collection procedure and various covariates used in our analysis. We provide a summary of citation trends over time and citation distributions for each journal. In Section \ref{sec:compare}, we study the diversity of citing fields by grouping the citing papers according to the research areas they belong to. In particular, we determine which highly cited source papers have high citations both within statistics and outside statistics, as well as those that appear to have a larger impact on one side of the audience. In Section \ref{sec:local_cluster}, we describe our local clustering procedure for finding the statistical community most relevant to an external research topic. We provide theoretical analysis of its behavior under the DC-SBM and demonstrate its performance on simulated data and a number of case studies from our citation network. We end the paper with a discussion of the merits and limitations of our study, pointing to directions in which it can be extended in the future.

\section{Data collection and overview of citation trends}
\label{sec:data_summary}
\subsection{Data collection}\label{sec:data}

We conducted our study on all the papers published from 1995 to 2018 in five influential statistics journals: \textit{Annals of Applied Statistics} (AOAS), \textit{Annals of Statistics} (AOS),  \textit{Biometrika}, \textit{Journal of the American Statistical Association} (JASA) and \textit{Journal of the Royal Statistical Society: Series B} (JRSSB)\footnote{We include both publication names JRSSB used during 1995-1997. 
}. Using a Python script, we crawled the bibliographic database \textit{Web of Science (WoS) Core Collection} to collect citation data for a total of  $9{,}338$ papers published in these journals in the time span considered. We only included publications whose document types are listed as ``article" in WoS.  We call these publications \textit{source papers} since they act as a source of knowledge for papers citing them. Among our selected journals, AOS, Biometrika, JASA, and JRSSB are considered by many researchers in the statistics community as top outlets for theory and method works. We have also included AOAS as a representative journal with a broad applied focus. 



For each source paper, the WoS database provides a list of papers citing it and the corresponding publication information. We finished extracting these lists before December 2020. In addition to the citations between the source papers, $264{,}356$ papers from other journals (or from the selected five statistics journals but published in 2019 and 2020) cited these source papers; these papers are called \textit{citing papers}.\footnote{The accessibility of citing papers depends on the university library VPN used to access the WoS database.} Rather than limiting to ``article" as we did for the source papers, the citing papers can be of any document type. Based on the lists of citations, we build a citation network that consists of $273{,}694$ nodes including all the source and citing papers, and edges representing citations between the source papers and from the citing papers to the source papers.

The above citation network can be represented by a binary adjacency matrix $\bA \in \{0, 1\}^{273694\times 9338}$, in which
\begin{equation}\label{eq:adj}
A_{ij} =\begin{cases}
1\,,&\mbox{$i$ cites $j$;}\\
0\,,&\mbox{otherwise.}
\end{cases}
\end{equation}
In this matrix, we assign each source paper to an index in $\cI_s = \{1,\ldots,9338\}$ and each citing paper to an index in $\cI_c =\{9339,\ldots,273694\}$. Our current study does not contain citations from the source papers to the citing papers since we are primarily interested in the impact of source papers on other scientific works.


We obtained the publication information for both source and citing papers from the WoS database. In particular, the following variables are central to our analysis: (1) article title,  (2) publication source title (e.g., journal or conference names), (3) publication year, (4) author keywords, (5) abstract, (6) WoS categories (e.g., ``Statistics \& Probability" and ``Mathematical \& Computational Biology"), and (7) research areas (e.g., ``Mathematics"). In our dataset, only 98 of all the papers do not have any specified categories (nor research areas), thus we label their categories (and research areas) as ``NA". We use the broad research areas to classify the general field of each paper and the WoS categories to provide finer classifications when statistics needs to be distinguished from other research fields. More discussions on the division and field classification can be found in Section~\ref{sec:compare}.

Furthermore, in Section~\ref{sec:compare}, we use the variables (2) publication source title, (3) publication year and (7) research areas to illustrate the change of impact on external and internal areas over time for the selected five journals and their highly cited papers. In Section~\ref{sec:local_cluster}, we select papers from target topics based on (1) article title and (5) abstract, and validate the local community found using (4) author keywords. More details about the usage of these variables will be presented in the respective sections.

\subsection{Citation distributions and trends}


As shown in Figure~\ref{fig:histogram_table}, the vast majority of source papers have fewer than $500$ citations with $1.92\%$ of the source papers receiving zero citations. Figure~\ref{fig:histogram} further plots the distribution of the citation counts for source papers with citations from 0 to 500. We observe that removing the zero-citation papers would lead to a power-law distribution of the citation counts. 
Notably, only $0.06\%$ papers (6 papers) received more than $5{,}000$ citations. Highly cited papers like these will be discussed in more details in Section~\ref{sec:top_20}.

\spacingset{1}
\begin{figure}[!ht]
     \centering
     \begin{subfigure}[b]{0.42\textwidth}
    \centering \includegraphics[width=\textwidth]{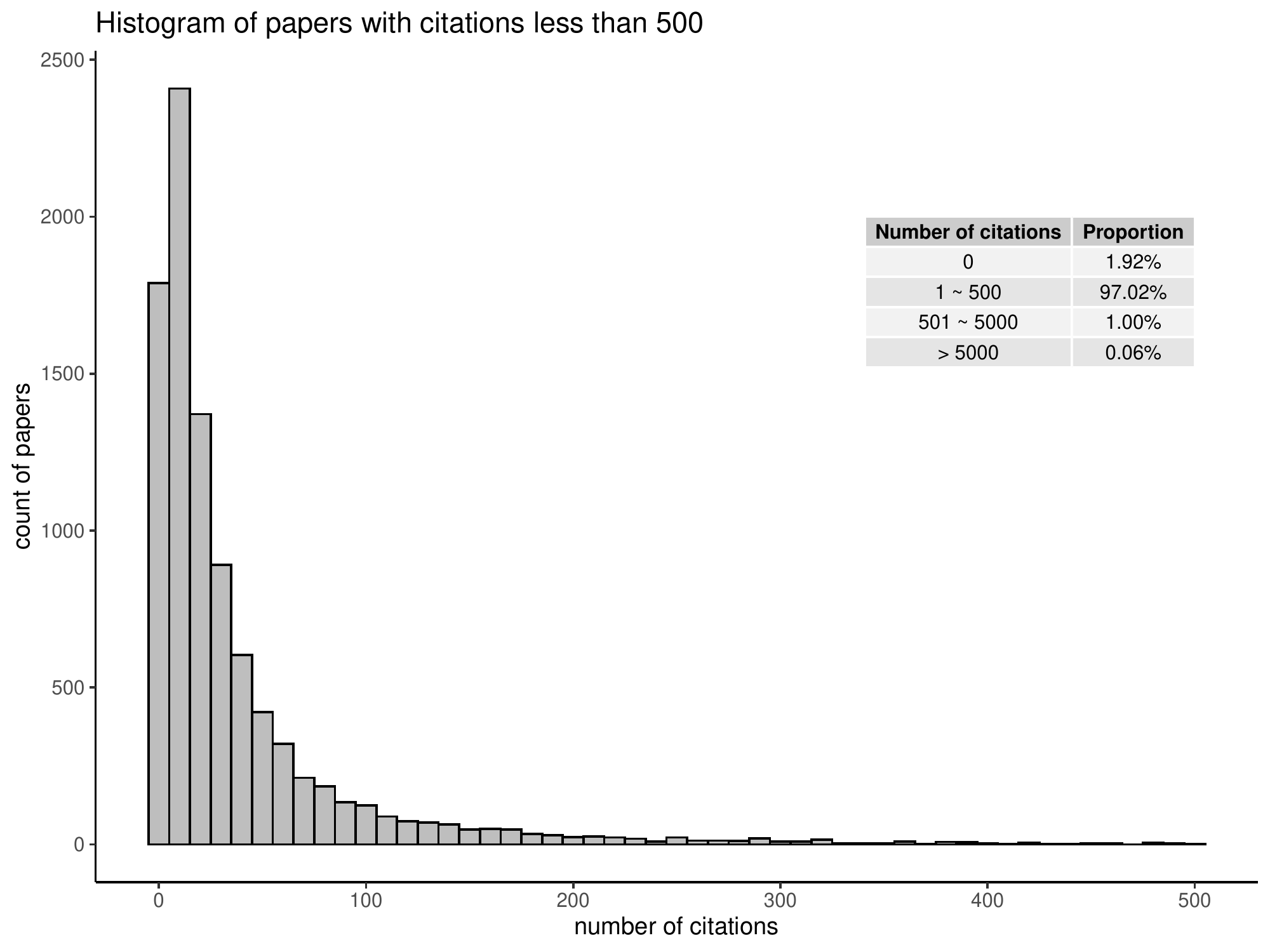}
          \caption{ }\label{fig:histogram_table}
     \end{subfigure}
     \hfill
     \begin{subfigure}[b]{0.55\textwidth}
         \centering
         \includegraphics[width=\textwidth]{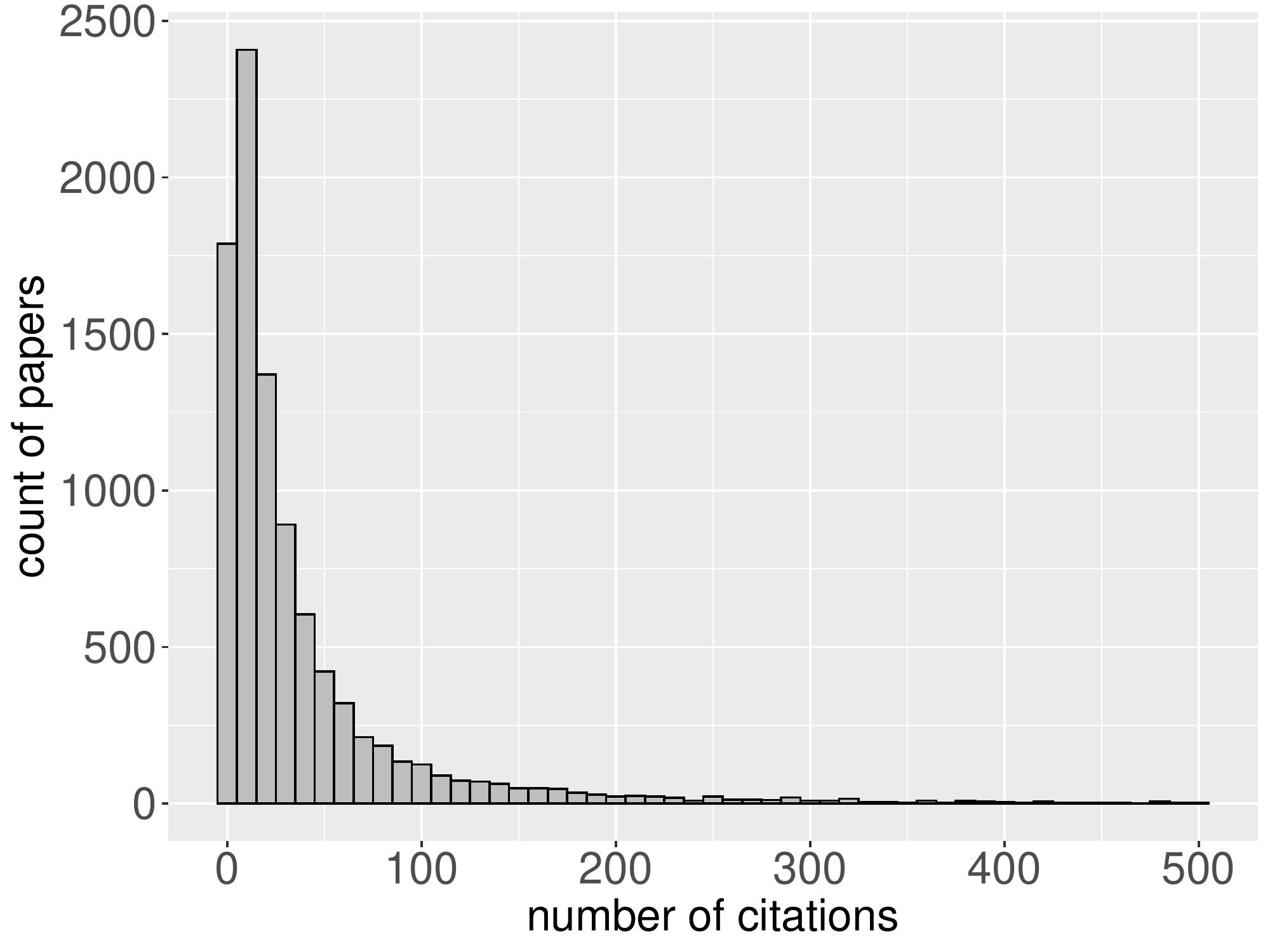}
         \caption{ }\label{fig:histogram}
     \end{subfigure}
     \hfill
 \caption{ The distribution of citation counts for the source papers: (a) the proportions of papers falling into different citation brackets; (b) the histogram of papers with citations $\leq 500$.}
\end{figure}

Looking at the trends over the years, the total number of citations for each journal grows consistently (Supplementary Figure~\ref{fig:total_citation}), and the growth is not due to the journals expanding their volumes of publications. In fact, there was no significant increase in the annual number of publications in each journal (Supplementary Figure~\ref{fig:volume}) except AOAS. AOAS was established in 2007 and subsequently went through fast growth period before stabilizing. 
To account for the effect of publication numbers, for each year $T$, we normalize the annual citation count for each journal by the total number of published papers from 1995 to $T$ in that journal, since any citing paper published in year $T$ is free to cite source papers in the period 1995-$T$. Figure~\ref{fig:avg_cite} shows that the normalized citations still increase consistently over the years for all the journals, among which JRSSB enjoys substantially more citations per article after 2002. AOAS' normalized citations have been growing quickly as a relatively new journal.

\spacingset{1}
\begin{figure}[!ht]
     \centering
     \begin{subfigure}[b]{0.48\textwidth}
    \centering
         \includegraphics[width=\textwidth]{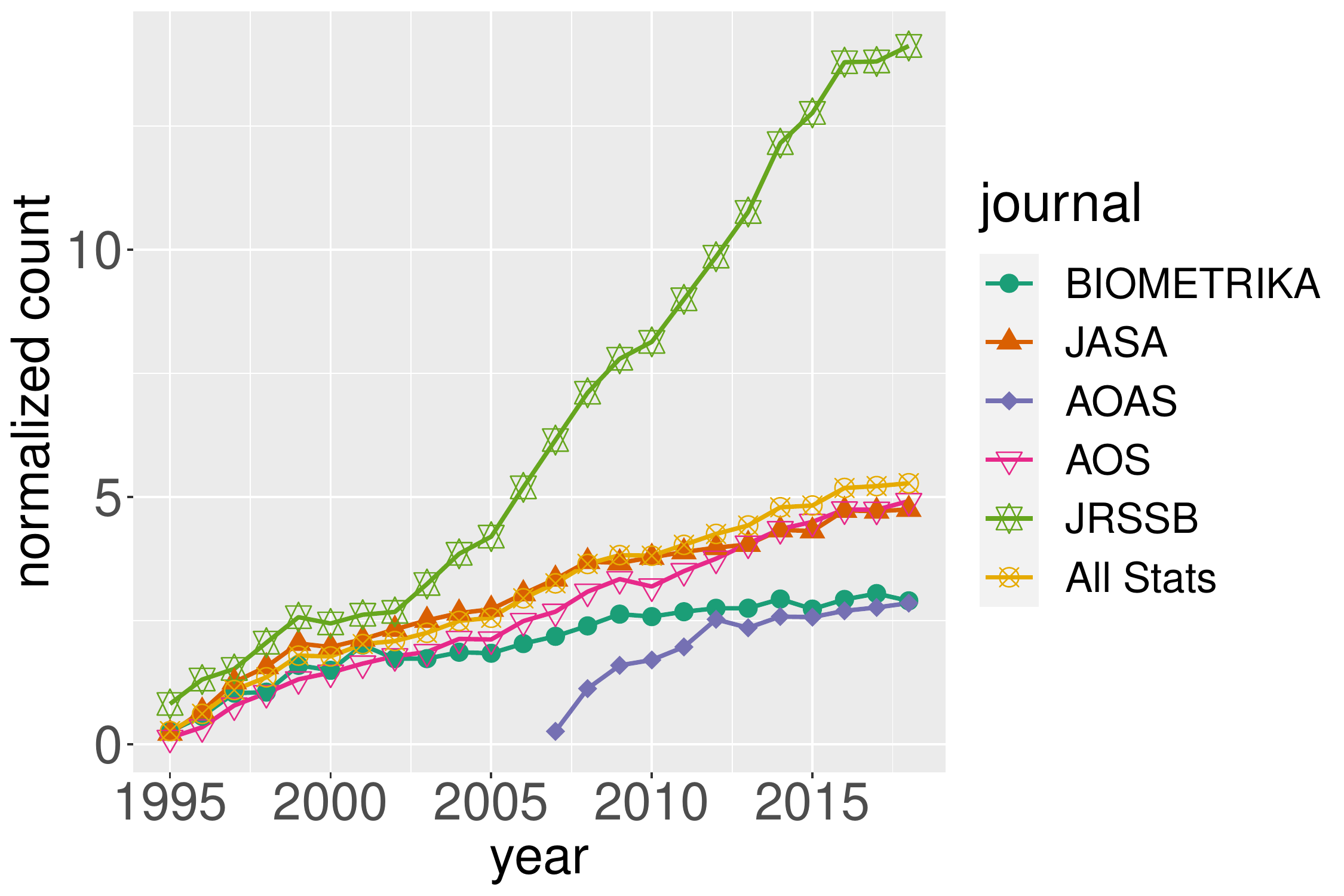}
         \caption{}\label{fig:avg_cite}
     \end{subfigure}
     \hfill
     \begin{subfigure}[b]{0.48\textwidth}
         \centering
         \includegraphics[width=\textwidth]{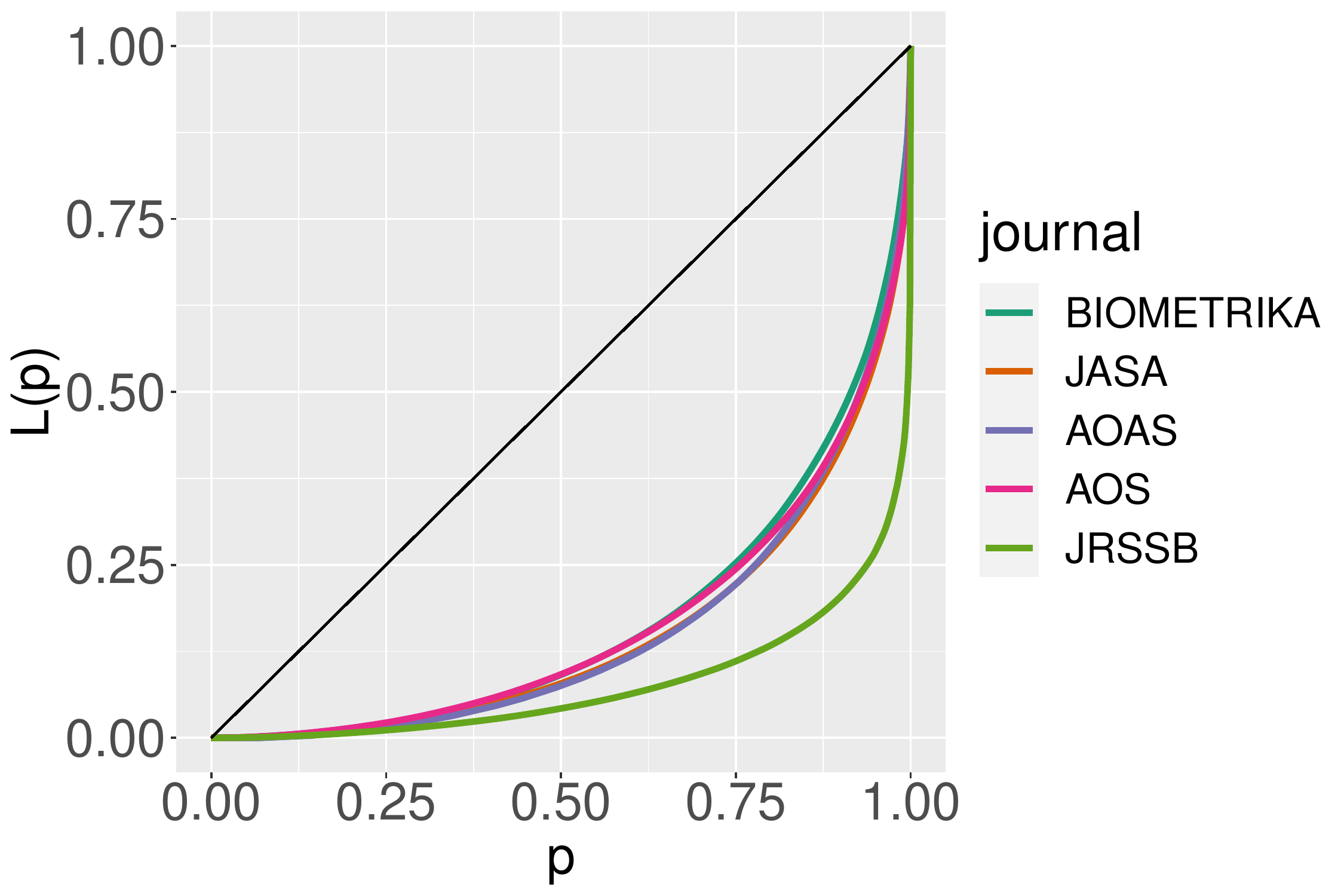}
          \caption{}\label{fig:lorenz}
     \end{subfigure}
     \hfill
 \caption{Citation trends and distributions for each journal: (a) the normalized number of citations over the years. ``All Stats" refers to all the source papers; (b) the Lorenz curve for each  journal.}\label{fig:two_gini}
\end{figure}




It is clear that citation counts are not distributed equally across all the papers, and one possible way to measure citation inequality is through the Lorenz curves \citep{V19,ji2016coauthorship}. For journal $j$, define
$$L(p) = \frac{\sum^{\lfloor p\times N_j \rfloor }_{i = 1} d_{(i)}}{\sum^{N_j}_{i = 1} d_{(i)}}\,,$$
where $N_j$ is the number of publications, $p$ is the percentage, and $ d_{(1)}, d_{(2)}, \ldots, d_{(N_j)}$ are the citation numbers in a non-decreasing order of papers in journal $j$ published in 1995-2018. $L(p)$ calculates the percentage of citations shared by the bottom $p$ percent of papers as a measure of inequality. Figure~\ref{fig:lorenz} plots $L(p)$ as a Lorenz curve for each journal, with curves closer to the bottom right corner indicating greater extent of inequality. Most journals have highly similar curves, while JRSSB appears to have the most significant inequality. This can be explained by the fact that there are four papers that each received more than $5{,}000$ citations, accounting for $49.5\%$ of the total citations towards JRSSB in this period. (Recall that we only have six papers in our dataset with citation numbers exceeding $5{,}000$.) After removing these four papers, the normalized citation counts for JRSSB become much closer to the other journals but remain the highest of all journals (Supplementary Figure~\ref{fig:normal_2}). 


\section{Comparison between internal and external citations}\label{sec:compare}


As the overall citations for each journal increase over the years, how much of the increase can be attributed to research fields outside statistics? In this section, we break down the citations by their research fields, paying attention to the distinction between internal and external citations. 

As mentioned Section~\ref{sec:data}, even though the WoS categories and research areas can help us identify the research field each paper belongs to, we still have to make a decision about whether a citation should be considered inside (internal) or outside (external) of statistics. This is a subjective decision in some sense given the interdisciplinary nature of many research topics in statistics and the overlap of statistics with fields such as mathematics, computational biology and econometrics. We take the following approach, which perhaps can be viewed as conservative in estimating external impact. We consider two types of internal papers. The first type includes papers containing the tag ``Statistics \& Probability" in their WoS categories, which applies to all the papers published in common statistics and/or probability journals. These papers are labeled as ``STATS" in our subsequent plots. The second type includes papers whose WoS categories contain the keyword ``math" (e.g., ``Mathematics" and ``Mathematical \& Computational Biology"). Additional papers selected by this step are published in journals such as \textit{Journal of Econometrics, BMC Bioinformatics}, thus from fields reasonably close to statistics. In what follows, these papers are labeled as ``MATH" and counted as internal citations. The rest of the papers are considered as external. This procedure divides our dataset into $83{,}503$ internal and $190{,}191$ external papers.


We use the research areas\footnote{\url{https://images.webofknowledge.com/images/help/WOS/hp_research_areas_easca.html}} provided by WoS to classify the external papers into five broad categories: arts \& humanities (ART), life sciences \& biomedicine (BIO), physical sciences (PHY), social sciences (SOC), and technology (TECH). We note that the finer divisions under research areas could also be used to search for the second type of internal papers described above with ``math" as a keyword, but doing so would lead to a subset of the papers already selected by the WoS categories.


\subsection{Diversity of citing fields over time}\label{sec:diversity}

Using the category labels discussed above, Figure~\ref{fig:overall_diversity} shows the research area breakdowns for all the citations over the years. If an external paper  
lists multiple research areas, each area is weighted equally and contributes a fractional count to the total. As expected, in the earlier years of our period of study, most of the citations are from within statistics. However, the proportion of external citations soon begins to increase at a fast pace and finally exceeds half. Among the external citations, BIO and TECH have heavy weights. The same trend for each journal separately is presented in Supplementary Figure~\ref{fig:diversity_journal}. The proportion of external citations also increases over time for all the journals, with AOAS and JRSSB having larger  proportions than the others.

\spacingset{1}
\begin{figure}[!ht]
     \centering
     \begin{subfigure}[b]{0.55\textwidth}
    \centering
    \includegraphics[width=\textwidth]{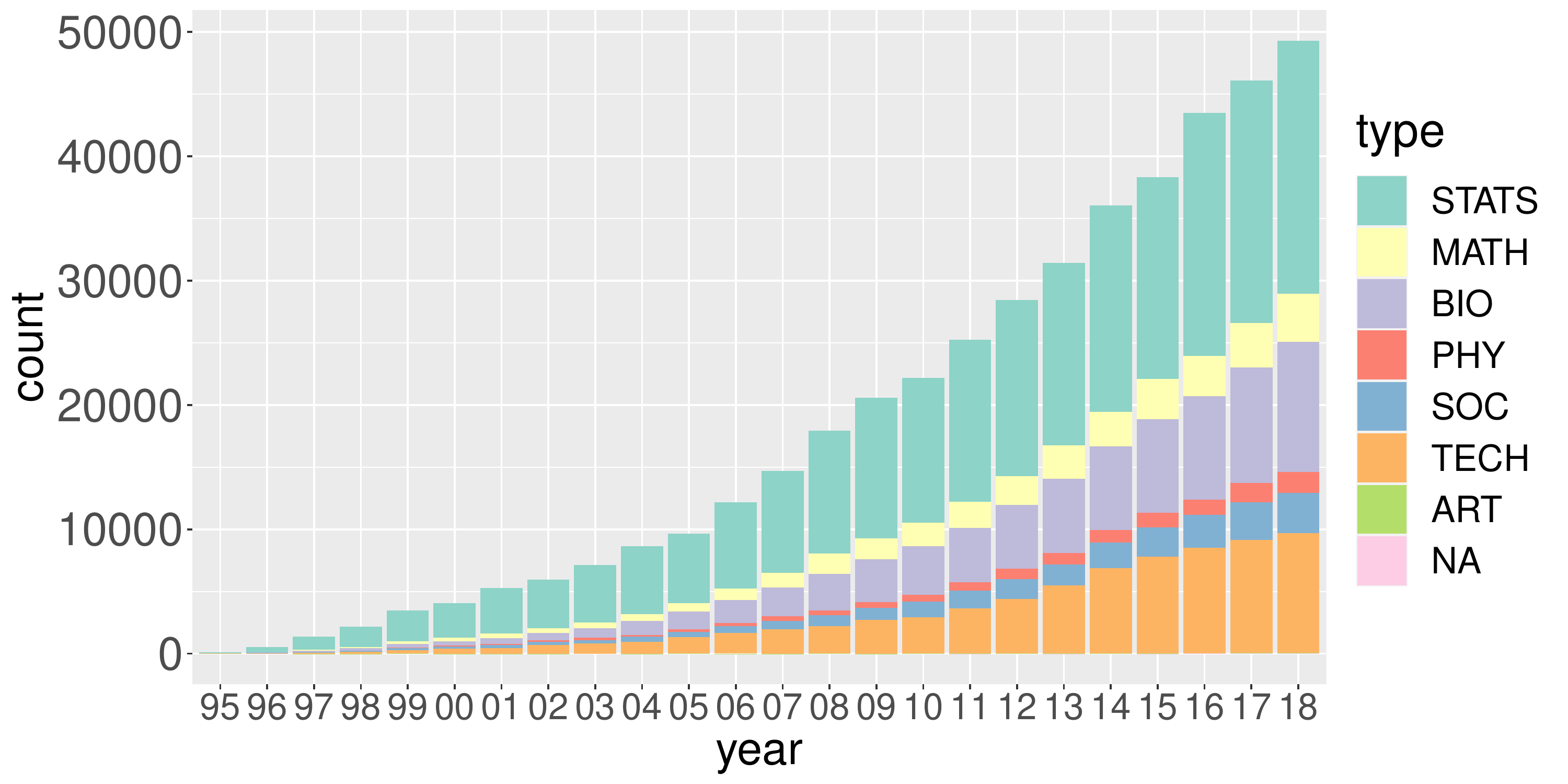}
    \vspace{-0.3cm}
    \caption{}\label{fig:overall_diversity}
     \end{subfigure}
     \hfill
     \begin{subfigure}[b]{0.4\textwidth}
         \centering
    \includegraphics[width=\textwidth]{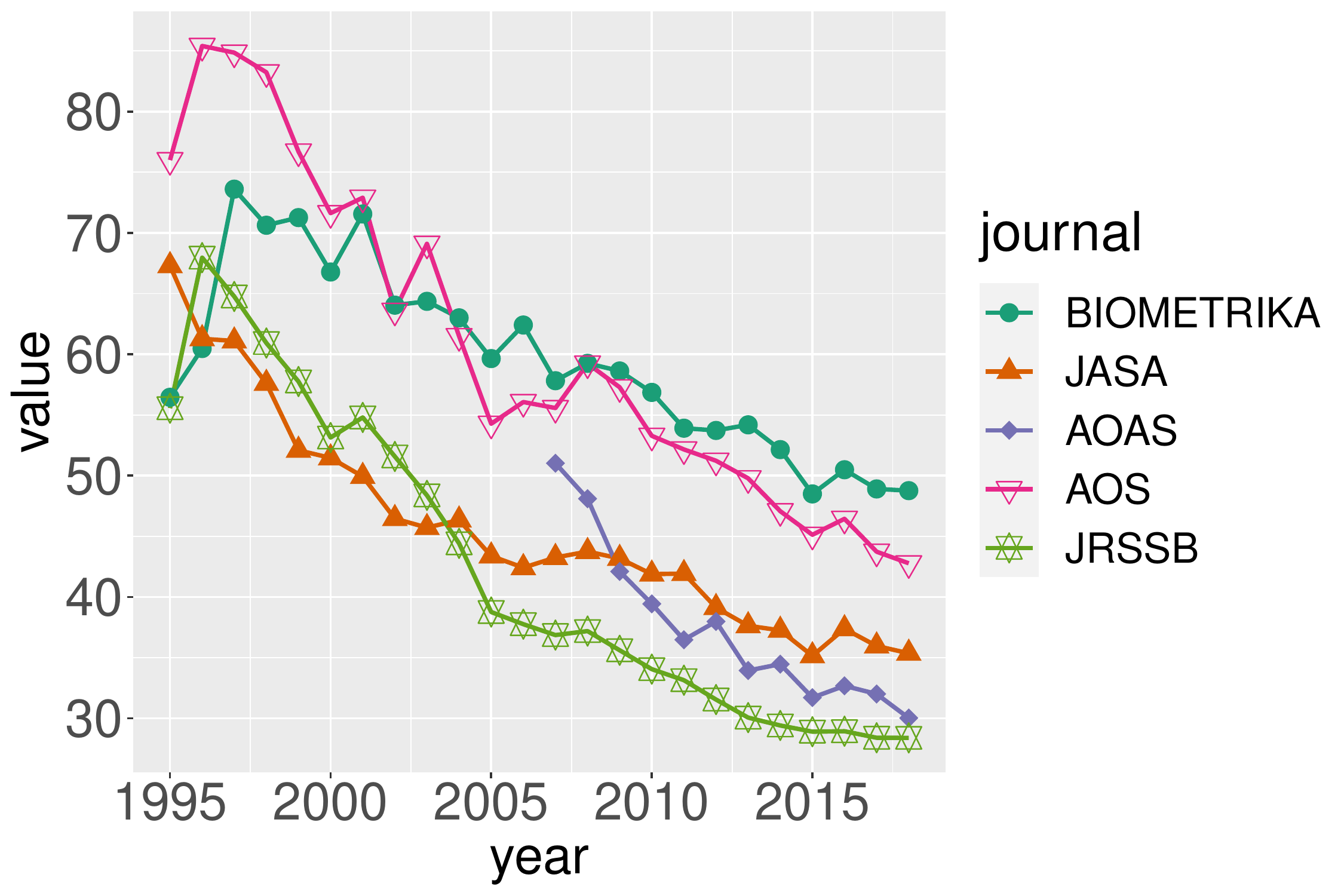}
    \vspace{-0.3cm}
    \caption{}\label{fig:gini_con}
     \end{subfigure}
     \hfill
 \caption{Diversity of citing fields in statistics journals: (a) the annual counts of internal and external citations for all the source papers; (b) the yearly Gini concentration for each journal.}
\end{figure}


One way to summarize the distribution of proportions and put the diversity measure for each journal on the same scale is through the use of Gini concentration \citep{stigler1994citation}. Let
$$\mbox{Gini Concentration} = 100 \times \sum_i s_i^2\,,$$
where $s_i$ is the proportion of citations from research category $i$, and we consider the same categories as shown in Figure~\ref{fig:overall_diversity} except that we combine STATS and MATH into one internal category. Journals with more diverse citations by external categories have lower Gini concentrations. Figure~\ref{fig:gini_con} plots the change in the Gini concentration for each journal over the years. Overall the trends agree with our results in Figure~\ref{fig:overall_diversity} and Supplementary Figure~\ref{fig:diversity_journal}. All the journals have demonstrated increasing connections with external fields, with AOAS, JASA, and JRSSB being more diverse than the others.


\subsection{Internal and external impact of most highly cited papers }
\label{sec:top_20}

In the previous section, we compared the proportions of internal and external citations at an aggregated level within each journal. Now we turn to examine the internal and external impact of some specific source papers selected based on their high citation counts. Do highly cited papers always have high impact both internally and externally? To this end, we first rank the source papers according to their internal and external citation counts separately. Focusing on papers in the top 20 list by either internal or external counts, Figure~\ref{fig:rank_comp} shows their respective ranks internally and externally. One can see that most of these papers are ranked high under both criteria except for a few outliers. We focus on the most obvious two (boxed in red) and provide their information in Table~\ref{table:top_outlier} and further analysis below.

\spacingset{1}
\begin{figure}[!ht]
\centering
\includegraphics[width=10cm]{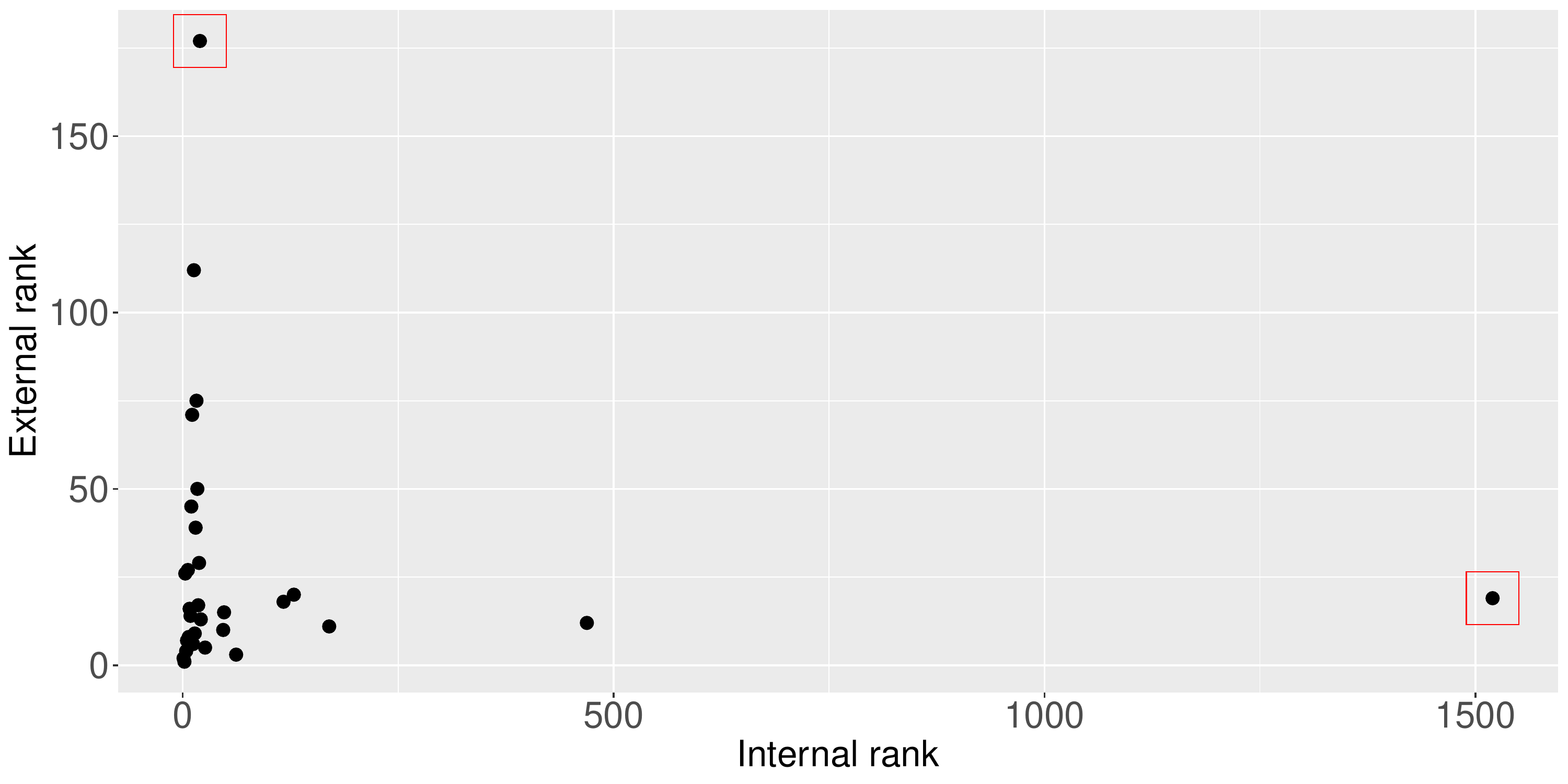}
\vspace{-0.3cm}
\caption{The comparison of external and internal ranks for highly cited papers.}\label{fig:rank_comp}
\end{figure}

\begin{table}[!ht]
\centering
\footnotesize
\begin{tabular}{p{9cm}|c|c|c}
\hline\hline
Title  [author, year] & \begin{tabular}[c]{@{}l@{}}Rank\\ (internal)\end{tabular} & \begin{tabular}[c]{@{}l@{}}Rank\\ (external)\end{tabular} & \begin{tabular}[c]{@{}l@{}}\# of\\ Citations\end{tabular}   \\ \hline
The multivariate skew-normal distribution \citep{azzalini1996multivariate} & 20  & 177 & $749$     \\ \hline
 A nonparametric trim and fill method of accounting for publication bias in meta-analysis \citep{duval2000nonparametric}  & $1{,}520$ & 19 & $1{,}362$    \\ \hline
\end{tabular}
\caption{Papers with significantly different  internal and external ranks.}\label{table:top_outlier}
\end{table}

The first paper \citep{azzalini1996multivariate} in Table~\ref{table:top_outlier} ranks in the top 20 based on the internal citation counts, but its external rank is relatively lower in comparison. Since the paper is about distribution theory, unsurprisingly we find most of the citations come from fields closely related to statistics. Supplementary Table~\ref{table:hl_paper_1} provides the top 10 WoS categories and their number of occurrences among the citations, with ``Statistics \& Probability" appearing most often. Also, most of these categories contain the keyword ``math", which explains the higher internal rank. The other categories (e.g, ``Computer Science, Interdisciplinary Applications") are still closely related to statistics or mathematics. Upon removing the internal papers, the occurrences of these categories other than statistics and mathematics decrease significantly (Supplementary Table~\ref{table:hl_paper_2}), suggesting many of the previous counts are contributed by internal papers with multiple category labels. Overall, the paper has reached a larger audience within statistics and mathematics, most likely due to its technical nature.  


The second paper \citep{duval2000nonparametric} in Table~\ref{table:top_outlier} demonstrates the opposite pattern, with a high external rank but a low internal rank. This paper proposes a practical method of evaluating and adjusting for the possibility of publication bias (e.g., a preference for positive results), a well-known phenomenon in published academic research especially in meta-analysis, and thus has attracted wide scientific interests. Supplementary Table~\ref{table:lh_paper_1} lists the top 10 most frequent WoS categories among all the citations. One can see that list is dominated by psychiatry and psychology, while statistics or mathematics related categories are not present. This list remains almost unchanged after removing all the internal papers from the citations (Supplementary Table~\ref{table:lh_paper_2}). We have additionally searched for keywords related to publication bias in the title and author keywords of the internal papers. The search only returns 59 papers, confirming the topic is less explored internally and could be a potential area for further theoretical and methodological development in statistics. We note that Figure~\ref{fig:rank_comp} has another paper \citep{lo2001testing} with a low internal rank (469) and a high external rank (12). The paper has a similar category profile to \citep{duval2000nonparametric} (Supplementary Table~\ref{table:lh_paper_3}), hence detailed discussion is omitted.

\spacingset{1}
\begin{table}[!ht]
\centering
\footnotesize
\begin{tabular}{p{8cm}|p{1.5cm}|c|c|c}
\hline\hline
Title [author, year]& \begin{tabular}[c]{@{}l@{}}Area\\ (statistics)\end{tabular}  & \begin{tabular}[c]{@{}l@{}}Rank\\ (internal)\end{tabular} & \begin{tabular}[c]{@{}l@{}}Rank\\ (external)\end{tabular} & \begin{tabular}[c]{@{}l@{}}\# of\\ Citations\end{tabular}  \\ \hline
Reversible jump Markov chain Monte Carlo computation and Bayesian model determination \citep{green1995reversible} & MCMC & 8  & 16 & $2{,}868$     \\ \hline
Identification of causal effects using instrumental variables \citep{angrist1996identification} & Causal & 18 & 17 & $2{,}125$     \\ \hline
Least angle regression \citep{efron2004least} & Penalized regression  & 7 & 8   & $4{,}252$     \\ \hline
The control of the false discovery rate in multiple testing under dependency \citep{benjamini2001control} & FDR & 12  & 6 & $5{,}062$     \\ \hline
Model selection and estimation in regression with grouped variables \citep{yuan2006model} & Penalized regression    & 9 & 14   & $2{,}935$     \\ \hline
Regularization and variable selection via the elastic net \citep{zou2005regularization} & Penalized regression    & 5 & 7 & $5{,}790$     \\ \hline
A direct approach to false discovery rates \citep{storey2002direct} & FDR & 14 & 9   & $3{,}186$     \\ \hline
Bayesian measures of model complexity and fit \citep{spiegelhalter2002bayesian} & Bayesian model selection   & 4     & 4    & $6{,}743$     \\ \hline
Controlling the false discovery rate: a practical and powerful approach to multiple testing \citep{benjamini1995controlling} & FDR & 2   & 1    & $46{,}899$    \\ \hline
Regression shrinkage and selection via the Lasso \citep{tibshirani1996regression} & Penalized regression     & 1   & 2     & $16{,}905$    \\ \hline
\end{tabular}
\caption{Papers whose  internal and external citations both rank in the top 20.}\label{table:top_20}
\end{table}


\begin{figure}[!ht]
     \centering
     \begin{subfigure}[b]{0.45\textwidth}
    \centering
    \includegraphics[width=\textwidth]{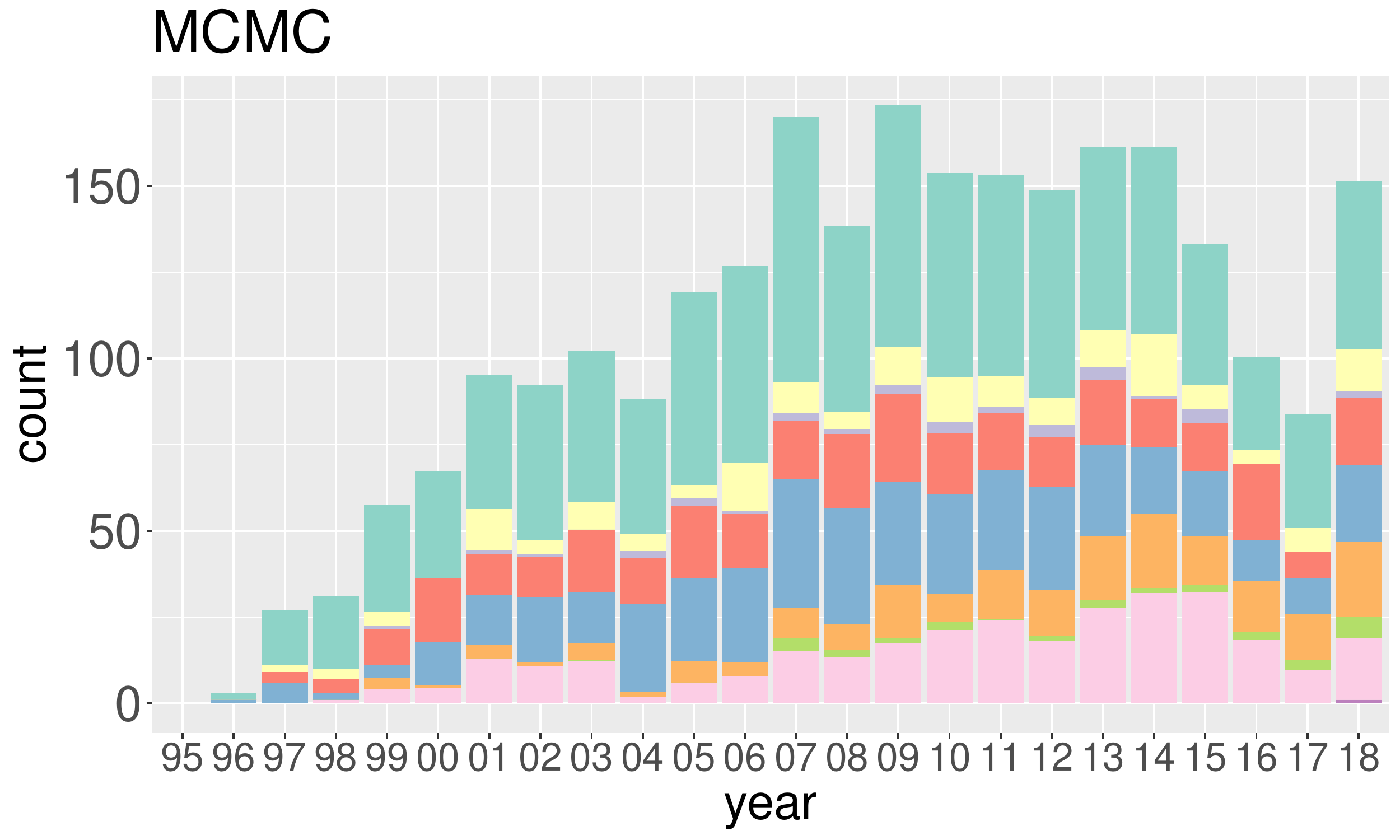}
    \caption{}
     \end{subfigure}
     \begin{subfigure}[b]{0.45\textwidth}
    \centering
    \includegraphics[width=\textwidth]{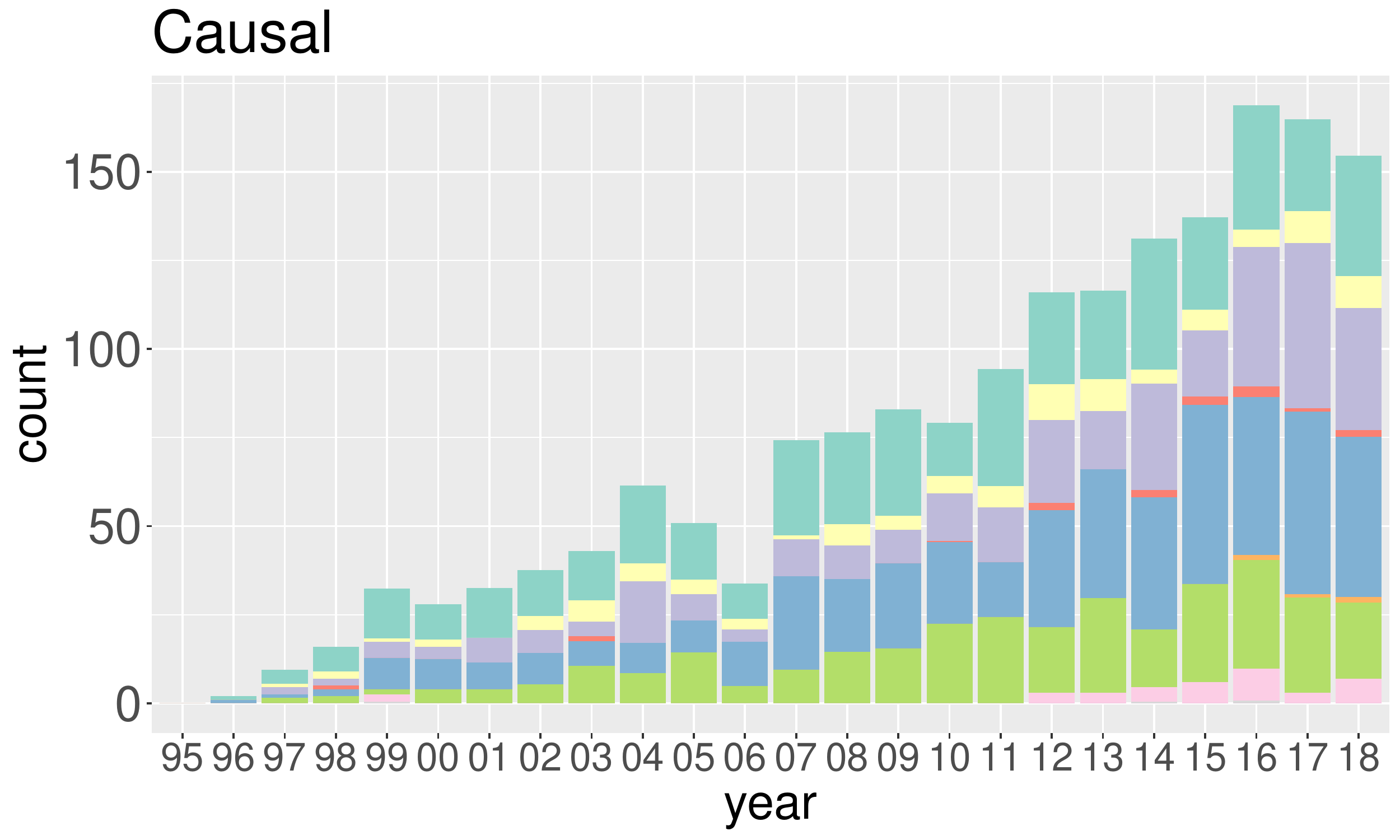}
    \caption{}
     \end{subfigure}
     \begin{subfigure}[b]{0.45\textwidth}
    \centering
    \includegraphics[width=\textwidth]{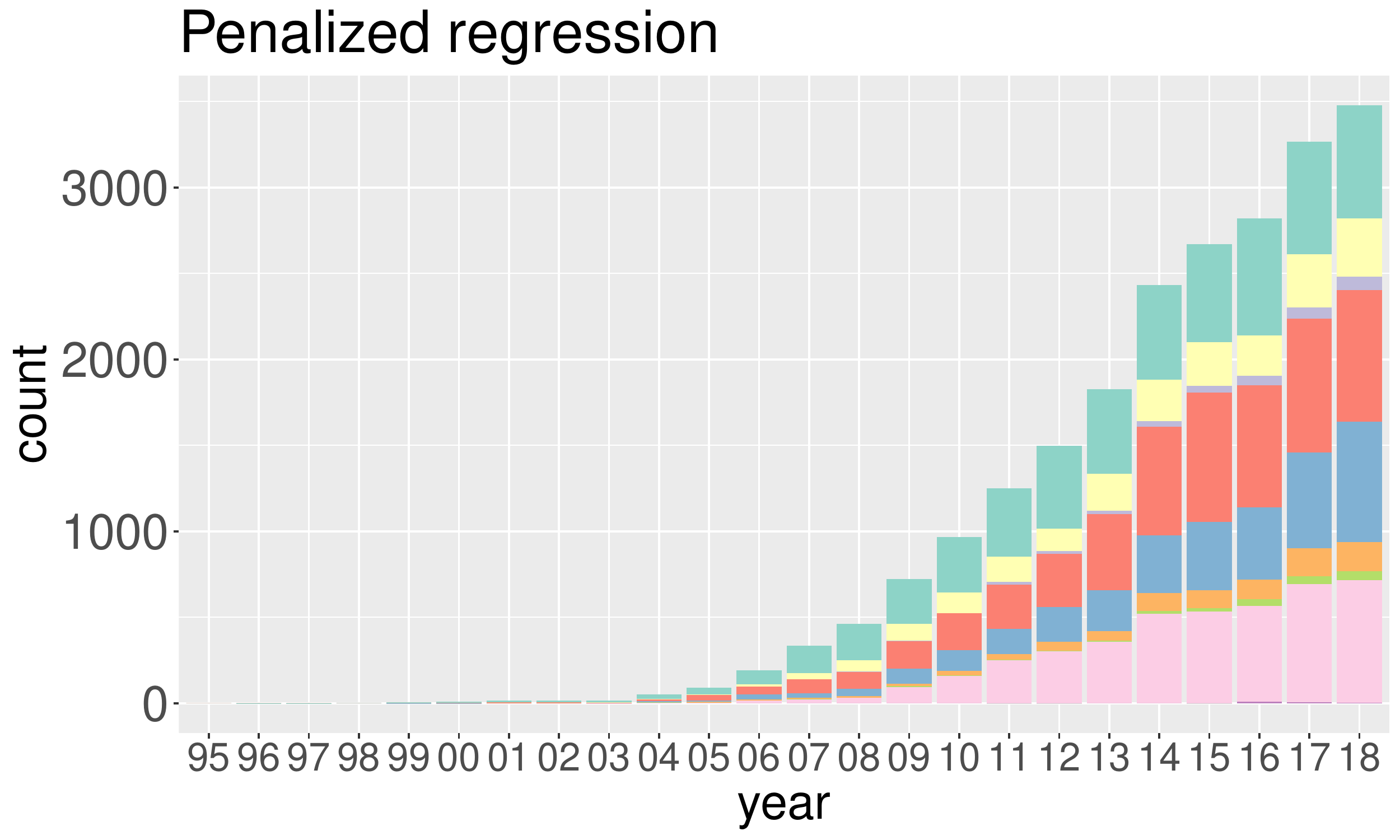}
    \caption{}
     \end{subfigure}
     \begin{subfigure}[b]{0.45\textwidth}
    \centering
    \includegraphics[width=\textwidth]{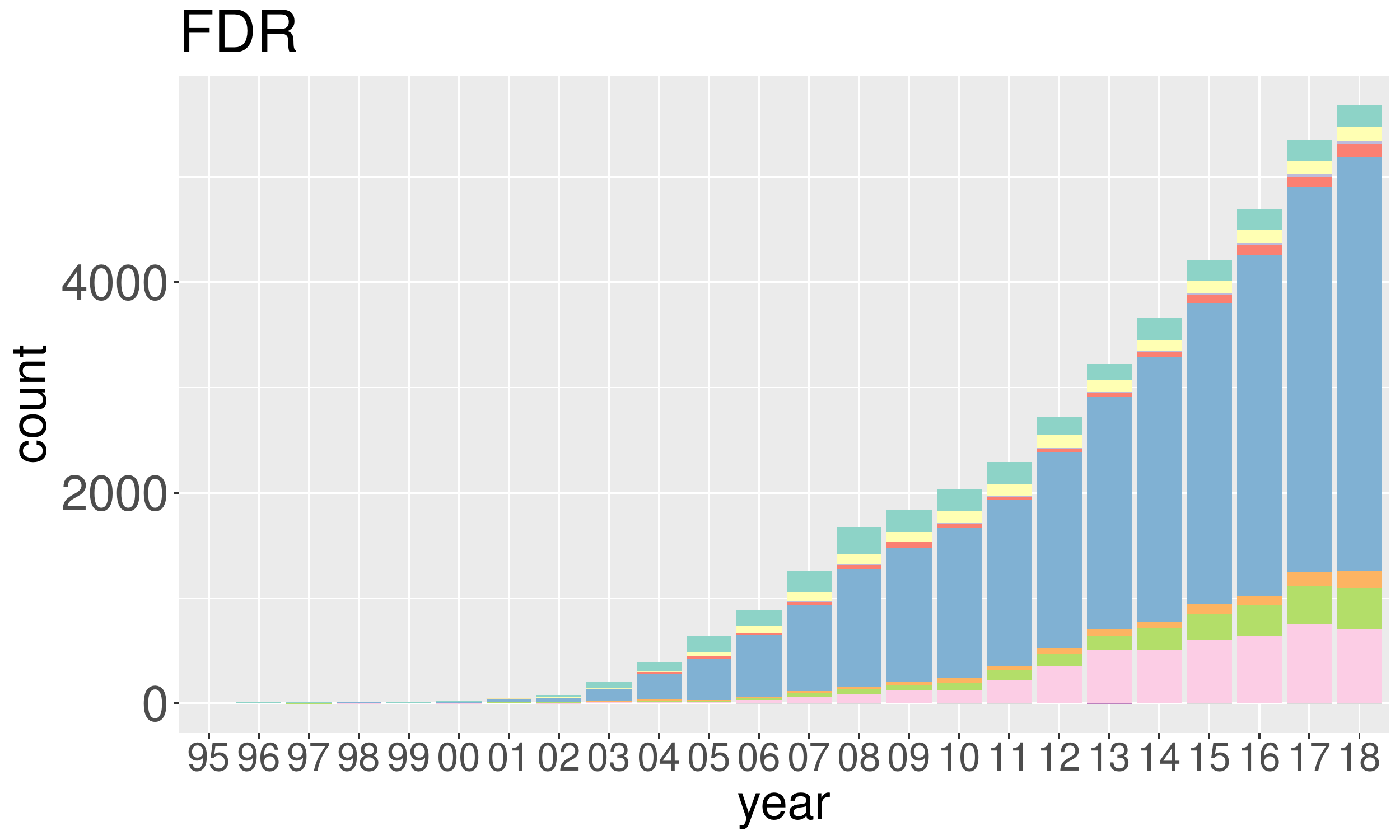}
    \caption{}
     \end{subfigure}
     \begin{subfigure}[b]{0.5\textwidth}
    \centering
    \includegraphics[width=\textwidth]{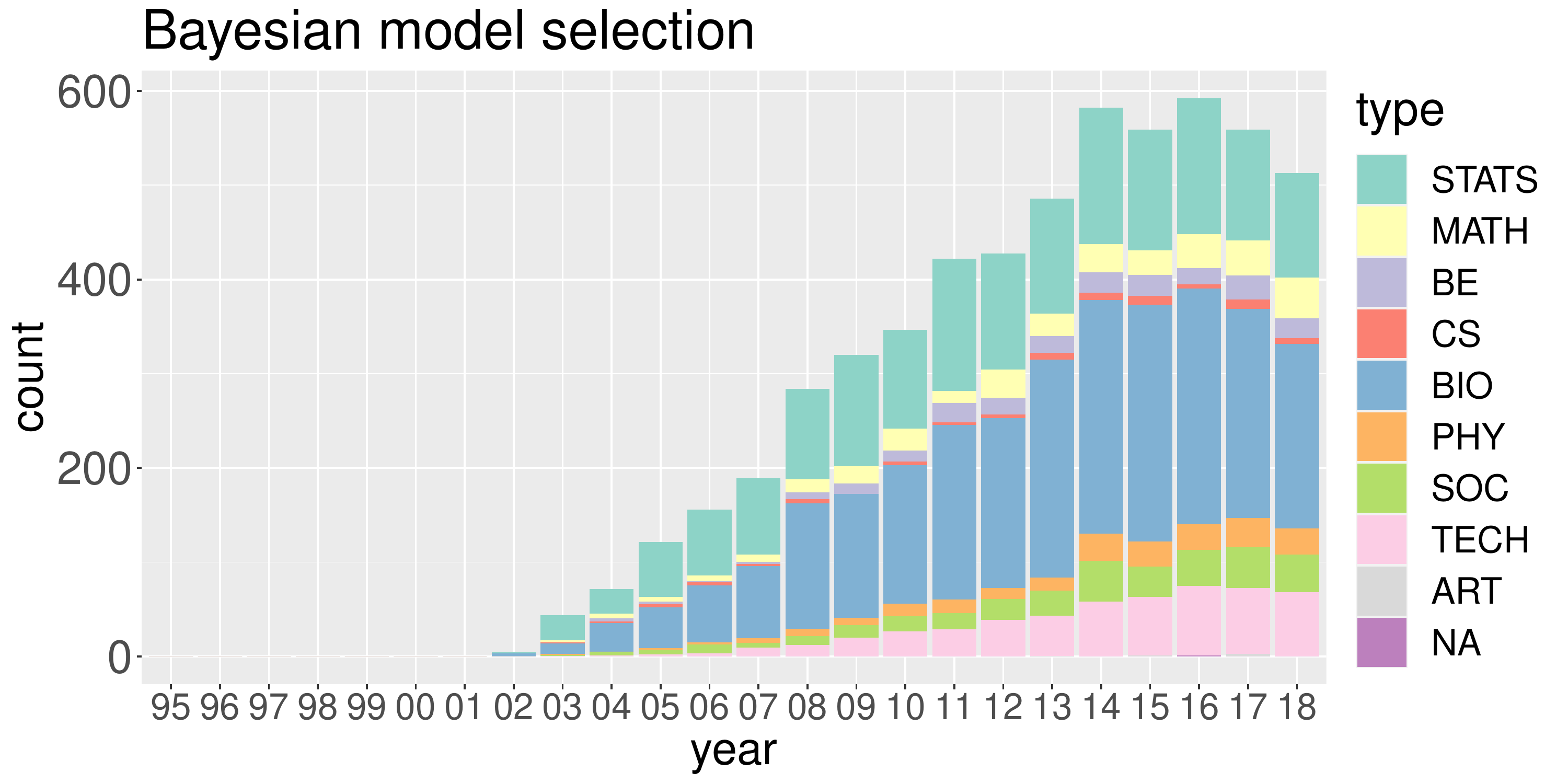}
    \caption{}
     \end{subfigure}
     \caption{Breakdown of citations for the papers in Table~\ref{table:top_20} aggregated by the five statistical topics: (a) MCMC, (b) Causal, (c) Penalized regression, (d) FDR, and (e) Bayesian model selection.}\label{fig:diversity_area}
\end{figure}

As can be observed in Figure~\ref{fig:rank_comp}, most papers have both high internal and external ranks. Table~\ref{table:top_20} lists all the papers that are ranked in the top 20 both internally and externally. 
We classify these papers roughly into five topics: Markov chain Monte Carlo (MCMC), causal inference (causal), penalized regression, false discovery rate (FDR), Bayesian model selection. To investigate the influence of these papers on other fields, we consider the aggregated citations by the five topics and break down the citations by category labels, similar to Figure~\ref{fig:overall_diversity}. In this case, we have added two category labels: ``BE" for the research area ``Business \& Economics" and ``CS" for the research area ``Computer Science", since we notice a considerable number of citations are from these two areas, especially for causal inference and penalized regression. To avoid double counting, papers with the BE (or CS) label will not be counted in SOC (or TECH), which is the broad category BE (or CS) belongs to in WoS. Similar to before, multiple labels for one paper are weighted equally. Figure~\ref{fig:diversity_area} shows that the influence on other fields differs by statistical research topics. FDR and Bayesian model selection have always attracted a substantial proportion of citations from BIO, even from the earlier years. MCMC and penalized regression have more citations from CS than the others. On the other hand, causal inference has the largest proportion of citations from SOC and BE among the five topics.

\section{Connecting statistical research communities to external topics by local clustering}\label{sec:local_cluster}

We have seen that different research topics in statistics often have different citation profiles by external fields, indicating they may have a heavier influence on some fields and topics and less so on others.  This prompts us to consider the question, given a specific external research topic, can we identify the most relevant statistical research topic (with relevance measured by our collected citation data)? This section investigates a local clustering approach by aPPR followed by appropriate cutoff selection. We present theoretical studies under the DC-SBM and results on simulated data. More importantly, we demonstrate the efficacy of the procedure on our citation data through several detailed case studies.

\subsection{Methods and theoretical guarantees}
\label{sec:main_method}

A typical local clustering method starts from one or multiple seed nodes and performs a random walk in the neighborhood of the seeds to gather other relevant nodes. In our setting, we first use keyword search to select a subset of citing papers, $\cI_t\subset \cI_c$, from an external topic of interest (see details in Section~\ref{sec:case_studies}). The seed nodes are constructed using citation information between the source papers $\cI_s$ and the topic papers in $\cI_t$, and the local clustering is performed on $\cI_s$ and their network $\bAs$. For clustering purpose, we consider two papers as related in content if a citation exists between them; the direction of this citation is less important if we think of it as a form of association. For this reason, we treat $\bAs$ as an undirected network in this section. That is, 



\begin{equation}\label{eq:undirect.adj}
A^s_{ij} =\begin{cases}
1\,,&\mbox{there is a citation between $i$ and $j$\,;}\\
0\,,&\mbox{otherwise}
\end{cases}
\end{equation}
for $i,j \in \cI_s = \{1, \ldots, 9338\}$. 

Next we present the details of the local clustering procedure and its theoretical properties under a network model with community structure. Standard order notations $\mathcal{O}$, $\Omega$,  $\mathcal{O_p}$ and $\Omega_p$ will be used throughout.

\subsubsection{Preliminaries and the DC-SBM}\label{sec:intro_SBM}


In order to analyze the behavior of local clustering, we adopt the popular DC-SBM \citep{KN11}, which captures both node heterogeneity and community structure, as the underlying network model. While such a model may not capture all the features of our citation network, the presence of node heterogeneity is reflected by the uneven distribution of citation counts, and it is plausible to assume the underlying communities correspond to different research topics. For convenience of notation, we will describe the DC-SBM and local clustering procedure using a general symmetric adjacency matrix $A$ and a general set of nodes $\cI$, with the understanding that they refer to $\bAs$ and $\cI_s$ in our data analysis. 

In the original SBM 
\citep{holland1983stochastic}, $N$ nodes are assigned to $K$ blocks or communities, and the probability of an edge between two nodes only depends on their community memberships. To abbreviate notations, write the set $\{1, \ldots, n\}$ as $[n]$ for any integer $n$. The set of nodes 
$\cI =[N]$ is partitioned into K blocks by the function $g: [N]\rightarrow [K]$. Let $n_k$ denote the size of block $k$, $\cI_k$ denote the set of nodes in block $k$ for $k \in[K]$. The proportion of members in block $k$  is $\tau_k = n_k/N$. We consider the case that the number of blocks $K$ is fixed, and $\tau_k$ is bounded below by a constant for all the $k \in [K]$. The probability of an edge between nodes $i$ and $j$ is 
$$A_{ij} \mid g \, \overset{\mbox{ind.}}{\sim} \, \mbox{Bernoulli}\left( B_{g(i)g(j)} \right),\, \forall i,j \in \cI,\, \, i\neq j\,,$$
where $B \in [0,1]^{K \times K}$ is the connectivity matrix. We adopt the common parametrization for $B$ as $B = \rho_N S$, where $S$ is a fixed $K \times K$ matrix, and $\rho_N$ is the average edge density satisfying $\rho_N \rightarrow 0$ at some rate as $N \rightarrow \infty$.
 
DC-SBM introduces node heterogeneity by adding a degree parameter $\theta_i$ for each node $i$, so that the probability of an edge between $i$ and $j$ becomes
 \begin{equation}\label{eq:dc_sbm}
 A_{ij} \mid g,\, \theta \, \overset{\mbox{ind.}}{\sim} \, \mbox{Bernoulli}\left( \theta_i \theta_j B_{g(i)g(j)} \right),\, \forall i,j \in \cI,\,\, i\neq j. 
 \end{equation}
Some constraint is needed on $\theta_i$ for identifiability, and we adopt the constraint
$\sum_{i \in \cI_k} \theta_i = n_k$ for all $k\in [K]$ following \citet{KN11}. 

 The degree of node $i$ is defined as $d_i = \sum_{j \in \cI} A_{ij}$. The population adjacency matrix is the conditional expectation of $A$, i.e.,
 $$\sA = \mathbb{E} [ A \mid g,\, \theta]\,. $$
It follows then the population node degrees are $\cd_i = \sum_{j \in \cI} \sA_{ij}$, and the expected degree is $\lambda_N = \frac{1}{N}\sum_{i \in \cI} \cd_i$.  Note that we also have $\lambda_N = N \rho_N$.

\subsubsection{Adjusted Personalized PageRank under the DC-SBM}\label{sec:aPPR}

Given an adjacency matrix $A$, define the diagonal matrix $D = \mbox{diag}(d_1, \ldots, d_N)$ and the graph transition matrix $P= D^{-1}A$. The \textit{personalized PageRank (PPR) vector} $p \in [0,1]^N$ is the stationary distribution of the process
$$ p^\top = \alpha \pi^\top + ( 1 - \alpha) p^\top P \,,$$
where  $\alpha \in (0,1]$ is the teleportation constant, and $\pi \in  [0,1]^N$ is a probability vector called the \textit{preference vector} encoding one or multiple seed nodes. For example, if there is one seed node $v_0=1$, $\pi = (1,0, \ldots, 0)^\top$. Under a network model with community structure such as SBM or DC-SBM, the goal is to recover all the nodes with the same community membership as $v_0$ by ranking the elements in the PPR vector $p$.


In our setting, we choose source papers that have high citation counts by a set of topic papers as the seed nodes.  For a source paper $j \in \cI_s$ and a set of topic papers $\cI_t$, its citation count is $ a_j = \sum_{i \in \cI_t} A_{ij}$, where $\bA$ is the citation network defined in Eq~\eqref{eq:adj}. The preference vector $\pi \in [0,1]^{9338}$ is calculated as 
\begin{equation}\label{eq:pi_vector}
\pi_k = \frac{a'_{k}}{\sum_{ j \in \cI_s}a'_{j}} \mbox{ where } a'_j = \begin{cases}
    a_j \,,&  a_j \geq t\,;\\
    0 \,,& a_j < t\,.
    \end{cases} 
\end{equation}
Here $t$ is a chosen threshold constant. We extend the setting of a single seed node in \cite{KUK17} and \cite{CZR20} to multiple seed nodes, but still make the assumption that they all belong to the same community. While it is unlikely that all papers cited by a specific topic come from the same community, the threshold $t$ helps us prune the vector $\pi$ and make the assumption more reasonable.  


Related to PPR, the \textit{adjusted personalized PageRank} (aPPR) vector is defined as
\begin{equation}\label{eq:appr_def}
 p^*_i = \frac{p_i}{d_i} \mbox{ for } i = 1,\,\ldots, N\,,   
\end{equation}
where $p_i$ is the $i$th entry in the PPR vector. \cite{CZR20}
showed that under the DC-SBM, adjusting by the degrees leads to a consistent ordering of the entries in $p^*$ so that  entries with the highest values belong to the target community. Formally, let $n$ be a community size cutoff. Then $n$ nodes with the largest $p^*_i$ values are selected as members in the target community, that is
\begin{equation}\label{eq:appr_select}
 \cC_{n} = \{ i \mid p^*_i \geq p^*_{(n)}\}\,,
\end{equation}
where $p^*_{(1)}, \ldots, p^*_{(N)}$ is the sorted list of  $p^*$ in a  non-increasing order.

Corollary 1  in \citet{CZR20} shows that with $v_0=1$ (assuming without loss of generality it belongs to block 1), provided we know the correct size cutoff $n=n_1$ (recall $n_1 = |\cI_1|$), then the aPPR clustering can recover all the nodes in block 1 with high probability, i.e., $\cC_{n_1} = \cI_1$. Since our setting makes use of multiple seed nodes, the follow proposition extends their result to better fit our situation.

\begin{proposition}\label{prop:aPPR_clustering}
Under the DC-SBM, given a set of seed nodes from the same block, say block 1, and a teleportation constant $\alpha$, assume that 
\begin{enumerate}[label=(c.\arabic*)]
    \item\label{prop.dcsbm.assp.3}  $\min_{u \in [N]} \theta_u \geq L_\theta$ and $\max_{u \in [N]} \theta_u \leq U_\theta$  where $L_\theta$ and $U_\theta$ are positive constants. 
    \item\label{prop.dcsbm.assp.4} For some sufficient large constant $c_1 >0$, $$\lambda_N > c_1 \left(\frac{1 - \alpha}{\Delta_\alpha} \right)^2 \log N\,. $$
\end{enumerate}
Then for sufficiently large $N$, with probability at least $1 - \mathcal{O}(N^{-5})$, we have $\cC_{n_1} = \cI_1$. 
\end{proposition}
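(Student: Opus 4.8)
The plan is to follow the proof of Corollary~1 in \citet{CZR20} and isolate the one place where the multiple-seed setting differs from their single-seed setting --- the structure of the preference vector $\pi$ --- and show that this difference is harmless. First I would compute the population aPPR. Let $\bar p$ solve $\bar p^\top = \alpha\pi^\top + (1-\alpha)\bar p^\top\mathscr{P}$ with $\mathscr{P} = \mbox{diag}(\cd_1,\dots,\cd_N)^{-1}\sA$, and put $\bar p^*_j = \bar p_j/\cd_j$. Using $\sA_{ij}=\theta_i\theta_j\rho_N S_{g(i)g(j)}$ and the constraint $\sum_{u\in\cI_k}\theta_u=n_k$, one gets $\cd_i=\theta_i\lambda_N\sigma_{g(i)}$ with $\sigma_r:=\sum_k S_{rk}\tau_k$, and row $i$ of $\mathscr{P}$ depends on $i$ only through $g(i)$. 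Hence $\bar p_j=\alpha\pi_j+(1-\alpha)\theta_j h_{g(j)}$ for a $K$-vector $h$ solving a fixed linear system (the constraint $\sum_{i\in\cI_r}\theta_i=n_r$ closes this system and makes $h$ a function of $\pi$ through the block totals alone), so
\[
\bar p^*_j=\frac{\alpha\pi_j}{\theta_j\lambda_N\sigma_{g(j)}}+\frac{(1-\alpha)h_{g(j)}}{\lambda_N\sigma_{g(j)}}.
\]

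The crucial observation is that $h$, hence the block-constant second term above, depends on $\pi$ only through the block totals $\bar\pi_r=\sum_{i\in\cI_r}\pi_i$; since $\pi$ is supported on block~1 we have $\bar\pi=(1,0,\dots,0)$ \emph{no matter how $\pi$'s mass is distributed within block~1}, exactly as in the single-seed case of \citet{CZR20}. Therefore the population separation between block~1 and the rest is governed by the very same quantity $\Delta_\alpha$: one has $\bar p^*_j=(1-\alpha)h_1/(\lambda_N\sigma_1)$ for $j\in\cI_1\setminus\mathrm{supp}(\pi)$, this value plus a strictly positive term for $j\in\mathrm{supp}(\pi)$ (so $\bar p^*_j\ge(1-\alpha)h_1/(\lambda_N\sigma_1)$ for all $j\in\cI_1$), and $\bar p^*_j\le\max_{r\ne1}(1-\alpha)h_r/(\lambda_N\sigma_r)$ for $j\notin\cI_1$, with a gap of order $\Delta_\alpha/\lambda_N$ between the two groups (positive under the same structural assumption on $S$ and $\alpha$ used by \citet{CZR20}).

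The concentration step is then unchanged in form. Under~\ref{prop.dcsbm.assp.3} the population degrees are of order $\lambda_N$, so under~\ref{prop.dcsbm.assp.4} a Chernoff bound controls $\max_i|d_i-\cd_i|$ and a matrix-concentration bound controls $\|A-\sA\|$, each with probability $1-\mathcal{O}(N^{-5})$; feeding these into the resolvent identity for PPR, $\bar p^\top=\alpha\pi^\top(I-(1-\alpha)\mathscr{P})^{-1}$ (and likewise for $p$), and using that $I-(1-\alpha)P$ and $I-(1-\alpha)\mathscr{P}$ are invertible with the relevant norm of their inverses controlled through $\alpha$, gives a uniform bound $\max_i|p^*_i-\bar p^*_i|\le\epsilon_N$ that, under~\ref{prop.dcsbm.assp.4}, is $o(\Delta_\alpha/\lambda_N)$, i.e.\ below half the population gap. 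The multiple-seed change only helps here: the perturbation in $\pi$ is zero (the seed set, hence $\pi$, is fixed given the data) and the extra nonnegative term $\alpha\pi_j/\cd_j$ on $\mathrm{supp}(\pi)$ only widens the separation. On this high-probability event every $i\in\cI_1$ has $p^*_i\ge(1-\alpha)h_1/(\lambda_N\sigma_1)-\epsilon_N$ while every $i\notin\cI_1$ has $p^*_i\le\max_{r\ne1}(1-\alpha)h_r/(\lambda_N\sigma_r)+\epsilon_N$; since $\epsilon_N$ is below half the gap, the former strictly exceeds the latter, so the $n_1$ largest coordinates of $p^*$ are exactly $\cI_1$, i.e.\ $\cC_{n_1}=\cI_1$.

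I expect the main obstacle to be the concentration step at the correct scale: showing $\max_i|p^*_i-\bar p^*_i|=o(\Delta_\alpha/\lambda_N)$ under \emph{precisely} condition~\ref{prop.dcsbm.assp.4}, which needs care with the degree normalization (dividing by $d_i$ amplifies errors when $d_i$ is small) and with the $\alpha$-dependence of the PPR resolvent bound. But \citet{CZR20} already carry out exactly this calculation for a single seed, and since the reduction leaves $\bar\pi$ --- hence $\Delta_\alpha$ and all their constants --- untouched, the remaining work is largely bookkeeping: verifying their bounds are uniform over all $N$ coordinates and handling the edge case $\mathrm{supp}(\pi)=\cI_1$, where the block-1 lower bound is still $(1-\alpha)h_1/(\lambda_N\sigma_1)$. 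The genuinely new content is the one-line observation above that $\bar\pi=(1,0,\dots,0)$ regardless of how $\pi$'s mass is spread within block~1.
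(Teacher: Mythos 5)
Your proposal is correct and follows essentially the same route as the paper: both arguments reduce the multiple-seed case to Corollary~1 of \citet{CZR20} by showing that, when all seeds lie in block~1, the block-level population aPPR separation $\Delta_\alpha$ is exactly the one from the single-seed case (you also implicitly verify their bounded degree-ratio condition via \ref{prop.dcsbm.assp.3}, as the paper does), and then the concentration step is inherited unchanged. The only cosmetic difference is how the population invariance is obtained --- the paper invokes linearity of the population aPPR in the preference vector together with a max inequality, while you derive the same fact by explicitly solving the population PPR equations and observing that the block-constant part depends on $\pi$ only through its block totals.
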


Here $\Delta_\alpha$ is an increasing function of $\alpha$. The exact form of $\Delta_\alpha$ and the proof of Proposition~\ref{prop:aPPR_clustering} are deferred to Supplementary Materials~\ref{app.sec:prop_1}.

\subsubsection{Conductance}\label{sec:conductance}

Given the result in Proposition~\ref{prop:aPPR_clustering}, it remains to choose the correct size $n$ for $\cC_n$ to fully recover the target community (block 1). To achieve this, an objective function is needed to evaluate the quality of the clusters found. Conductance is a popular objective function to be optimized either globally or locally \citep{VM16, YL15} and often used in conjunction with a local clustering algorithm like PPR \citep{andersen2006communities, wu2012learning}. It tends to favor small clusters weakly connected to the rest of the graph, and one would expect such an assortative structure in citation networks with communities defined by research topics.

For a set of nodes $\cI' \subseteq \cI$, we define its conductance $\phi$ as
\begin{equation}\label{eq:conductance}
  \phi(\cI')  = \frac{\sum_{i \in \cI' }\sum_{j \notin \cI' } A_{ij} }{\sum_{i \in \cI' } A_{i\cdot}}\,,   
\end{equation}
where $A_{i\cdot} = \sum_{j \in \cI} A_{ij}$. The numerator is known as the cut of the graph partitioned by $\cI'$ and its complement $(\cI')^c$, while the denominator represents the volume $vol(\cI', \cI)$. We note that an alternative form of conductance has $\min\{ vol(\cI', \cI), vol((\cI')^c, \cI) \}$ in the denominator. The two forms are equivalent when the size of $\cI'$ is smaller than $(\cI')^c$, a condition we expect to hold for $\cC_{n_1}$ and its neighborhood, given $n_1$ is small compared with $N$. Hence we choose the form in Eq~\eqref{eq:conductance} for easier bookkeeping.


Proposition~\ref{prop:aPPR_clustering} demonstrates that the aPPR vector sorts the nodes in terms their relevance to the target community with high probability. The sorted list of nodes leads to a sequence of clusters $\{\cC_n\}_{n=1}^N$ and their conductance values $\{\phi(\cC_n)\}_{n=1}^N$. Our next theorem establishes that the correct choice of $n$ occurs at a local optimum along this sequence, justifying the practice of choosing the community size cutoff by inspecting the conductance plot.
 
\begin{theorem}\label{thm:min}
Under the DC-SBM, suppose that \ref{prop.dcsbm.assp.3} and \ref{prop.dcsbm.assp.4} hold. Recall $S$ is the fixed component in the connectivity matrix $B$, and $\tau$ is the set of block proportions (Section~\ref{sec:intro_SBM}).
Further assume that
\begin{enumerate}[label=(c.3)]
    \item \label{prop.sbm.assp.2} given $S$ and $\tau$, 
    \begin{equation}\label{eq:assp}
     S_{11}\min_{i > 1}\tS_{i\cdot} >  2  \max_{j > 1} S_{1j} \tS_{1\cdot} \mbox{ where } \tS_{i\cdot} = \sum_{j} S_{ij} \tau_j\,.
    \end{equation}
\end{enumerate}
Then for sufficiently large $N$, there exits $n'$ with $n' - n_1 = \Omega(N)$ such that
\begin{equation}\label{eq:local_opt}
    \phi(\cC_{n_1}) -  \phi(\cC_n) \leq - \frac{1}{N} \Omega_P( | n - n_1|)
\end{equation}
uniformly for $n \in [n']$.
\end{theorem}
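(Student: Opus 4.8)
\emph{Proof plan.} The plan is to run the whole argument on the high-probability event $\cE$ furnished by Proposition~\ref{prop:aPPR_clustering}. On $\cE$ the aPPR ranking is faithful to the block structure in the sense that the $p^*$-values of $\cI_1$ are strictly separated from those of $\cI\setminus\cI_1$; hence $\cC_n\subseteq\cI_1$ for every $n\le n_1$, $\cC_{n_1}=\cI_1$, and $\cI_1\subseteq\cC_n$ for every $n\ge n_1$. (This is marginally stronger than the conclusion $\cC_{n_1}=\cI_1$ literally recorded in Proposition~\ref{prop:aPPR_clustering}, but it is exactly what its proof delivers.) It then suffices to compare $\phi(\cC_n)$ with $\phi(\cC_{n_1})=\phi(\cI_1)$. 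Write $e(U,V)=\sum_{i\in U}\sum_{j\in V}A_{ij}$ for cross counts, $e_{\mathrm{in}}(U)$ for the number of internal edges of $U$, $\mathrm{cut}(U)=e(U,\cI\setminus U)$, and $\mathrm{vol}(U)=\sum_{i\in U}A_{i\cdot}=2e_{\mathrm{in}}(U)+\mathrm{cut}(U)$. For $n>n_1$ set $R=\cC_n\setminus\cI_1\subseteq\cI_1^c$ with $m=|R|=n-n_1$; expanding $\mathrm{cut}(\cI_1\cup R)$ and $\mathrm{vol}(\cI_1\cup R)$ yields the exact identity
\[
\phi(\cC_n)-\phi(\cI_1)\;=\;\frac{2\,e_{\mathrm{in}}(\cI_1)\,\mathrm{vol}(R)\;-\;2\,\mathrm{vol}(\cI_1)\bigl(e(\cI_1,R)+e_{\mathrm{in}}(R)\bigr)}{\mathrm{vol}(\cC_n)\,\mathrm{vol}(\cI_1)}.
\]
For $n<n_1$, with $Q=\cI_1\setminus\cC_n$, a parallel expansion shows the sign of $\phi(\cC_n)-\phi(\cI_1)$ is governed by the severed block-1 edges $e(Q,\cI_1\setminus Q)$, and that $\phi(\cI_1\setminus Q)>\phi(\cI_1)$ whenever $\mathrm{vol}(\cI_1\setminus Q)>0$.

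\emph{Step 2 (replace random quantities by DC-SBM means, uniformly).} Under \ref{prop.dcsbm.assp.3}--\ref{prop.dcsbm.assp.4} one has $\lambda_N\gg\log N$, so Chernoff/Bernstein with a union bound over the $N$ nodes give $d_i=\cd_i(1+o(1))$ for all $i$ and concentration of every node's degree into each fixed block; consequently $\mathrm{vol}(\cI_1)$, $e_{\mathrm{in}}(\cI_1)$, $\mathrm{vol}(R)$, $e(\cI_1,R)$ are all within a $(1+o(1))$ factor of their $\sA$-counterparts, uniformly in $R$. The delicate term is $e_{\mathrm{in}}(R)$: for each \emph{fixed} $R$ it is a sum of independent Bernoullis, so Bernstein followed by a union bound over all $\binom{N}{m}\le N^m$ sets of size $m$ gives, on a further high-probability event, $e_{\mathrm{in}}(R)\le C(\rho_N m^2+m\log N)$ simultaneously for every $R$ with $|R|=m$ (the realized $R$ depends on $A$, but the union bound makes this harmless). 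Substituting the means $\mathrm{vol}(\cI_1)\asymp\rho_N n_1 N\tS_{1\cdot}$, $e_{\mathrm{in}}(\cI_1)\asymp\rho_N n_1^2 S_{11}$, $\mathrm{vol}(R)\asymp\rho_N N\sum_{k>1}\Theta_{R_k}\tS_{k\cdot}$, $e(\cI_1,R)\asymp\rho_N n_1\sum_{k>1}S_{1k}\Theta_{R_k}$, where $R_k=R\cap\cI_k$, $\Theta_{R_k}=\sum_{i\in R_k}\theta_i$, and $\Theta_R:=\sum_k\Theta_{R_k}\asymp m$, the numerator of the Step~1 identity becomes
\[
\rho_N^2\,n_1\,N\Bigl[\,n_1\sum_{k>1}\Theta_{R_k}\bigl(S_{11}\tS_{k\cdot}-2\tS_{1\cdot}S_{1k}\bigr)\;-\;\tS_{1\cdot}S_{\max}\,\Theta_R^2\,\Bigr]\;-\;o\bigl(\rho_N^2 N^3 m\bigr),
\]
with $S_{\max}=\max_{k,l}S_{kl}$, the $o(\cdot)$ absorbing both the $(1+o(1))$ fluctuations and the $m\log N$ part of $e_{\mathrm{in}}(R)$; the latter is $o(\rho_N^2 N^3 m)$ precisely because $\log N=o(\lambda_N)$ by \ref{prop.dcsbm.assp.4}.

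\emph{Step 3 (use \ref{prop.sbm.assp.2} and pick $n'$).} By \ref{prop.sbm.assp.2}, $\delta_0:=S_{11}\min_{k>1}\tS_{k\cdot}-2\max_{k>1}S_{1k}\tS_{1\cdot}$ is a positive constant, and $S_{11}\tS_{k\cdot}-2\tS_{1\cdot}S_{1k}\ge\delta_0$ for each $k>1$, so the bracket above is at least $n_1\delta_0\Theta_R-\tS_{1\cdot}S_{\max}\Theta_R^2\ge\tfrac12 n_1\delta_0\Theta_R$ as soon as $\Theta_R\le n_1\delta_0/(2\tS_{1\cdot}S_{\max})$, i.e.\ for $n-n_1$ up to some $n'-n_1=\Omega(N)$ (recall $n_1=N\tau_1$ with $\tau_1$ bounded below and $\theta_i\in[L_\theta,U_\theta]$). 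On this range $\mathrm{vol}(\cI_1)\asymp\mathrm{vol}(\cC_n)\asymp\rho_N N^2$, so $\phi(\cC_n)-\phi(\cI_1)\gtrsim \rho_N^2 N^3 m/(\rho_N^2 N^4)\asymp(n-n_1)/N$. For $n<n_1$, the analogous mean computation gives $\phi(\cI_1\setminus Q)-\phi(\cI_1)\asymp S_{11}\Theta_Q/(N\tS_{1\cdot})\gtrsim(n_1-n)/N$, now with no upper restriction on $n_1-n$, since trimming a single assortative block only increases its conductance. Combining the two regimes produces a constant $c>0$ with $\phi(\cC_{n_1})-\phi(\cC_n)\le-\tfrac{c}{N}|n-n_1|$ for all $n\in[n']$ on $\cE$, which is \eqref{eq:local_opt}.

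\emph{Main obstacle.} The crux is the uniform control of $e_{\mathrm{in}}(R)$ in Step~2: a per-set Bernstein bound must survive a union bound over exponentially many candidate sets, and one must carefully separate the resulting additive error $m\log N$ (absorbed by $\lambda_N\gg\log N$, i.e.\ \ref{prop.dcsbm.assp.4}) from the genuine multiplicative term $\rho_N m^2$ --- it is exactly this $\rho_N m^2$ competing against the leading $\rho_N^2 N^3 m/\mathrm{vol}(\cI_1)$ that forbids taking $n'$ as large as $N$ and pins it at $n'-n_1=\Omega(N)$. A lesser point is verifying that the event of Proposition~\ref{prop:aPPR_clustering} genuinely gives the faithful ordering $\cC_n\subseteq\cI_1\subseteq\cC_{n'}$, not merely $\cC_{n_1}=\cI_1$; this should be read off from the $p^*$-separation between block $1$ and its complement established in the proof of that proposition.
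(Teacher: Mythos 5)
Your proposal is correct in substance, but it takes a genuinely different route from the paper. The paper first reduces the conductance to the functional $F(M)=M_{11}/M_{1\cdot}$ evaluated at the block-count matrix $O(z)/\mu_N$, proves the population statement via a degree-corrected confusion matrix $\tR$ (Lemmas~\ref{prop.sbm.differ} and~\ref{prop.dcsbm.differ}: $F(\tR S\tR^\top(z))-F(\tR S\tR^\top(z_0))\le -\Omega(|n-n_1|)/N$ under \ref{prop.sbm.assp.2}, with the same $\varepsilon$-cap on $\max_{j>1}R_{1j}$ that pins $n'-n_1=\Omega(N)$, exactly the role your cap on $\Theta_R$ plays), and then transfers to the sample by a Taylor expansion of $F$ in the style of Bickel et al.\ combined with Bernstein bounds on $X(z)=O(z)/\mu_N-\tR S\tR^\top(z)$ (Lemma~\ref{lemma:inequalities}), union-bounded over the $N$ labelings in $\zeta$ and over labelings at Hamming distance $m$. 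You instead work with the exact algebraic identity for $\phi(\cC_n)-\phi(\cI_1)$ in terms of cuts, volumes and $e_{\mathrm{in}}$, get uniformity from per-node degree concentration plus a Bernstein-and-$\binom{N}{m}$ union bound for $e_{\mathrm{in}}(R)$, and then invoke \ref{prop.sbm.assp.2} on the means; the extra $m\log N$ slack your union bound costs is absorbed by \ref{prop.dcsbm.assp.4}, and the $\rho_N m^2$ term is what limits $n'$, matching the paper's Case~3 analysis. Your combinatorial union bound is arguably more robust than the paper's count $|\zeta|=N$, since $\zeta$ is generated by the (data-dependent) aPPR ordering, whereas bounding over all size-$m$ subsets sidesteps that dependence.

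Two caveats, neither fatal. First, the parenthetical claim that $\phi(\cI_1\setminus Q)>\phi(\cI_1)$ whenever $\mathrm{vol}(\cI_1\setminus Q)>0$ is false as a deterministic statement (a block-1 node with mostly external edges can lower the conductance when removed); it is not load-bearing, because for $n<n_1$ your mean computation---run with the same uniform-over-subsets concentration applied to $Q\subseteq\cI_1$, which you should state explicitly---gives the required $(n_1-n)/N$ gap, paralleling the paper's Case~2 where $\tau_1-\tR_{11}\ge L_\theta|n-n_1|/N$. Second, the ``$(1+o(1))$'' degree factors require $\lambda_N/\log N\to\infty$, whereas \ref{prop.dcsbm.assp.4} only guarantees $\lambda_N\ge c_1(\cdot)\log N$; the relative errors are then small constants tunable through the ``sufficiently large'' $c_1$, which still closes the argument because \ref{prop.sbm.assp.2} supplies a constant margin---but this should be said rather than written as $o(1)$. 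Finally, your ``lesser point'' about Proposition~\ref{prop:aPPR_clustering} is a non-issue: the sets $\cC_n$ are nested in $n$ by construction, so $\cC_{n_1}=\cI_1$ already yields $\cC_n\subseteq\cI_1\subseteq\cC_{n'}$, which is precisely the property \eqref{eq:prop_1_assump} the paper works under.
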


The proof of Theorem~\ref{thm:min} has two major parts. The first part in Supplementary Materials~\ref{app.sec:prop_SBM} analyzes the optimality properties of $\phi$ at the population level with the help of the result in Proposition~\ref{prop:aPPR_clustering}. The second part in Supplementary Materials~\ref{app.sec:thm_min} incorporates noise from the adjacency matrix and proves the local optimality result in the theorem. We make two remarks as follows.
\begin{enumerate}[label=\alph*)]
   \item 
   The bound in Eq~\eqref{eq:local_opt} and the lower bound on $n'-n_1$ guarantees the optimum at $n_1$ is well separated from its neighborhood and this neighborhood is wide enough to be observed in a conductance plot. 
   
   
   \item
   As can be seen from the proofs in Supplementary Materials~\ref{app.sec:prop_SBM}, the larger the gap between $S_{11}\min_{i > 1}\tS_{i\cdot}$ and $2  \max_{j > 1} S_{1j} \tS_{1\cdot}$, the more peaked and easier to spot the local optimum is. In an assortative graph with $S_{ii} > \max_{j\neq i}S_{ij}$, a smaller $\tau_1$ will lead to a smaller $\tS_{1\cdot}$ in Eq~\eqref{eq:assp}, making the inequaltiy more easily satisfied. Thus using conductance as a objective function is well suited to the situation where $n_1$ is a small fraction of $N$.
   
\end{enumerate}

\spacingset{1}  
\begin{algorithm}[htb!]
\caption{Local clustering} \label{alg:local_clustering}
\SetKw{KwBy}{by}
\SetKwInOut{Input}{Input}\SetKwInOut{Output}{Output}

\SetAlgoLined

\Input{adjacency matrix $A$, preference vector $\pi$, and teleportation constant $\alpha$.
}

Compute the aPPR vector $p^*$ in Eq~\eqref{eq:appr_def} based on $(A, \pi, \alpha)$.

Construct the sequence of clusters $\{\cC_n\}^N_{n = 1}$ according to Eq~\eqref{eq:appr_select} and $p^*$. 

Calculate conductance values  $\{\phi(\cC_n)\}^N_{n = 1}$ by Eq~\eqref{eq:conductance}.

Find the first local minimum $\phi(\cC_{n^*})$ in $\{\phi(\cC_n)\}^N_{n = 1}$.

\Output{local cluster  $\cC_{n^*}$.}
\end{algorithm}

\subsection{Simulations}\label{sec:simulation}

 In this section, we examine the performance of our local clustering procedure as summarized in Algorithm~\ref{alg:local_clustering} on data simulated from the DC-SBM. We focus on the case where the target community (block 1) is small compared with the whole graph as we consider it to be more relevant to our real data structure.

Consider a DC-SBM with $K = 2$, $n_1 = 50$, $n_2 = 3000$, and \begin{equation*}
B = \left(\begin{array}{cc}  0.05  &   0.01  \\  0.01  & 0.05 \end{array}\right). \,
\end{equation*}
To simulate the degree parameters, let $\eta_i \sim \mbox{Uniform} (1, 10)$ for $i = 1, \ldots, n_1 + n_2$, and
\begin{equation*}
   \theta_i =  \begin{cases}
   n_1 \eta_i / \sum^{n_1}_{j = 1} \eta_j \,,& i = 1, \ldots, n_1 \,;\\
   n_2 \eta_i / \sum^{n_1 + n_2}_{j = n_1 + 1} \eta_j \,,& i = n_1 +1 , \ldots, n_1 + n_2  \,
    \end{cases}
\end{equation*}
so that they satisfy the identifiability constraint on $\theta_i$. 

\spacingset{1}
\begin{table}[!ht]
\centering
\footnotesize
\begin{tabular}{l l|llll}
\hline\hline
&  & \multicolumn{2}{c}{Precision}  & \multicolumn{2}{c}{Recall}  \\ \hline
 $\alpha$ & \# of seeds & Mean(\%) & SD(\%) & Mean(\%) & SD(\%) \\ \hline

 \multirow{4}{*}{0.15}  & 1   & 97.00 & 0.15 & 98.77 & 1.54 \\
                        & 5   & 97.49 & 0.22 & 99.23 & 0.94\\
                        & 10  & 97.93 & 0.22 & 99.42 & 0.63 \\
                        & 15  & 98.30 & 0.30 & 99.50 & 0.46 \\ 
                        & 20  & 98.60 & 0.40 & 99.56 & 0.37 \\\hline \hline
\end{tabular}
\caption{The means and standard deviations of precision and recall rates for local clustering under $\alpha=0.15$ and different numbers of seeds. Each setting is simulated 50 times.}\label{table:sim_results_1}
\end{table}

We investigate the effect of the teleportation constant $\alpha$ and the number of seed nodes, denoted $m$, on the accuracy of the local clustering results. 
When using $m$ seed nodes, the corresponding preference vector have $\pi_i = 1/m$ for $i\in[m]$, and $\pi_i = 0$ for the other entries. To determine the community size, we search for the first obvious local minimum in the conductance plot, and we find these optimal points usually occur before $n<55$. Supplementary Figure~\ref{fig:conduct_exp} provides examples of the conductance plots for $\alpha=0.15$ and different $m$ values; the cases for other $\alpha$ are similar. 


\spacingset{1}

\begin{figure}
 \centering
     \begin{subfigure}[b]{0.45\textwidth}
    \centering
    \includegraphics[width=\textwidth]{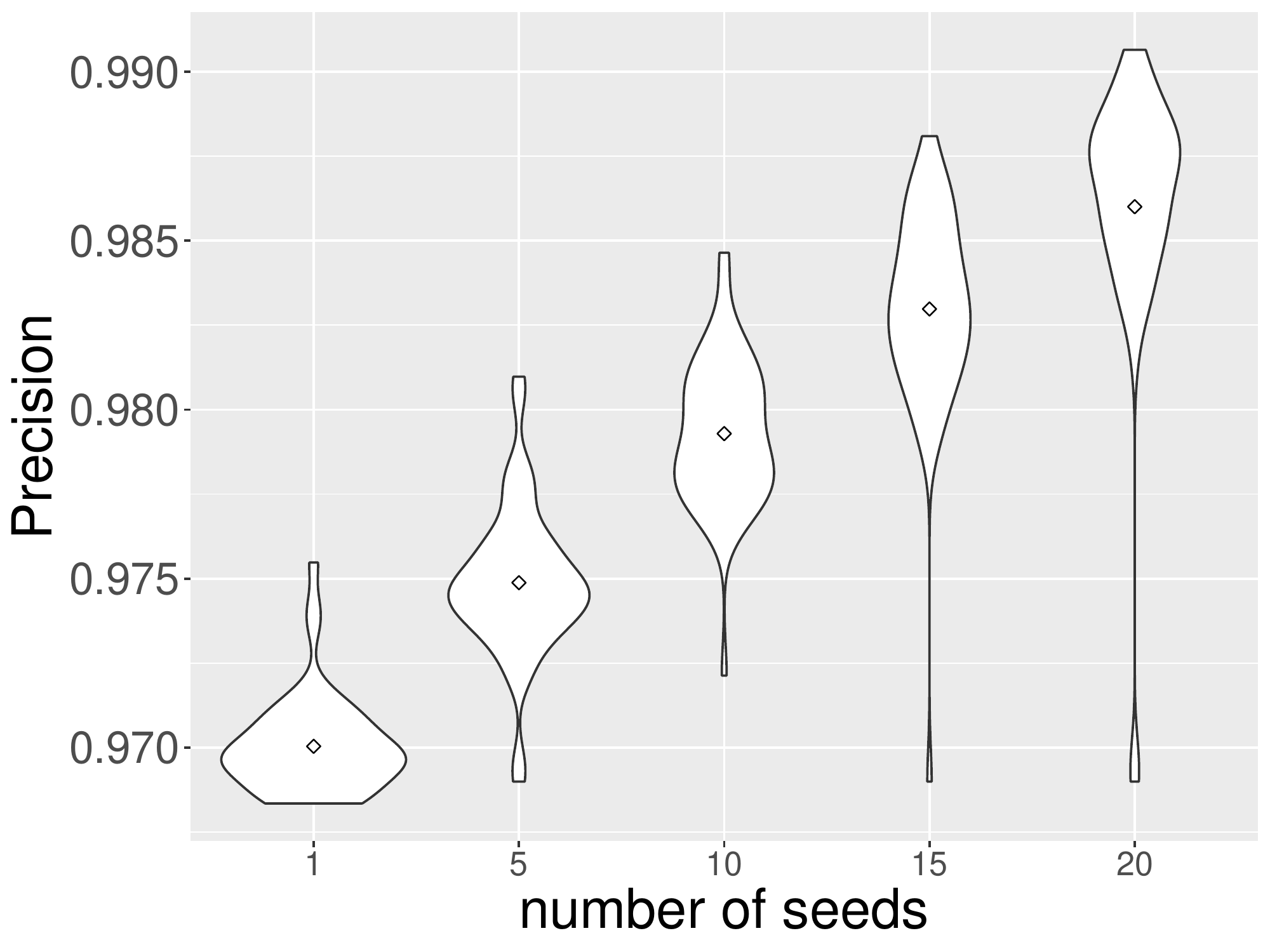}
    \caption{}
     \end{subfigure}
     \begin{subfigure}[b]{0.45\textwidth}
    \centering
    \includegraphics[width=\textwidth]{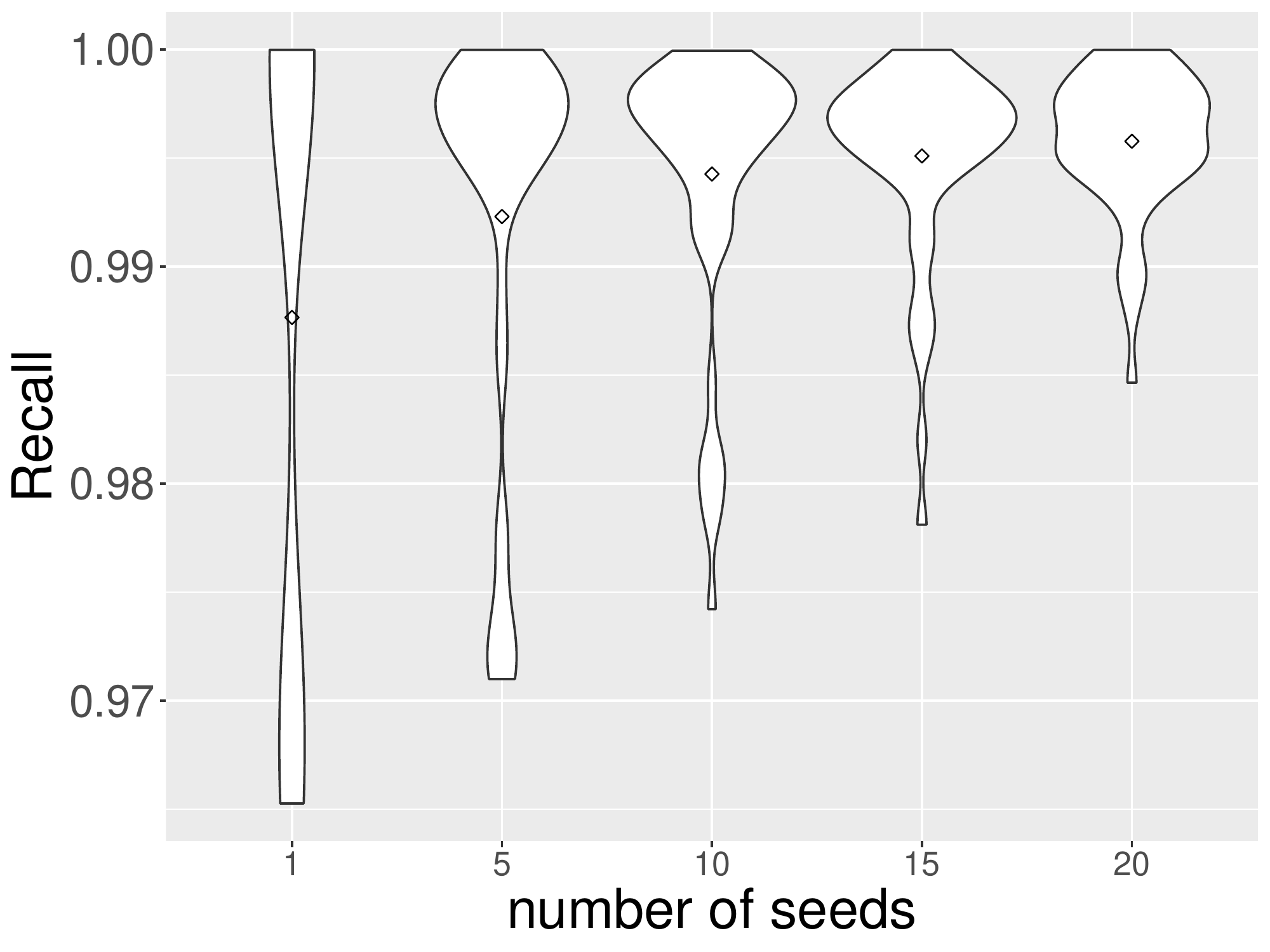}
    \caption{}
     \end{subfigure}
  \caption{The violin plots of (a) precision and (b) recall for local clustering with $\alpha = 0.15$. }\label{fig:violin_2}
\end{figure}

Table~\ref{table:sim_results_1} shows the average precision and recall rates and their standard deviations for finding members in block 1 with $\alpha = 0.15$ and five seed counts (1, 5, 10, 15, 20); each setting is repeated for 50 simulations. We can see that the precision increases as the number of seeds increases, since more seeds will provide more initial information for the clustering. More seeds also help to stabilize the variance of recall and increase the mean recall by a smaller margin. On the other hand, the influence of $\alpha$ is rather minimal. The results from different $\alpha$ values (0.05, 0.25) are  presented in Supplementary Table~\ref{table:sim_results_2}. In Figure~\ref{fig:violin_2}, we further illustrate the distributions of these precision and recall values for the case $\alpha=0.15$. For all the case studies from the citation data in the next section, we set $\alpha = 0.15$ as in \citet{CZR20}.


For the sake of completeness, in Supplementary Materials~\ref{app.sec:more_simulation}, we compare the local clustering procedure with commonly used global clustering techniques including spectral clustering and SCORE \citep{J15} for different values of $n_1$. As expected, local clustering is better suited to the situation with smaller $n_1$.

\subsection{Case studies from the citation network} \label{sec:case_studies}

Next we apply local clustering to our citation data and use the procedure to find the most relevant statistical research areas for given external topics. We choose three external topics (single-cell transcriptomics, labor economics and flu) of high general interests spanning biology, economics and epidemiology, and discuss the results in detail. More examples of topics and their clustering results can be found in Supplementary Materials~\ref{sec:more_cases}.

Before applying Algorithm~\ref{alg:local_clustering}, it remains to describe the selection of topic papers $\cI_t$ and the construction of the preference vector $\pi$. (Recall that the adjacency matrix used here is $\bAs$ as described in Eq~\eqref{eq:undirect.adj}.) For each external topic, papers in $\cI_t$ are chosen by keyword searches among the citing papers. More concretely, for the topics of single-cell transcriptomics and labor economics, we find citing papers that contain the relevant keywords \footnote{``Single-cell" (or ``single cell") and ``RNA-seq" for the topic of single-cell; ``labor" for the topic of labor economics.} in their abstracts. For a more accurate search result, we further restrict the labor economics papers to the category SOC using the labels in Section~\ref{sec:diversity}. The single-cell papers can come from a more diverse set of categories, and as shown in Supplementary Figure~\ref{fig:pie_single_cell}, most of our selected papers are from BIO. For the topic of flu, we note that many papers may use flu datasets as examples of their analytic methods instead of focusing on the topic itself. To select papers with a sharper focus on the topic, we search for papers with ``flu'' or ``influenza'' in their titles instead of abstracts. The proportions of category labels for the flu papers are illustrated in Supplementary Figure~\ref{fig:pie_flu}, which indicates most of them are from BIO. Having constructed $\cI_t$, the seed nodes in $\pi$ are chosen from the source papers with high citation counts by $\cI_t$. More details on the choice of $t$ in Eq~\eqref{eq:pi_vector} for different topics can be found in
Supplementary Materials~\ref{app.sec:case_study}.

Supplementary Figure~\ref{fig:conductance_topic} contains the conductance plot for each topic and our choices of the local minimum. The size of the target community found for each topic is listed in Table~\ref{table:network_topic}. We can see that these subnetworks indeed have significantly denser connections (and in some cases, higher clustering coefficients) than the whole network. The subnetworks and the word clouds generated from the keywords of the subnetwork papers can be found in Figure~\ref{fig:case_1}. We discuss these in more details below, interpreting the results with our understanding of the topics. 

\spacingset{1}
\begin{table}[!ht]
\centering
\begin{tabular}{c|r|c | c}
\hline\hline
Topic & Size & Graph density  & Average clustering coefficient \\ \hline
single-cell & 79 & 0.031  & 0.608 \\ 
economic labor& 108 & 0.039   & 0.402    \\ 
flu & 30 & 0.73 & 0.232 \\
all source papers & $9{,}338$  & 0.001  & 0.252     \\ \hline
\end{tabular}
\caption{Summary statistics for the subnetworks in Figure~\ref{fig:case_1} compared with the global graph $\bAs$. }\label{table:network_topic}
\end{table}

\noindent\textbf{Single-cell transcriptomics}

Rapid advances in single-cell sequencing technologies in the past decade have enabled researchers to profile different aspects of an individual cell, in particular its transcriptome. After appropriate preprocessing,  a single-cell transcriptomic data usually takes the form of a large, sparse matrix, with tens of thousands of rows representing genes and columns representing cells. The sparse, noisy and heterogeneous nature of such data has proved a fertile ground for the development of statistical and computational methods (see e.g. \cite{Kh21} for a review). Inspecting the subnetwork and word cloud in Figure~\ref{fig:case_single_cell}, perhaps unsurprisingly, a significant fraction of the papers selected are concerned with multiple testing and connected to the hub node 79 \citep{benjamini1995controlling}. As an example, multiple testing is routinely performed in the analysis of single-cell RNA-seq (scRNA-seq) data for identifying differentially expressed genes, which involves applying a statistical test to a large number of genes to determine if their expression levels are significantly different between two sets of cells. 
The word cloud also suggests clustering as another main keyword; in the subnetwork, clustering is a topic shared by the set of papers tightly knit around node 35 \citep{sugar2003finding} and 78 \citep{tibshirani2001estimating}. In the analysis pipeline of scRNA-seq data, clustering is applied to a dimension-reduced scRNA-seq matrix to identify distinct subpopulations of cells, which can correspond to different cell types or states. The related feature selection and model selection problems are highly relevant in this context, as they help researchers determine genes (features) that distinguish these subpopulations and the total number of subpopulations observed.

\noindent\textbf{Labor economics} 

Labor economics aims to understand the functioning and dynamics of the markets for wage labor. Many fundamental questions in this subject---How does education affect income? How does healthcare affect income?---are of causal nature. Economists and governments would like to design policies that might achieve certain economic and social welfare goals based on causal analysis.  Randomized controlled trials (RCT) are usually not available for Labor Economics problems.  Therefore, it is not surprising to see that an overwhelming majority of the statistics papers selected in the subnetwork and word cloud in Figure~\ref{fig:case_labor} are in the realm of causal inference. Concretely, in the the word cloud, the frequently appearing keywords (minus ``test") are all technical terms in causal inference---``propensity score", ``instrumental variable", ``structure model", ``matched sampling", ``treatment effect", ``matching", and ``observational study". Notably, the node 78 \citep{angrist1996identification}, a hub in the subnetwork, links the structural equations framework in econometrics and the potential outcomes framework in statistics.  The paper  provides conditions for a causal interpretation of the instrumental variable (IV) estimand, and quantifies the bias of violations of the critical assumptions.
Moreover, many cited statistics papers (node 16 \citep{ganong2018permutation} and node 18 \citep{li2018balancing}) are rather recent, and they also appear in the subnetwork. This coincides with the recent surge of the study of causal inference in the statistical community in the last few years and offers some evidence that the new developments quickly penetrate into other research fields.

\noindent\textbf{Flu} 

The global pandemic of Covid-19 has further ignited wide research interests in the modeling and prediction of the spread of an epidemic. We choose flu as an example of epidemics due to its longer history of study and frequent appearance in the literature of epidemiology. (The results from using Covid-19 as the topic are presented in Supplementary Materials~\ref{sec:more_cases}.) As expected, many of the keywords in Figure~\ref{fig:case_flu} are related to stochastic processes and state-space modeling. The word MCMC appears the most often being a commonly used technique for parameter estimation in these epidemic models. Looking more closely at the subnetwork, many of the papers focus on refining the susceptible-infectious-recovered (SIR) model for infectious diseases including flu and SARS. For the two hub nodes 28 \citep{britton1998estimation} and 3 \citep{dukic2012tracking}, the former is concerned with the parameter estimation problem for different types of observed data, while the latter extends the SIR model by incorporating incubation stage and time dynamics to track the spread of flu.


\spacingset{1}
\begin{figure}
    \begin{subfigure}[b]{1\textwidth}
    \centering
    \vspace{-0.2cm}
     \scalebox{0.35}{\includegraphics{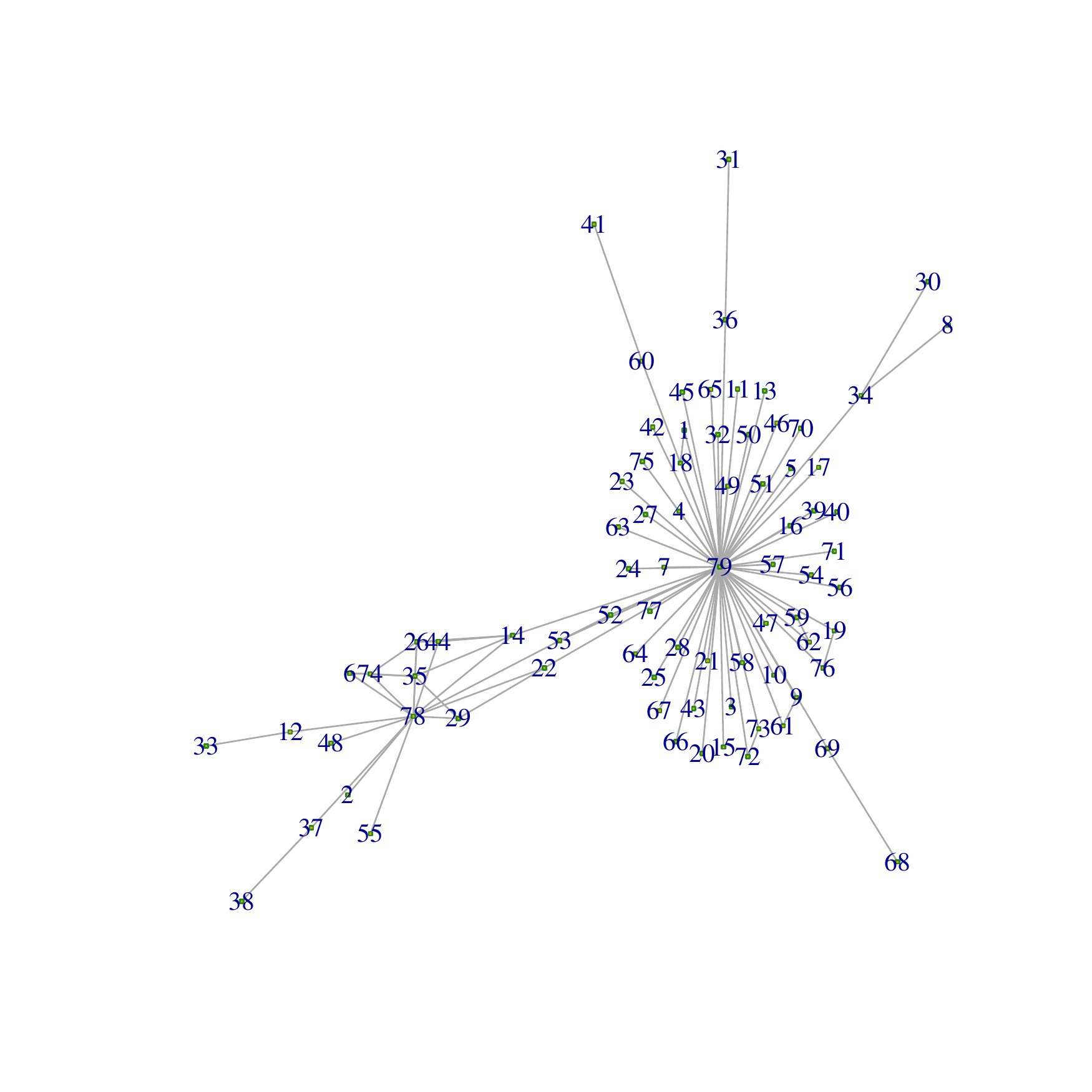}}
    \scalebox{0.4}{\includegraphics{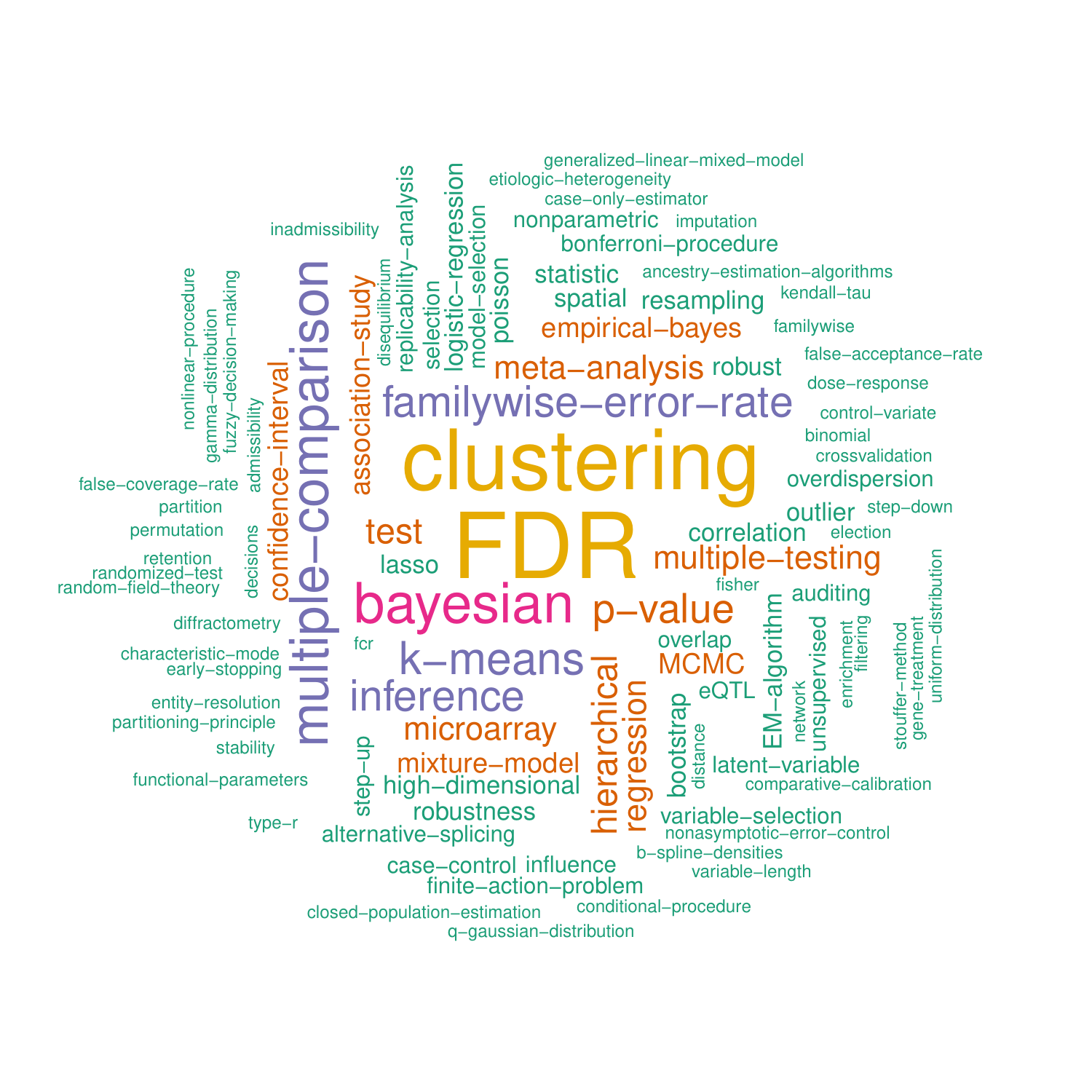}}  \vspace{-1cm}
    \caption{single-cell transcriptomics}\label{fig:case_single_cell} 
    \end{subfigure}
 \begin{subfigure}[b]{1\textwidth}

 \centering
     \scalebox{0.35}{\includegraphics{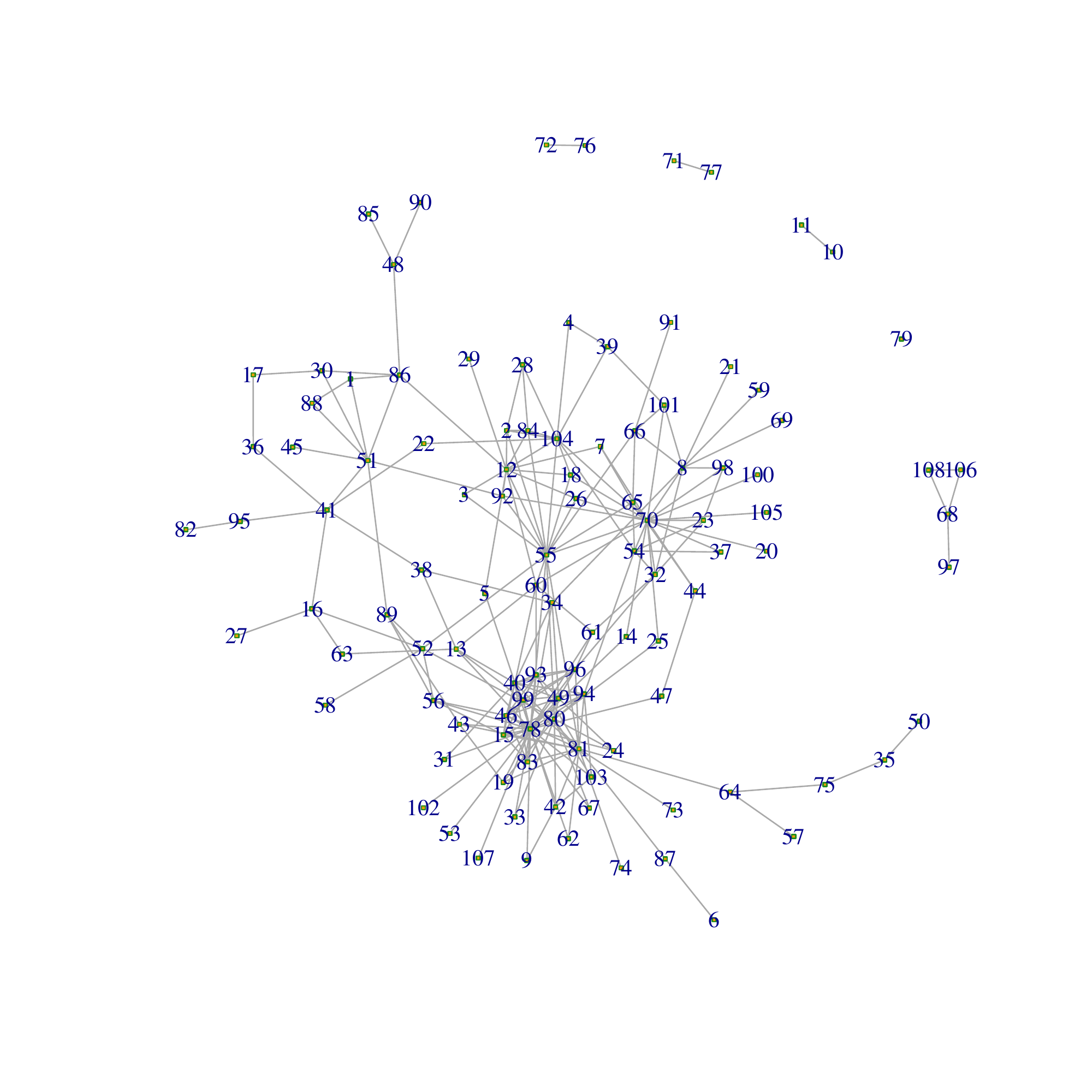}}
    \scalebox{0.4}{\includegraphics{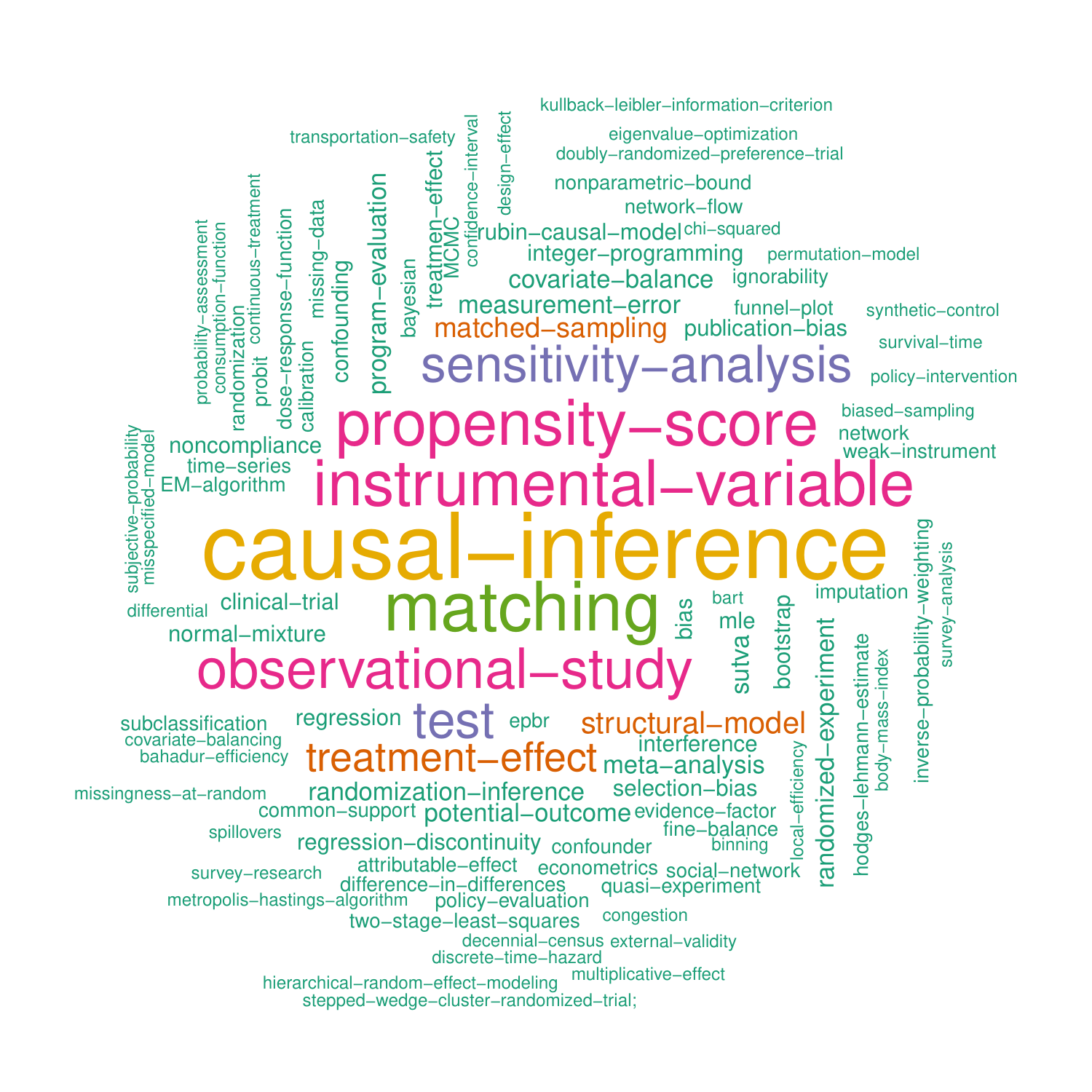}} \vspace{-0.5cm}
     \caption{labor economics}\label{fig:case_labor}
    \end{subfigure}
    \begin{subfigure}[b]{1\textwidth}
 \centering
     \scalebox{0.35}{\includegraphics{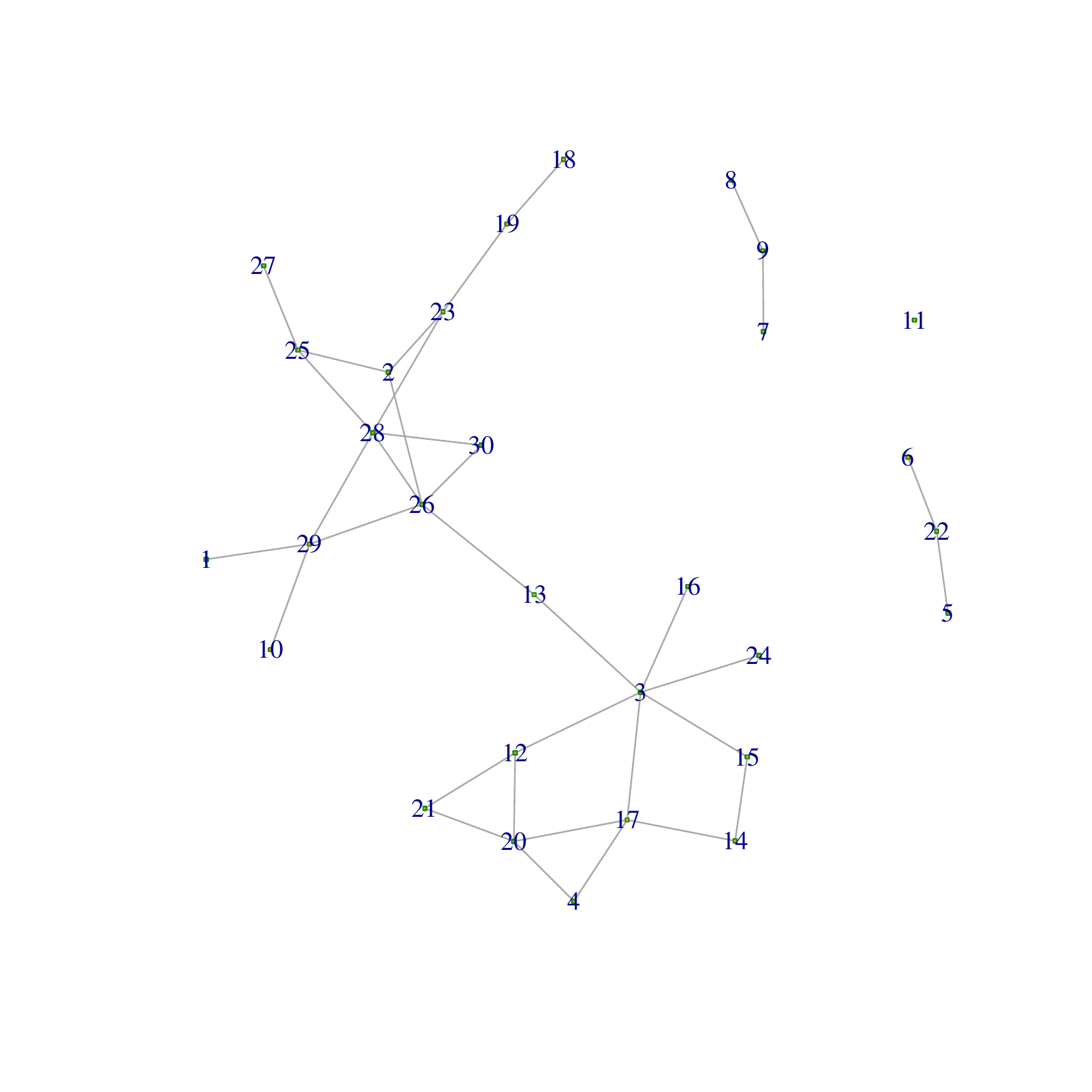}}
    \scalebox{0.4}{\includegraphics{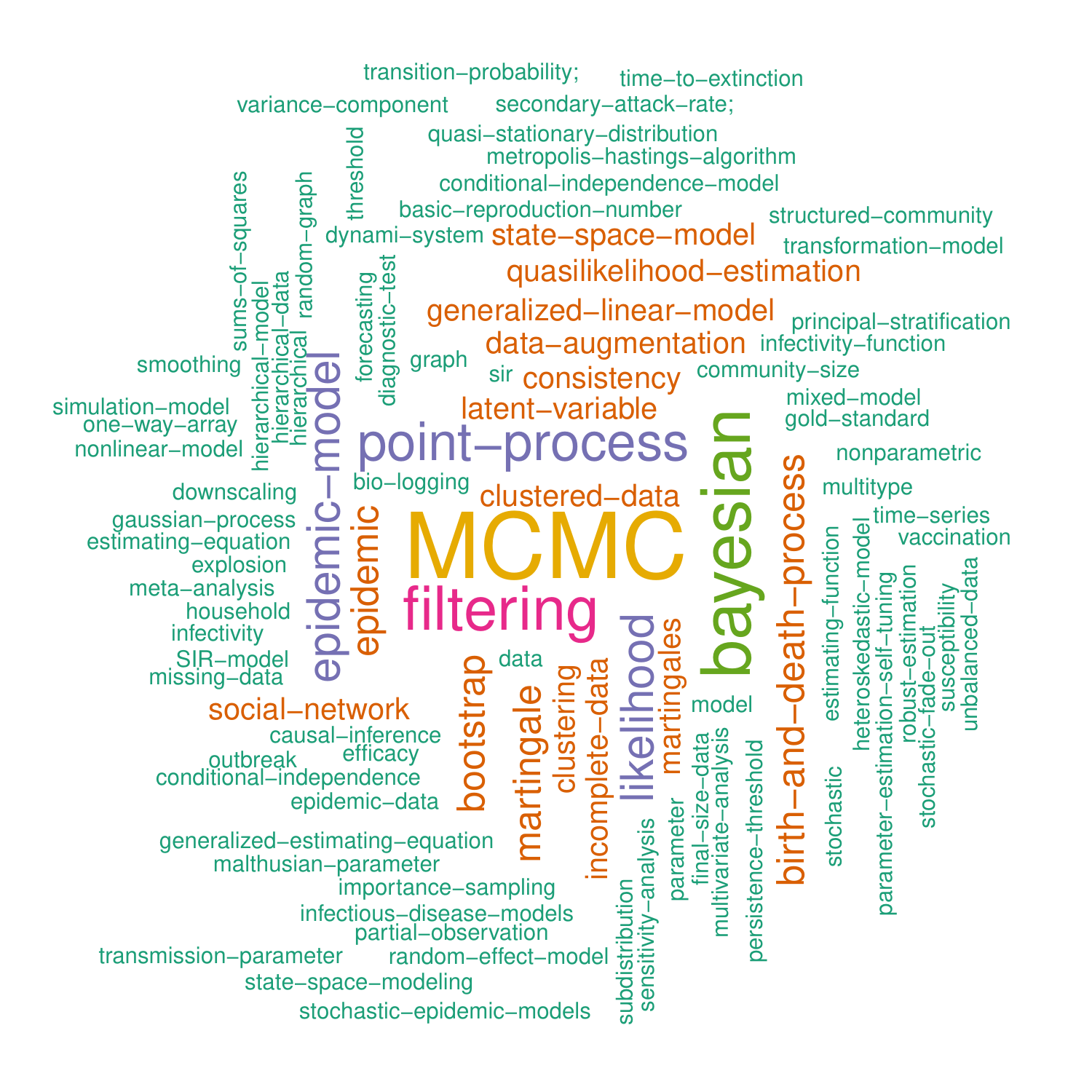}} \vspace{-0.5cm}
     \caption{flu}\label{fig:case_flu}
    \end{subfigure}
    \caption{Networks and word clouds generated from the source papers found by local clustering for each topic.}\label{fig:case_1}
\end{figure}

\section{Discussion}

In this paper, we study the citation network arising from selected statistical papers in the past two decades, a period coinciding with the rise of Big Data and statistics being perceived to play increasingly important roles in many scientific disciplines. Unlike previous studies on statistics citation networks, we focus on the connections between statistics and other disciplines and use citation data to investigate the external influence of various statistical works.  

First performing descriptive analysis, we show that both the overall volume of citations and the diversity of citing fields have been increasing over time for all the journals considered. Even typical theoretical journals such as AOS have been attracting a significant proportion of external citations in recent years, which is quite encouraging. Next by distinguishing between internal and external citations, we identify research areas in statistics that have high impact under both criteria. The most highly cited papers are ranked high both internally and externally. On the other hand, papers with a large number of external citations but relatively fewer internal citations can point to areas where future development in relevant theory and methods may be rewarded by immediate visibility outside statistics. Lastly, using the technique of local clustering, we identify the statistical research communities most relevant to various external topics of interest. Under the DC-SBM, we prove the combination of aPPR and conductance selects all nodes in the target community with high probability. We demonstrate the performance of the algorithm using simulated data, examining its stability with respect to the number of seeds and the teleportation constant. Presenting a number of case studies using external topics of high general interests, we show that the communities selected align well with our intuition and understanding of the topics.  

Our study takes the first step toward understanding the influence of statistical works on other disciplines that use tools and methods from statistics to aid their  discoveries. The data we have collected can be of independent interests, opening opportunities for further modeling and analysis from different perspectives. We also note that some of the limitations in our current study can be addressed by expanding the scope of the data. For example, in analyzing the trend of diversity of citing  fields, it would be ideal to collect information about the number of published papers in each citing field and include it as a normalization factor. The data could also be expanded to include more journals and other types of source publications, such as conferences and books, over a longer period of time to allow for a more comprehensive historical view and richer analysis. 
We leave the collection and analysis of these more extensive data as future work. 

Compared with global clustering, the theoretical properties of local clustering techniques are less well characterized under generative network models. Our application and theoretical results of local clustering can be extended to incorporate mixed membership modeling and temporal changes in the evolution of communities. We have currently used textual data (e.g., keywords) as a way to validate the target communities found; it would be more interesting to include such data as covariates in the network model subject to clustering analysis. 

We end the discussion by acknowledging the limitations of citation itself as a form of data measuring intellectual influence, some of which have already been pointed out in previous studies \citep{stigler1994citation, varin2016statistical}. Not all citations carry the same weight -- a paper could be mentioned just in the literature review or serve as the foundation that inspired the paper citing it; arguably the latter type of citation is more important. Citations are not always attributed to the correct source, and modern day style of research relying on search engines such as Google is likely to bias toward papers already with high citation counts. Many data scientists and practitioners in industry do not necessarily publish their works but can still make use of ideas and tools in statistical papers, resulting in missing citations.   
Nevertheless, despite these limitations, citation data provide a useful and necessary first passage into investigating the intellectual influence of scientific works.  

\section*{Acknowledgments}
The authors would like to thank Dr. Tung-Yu Wu for help with the data collection process and Prof. Peter J. Bickel, Prof. Jingyi Jessica Li for many fruitful discussions. Y.X.R.W. gratefully acknowledges funding from the Australian Research Council DECRA Fellowship (DE180101252).

\bibliographystyle{plain}
\bibliography{citationref}

\begin{thebibliography}{47}
\providecommand{\natexlab}[1]{#1}
\providecommand{\url}[1]{\texttt{#1}}
\expandafter\ifx\csname urlstyle\endcsname\relax
  \providecommand{\doi}[1]{doi: #1}\else
  \providecommand{\doi}{doi: \begingroup \urlstyle{rm}\Url}\fi

\bibitem[Shi et~al.(2015)Shi, Foster, and Evans]{shi2015weaving}
Feng Shi, Jacob~G Foster, and James~A Evans.
\newblock Weaving the fabric of science: Dynamic network models of science's
  unfolding structure.
\newblock \emph{Social Networks}, 43:\penalty0 73--85, 2015.

\bibitem[Varga(2019)]{V19}
Attila Varga.
\newblock Shorter distances between papers over time are due to more
  cross-field references and increased citation rate to higher-impact papers.
\newblock \emph{Proceedings of the National Academy of Sciences}, 116\penalty0
  (44):\penalty0 22094--22099, 2019.

\bibitem[Newman(2001)]{newman2001structure}
Mark~EJ Newman.
\newblock The structure of scientific collaboration networks.
\newblock \emph{Proceedings of the national academy of sciences}, 98\penalty0
  (2):\penalty0 404--409, 2001.

\bibitem[Martin et~al.(2013)Martin, Ball, Karrer, and
  Newman]{martin2013coauthorship}
Travis Martin, Brian Ball, Brian Karrer, and MEJ Newman.
\newblock Coauthorship and citation patterns in the physical review.
\newblock \emph{Physical Review E}, 88\penalty0 (1):\penalty0 012814, 2013.

\bibitem[Stigler(1994)]{stigler1994citation}
Stephen~M Stigler.
\newblock Citation patterns in the journals of statistics and probability.
\newblock \emph{Statistical Science}, pages 94--108, 1994.

\bibitem[Varin et~al.(2016)Varin, Cattelan, and Firth]{varin2016statistical}
Cristiano Varin, Manuela Cattelan, and David Firth.
\newblock Statistical modelling of citation exchange between statistics
  journals.
\newblock \emph{Journal of the Royal Statistical Society: Series A (Statistics
  in Society)}, 179\penalty0 (1):\penalty0 1--63, 2016.

\bibitem[Ji and Jin(2016)]{ji2016coauthorship}
Pengsheng Ji and Jiashun Jin.
\newblock Coauthorship and citation networks for statisticians.
\newblock \emph{The Annals of Applied Statistics}, 10\penalty0 (4):\penalty0
  1779--1812, 2016.

\bibitem[Andersen and Lang(2006)]{andersen2006communities}
Reid Andersen and Kevin~J Lang.
\newblock Communities from seed sets.
\newblock In \emph{Proceedings of the 15th international conference on World
  Wide Web}, pages 223--232, 2006.

\bibitem[Whang et~al.(2013)Whang, Gleich, and Dhillon]{whang2013overlapping}
Joyce~Jiyoung Whang, David~F Gleich, and Inderjit~S Dhillon.
\newblock Overlapping community detection using seed set expansion.
\newblock In \emph{Proceedings of the 22nd ACM international conference on
  information \& knowledge management}, pages 2099--2108, 2013.

\bibitem[Kloumann and Kleinberg(2014)]{kloumann2014community}
Isabel~M Kloumann and Jon~M Kleinberg.
\newblock Community membership identification from small seed sets.
\newblock In \emph{Proceedings of the 20th ACM SIGKDD international conference
  on Knowledge discovery and data mining}, pages 1366--1375, 2014.

\bibitem[Chung(2009)]{chung2009local}
Fan Chung.
\newblock A local graph partitioning algorithm using heat kernel pagerank.
\newblock \emph{Internet Mathematics}, 6\penalty0 (3):\penalty0 315--330, 2009.

\bibitem[Kloster and Gleich(2014)]{kloster2014heat}
Kyle Kloster and David~F Gleich.
\newblock Heat kernel based community detection.
\newblock In \emph{Proceedings of the 20th ACM SIGKDD international conference
  on Knowledge discovery and data mining}, pages 1386--1395, 2014.

\bibitem[Holland et~al.(1983)Holland, Laskey, and
  Leinhardt]{holland1983stochastic}
Paul~W Holland, Kathryn~Blackmond Laskey, and Samuel Leinhardt.
\newblock Stochastic blockmodels: First steps.
\newblock \emph{Social networks}, 5\penalty0 (2):\penalty0 109--137, 1983.

\bibitem[Kloumann et~al.(2017)Kloumann, Ugander, and Kleinberg]{KUK17}
Isabel~M Kloumann, Johan Ugander, and Jon Kleinberg.
\newblock Block models and personalized pagerank.
\newblock \emph{Proceedings of the National Academy of Sciences}, 114\penalty0
  (1):\penalty0 33--38, 2017.

\bibitem[Karrer and Newman(2011)]{KN11}
Brian Karrer and Mark~EJ Newman.
\newblock Stochastic blockmodels and community structure in networks.
\newblock \emph{Physical review E}, 83\penalty0 (1):\penalty0 016107, 2011.

\bibitem[Chen et~al.(2020)Chen, Zhang, and Rohe]{CZR20}
Fan Chen, Yini Zhang, and Karl Rohe.
\newblock Targeted sampling from massive block model graphs with personalized
  pagerank.
\newblock \emph{Journal of the Royal Statistical Society: Series B (Statistical
  Methodology)}, 82\penalty0 (1):\penalty0 99--126, 2020.

\bibitem[Andersen et~al.(2006)Andersen, Chung, and Lang]{andersen2006local}
Reid Andersen, Fan Chung, and Kevin Lang.
\newblock Local graph partitioning using pagerank vectors.
\newblock In \emph{2006 47th Annual IEEE Symposium on Foundations of Computer
  Science (FOCS'06)}, pages 475--486. IEEE, 2006.

\bibitem[Yang and Leskovec(2015)]{YL15}
Jaewon Yang and Jure Leskovec.
\newblock Defining and evaluating network communities based on ground-truth.
\newblock \emph{Knowledge and Information Systems}, 42\penalty0 (1):\penalty0
  181--213, 2015.

\bibitem[Van~Laarhoven and Marchiori(2016)]{VM16}
Twan Van~Laarhoven and Elena Marchiori.
\newblock Local network community detection with continuous optimization of
  conductance and weighted kernel k-means.
\newblock \emph{The Journal of Machine Learning Research}, 17\penalty0
  (1):\penalty0 5148--5175, 2016.

\bibitem[Azzalini and Valle(1996)]{azzalini1996multivariate}
Adelchi Azzalini and A~Dalla Valle.
\newblock The multivariate skew-normal distribution.
\newblock \emph{Biometrika}, 83\penalty0 (4):\penalty0 715--726, 1996.

\bibitem[Duval and Tweedie(2000)]{duval2000nonparametric}
Sue Duval and Richard Tweedie.
\newblock A nonparametric “trim and fill” method of accounting for
  publication bias in meta-analysis.
\newblock \emph{Journal of the american statistical association}, 95\penalty0
  (449):\penalty0 89--98, 2000.

\bibitem[Lo et~al.(2001)Lo, Mendell, and Rubin]{lo2001testing}
Yungtai Lo, Nancy~R Mendell, and Donald~B Rubin.
\newblock Testing the number of components in a normal mixture.
\newblock \emph{Biometrika}, 88\penalty0 (3):\penalty0 767--778, 2001.

\bibitem[Green(1995)]{green1995reversible}
Peter~J Green.
\newblock Reversible jump markov chain monte carlo computation and bayesian
  model determination.
\newblock \emph{Biometrika}, 82\penalty0 (4):\penalty0 711--732, 1995.

\bibitem[Angrist et~al.(1996)Angrist, Imbens, and
  Rubin]{angrist1996identification}
Joshua~D Angrist, Guido~W Imbens, and Donald~B Rubin.
\newblock Identification of causal effects using instrumental variables.
\newblock \emph{Journal of the American statistical Association}, 91\penalty0
  (434):\penalty0 444--455, 1996.

\bibitem[Efron et~al.(2004)Efron, Hastie, Johnstone, and
  Tibshirani]{efron2004least}
Bradley Efron, Trevor Hastie, Iain Johnstone, and Robert Tibshirani.
\newblock Least angle regression.
\newblock \emph{The Annals of statistics}, 32\penalty0 (2):\penalty0 407--499,
  2004.

\bibitem[Benjamini and Yekutieli(2001)]{benjamini2001control}
Yoav Benjamini and Daniel Yekutieli.
\newblock The control of the false discovery rate in multiple testing under
  dependency.
\newblock \emph{Annals of statistics}, pages 1165--1188, 2001.

\bibitem[Yuan and Lin(2006)]{yuan2006model}
Ming Yuan and Yi~Lin.
\newblock Model selection and estimation in regression with grouped variables.
\newblock \emph{Journal of the Royal Statistical Society: Series B (Statistical
  Methodology)}, 68\penalty0 (1):\penalty0 49--67, 2006.

\bibitem[Zou and Hastie(2005)]{zou2005regularization}
Hui Zou and Trevor Hastie.
\newblock Regularization and variable selection via the elastic net.
\newblock \emph{Journal of the royal statistical society: series B (statistical
  methodology)}, 67\penalty0 (2):\penalty0 301--320, 2005.

\bibitem[Storey(2002)]{storey2002direct}
John~D Storey.
\newblock A direct approach to false discovery rates.
\newblock \emph{Journal of the Royal Statistical Society: Series B (Statistical
  Methodology)}, 64\penalty0 (3):\penalty0 479--498, 2002.

\bibitem[Spiegelhalter et~al.(2002)Spiegelhalter, Best, Carlin, and Van
  Der~Linde]{spiegelhalter2002bayesian}
David~J Spiegelhalter, Nicola~G Best, Bradley~P Carlin, and Angelika Van
  Der~Linde.
\newblock Bayesian measures of model complexity and fit.
\newblock \emph{Journal of the royal statistical society: Series b (statistical
  methodology)}, 64\penalty0 (4):\penalty0 583--639, 2002.

\bibitem[Benjamini and Hochberg(1995)]{benjamini1995controlling}
Yoav Benjamini and Yosef Hochberg.
\newblock Controlling the false discovery rate: a practical and powerful
  approach to multiple testing.
\newblock \emph{Journal of the Royal statistical society: series B
  (Methodological)}, 57\penalty0 (1):\penalty0 289--300, 1995.

\bibitem[Tibshirani(1996)]{tibshirani1996regression}
Robert Tibshirani.
\newblock Regression shrinkage and selection via the lasso.
\newblock \emph{Journal of the Royal Statistical Society: Series B
  (Methodological)}, 58\penalty0 (1):\penalty0 267--288, 1996.

\bibitem[Wu et~al.(2012)Wu, Li, So, Wright, and Chang]{wu2012learning}
Xiao-Ming Wu, Zhenguo Li, Anthony Man-Cho So, John Wright, and Shih-Fu Chang.
\newblock Learning with partially absorbing random walks.
\newblock In \emph{NIPS}, volume~25, pages 3077--3085, 2012.

\bibitem[Jin(2015)]{J15}
Jiashun Jin.
\newblock Fast community detection by score.
\newblock \emph{The Annals of Statistics}, 43\penalty0 (1):\penalty0 57--89,
  2015.

\bibitem[Kharchenko(2021)]{Kh21}
Peter~V Kharchenko.
\newblock The triumphs and limitations of computational methods for scrna-seq.
\newblock \emph{Nature Methods}, pages 1--10, 2021.

\bibitem[Sugar and James(2003)]{sugar2003finding}
Catherine~A Sugar and Gareth~M James.
\newblock Finding the number of clusters in a dataset: An information-theoretic
  approach.
\newblock \emph{Journal of the American Statistical Association}, 98\penalty0
  (463):\penalty0 750--763, 2003.

\bibitem[Tibshirani et~al.(2001)Tibshirani, Walther, and
  Hastie]{tibshirani2001estimating}
Robert Tibshirani, Guenther Walther, and Trevor Hastie.
\newblock Estimating the number of clusters in a data set via the gap
  statistic.
\newblock \emph{Journal of the Royal Statistical Society: Series B (Statistical
  Methodology)}, 63\penalty0 (2):\penalty0 411--423, 2001.

\bibitem[Ganong and J{\"a}ger(2018)]{ganong2018permutation}
Peter Ganong and Simon J{\"a}ger.
\newblock A permutation test for the regression kink design.
\newblock \emph{Journal of the American Statistical Association}, 113\penalty0
  (522):\penalty0 494--504, 2018.

\bibitem[Li et~al.(2018)Li, Morgan, and Zaslavsky]{li2018balancing}
Fan Li, Kari~Lock Morgan, and Alan~M Zaslavsky.
\newblock Balancing covariates via propensity score weighting.
\newblock \emph{Journal of the American Statistical Association}, 113\penalty0
  (521):\penalty0 390--400, 2018.

\bibitem[Britton(1998)]{britton1998estimation}
Tom Britton.
\newblock Estimation in multitype epidemics.
\newblock \emph{Journal of the Royal Statistical Society: Series B (Statistical
  Methodology)}, 60\penalty0 (4):\penalty0 663--679, 1998.

\bibitem[Dukic et~al.(2012)Dukic, Lopes, and Polson]{dukic2012tracking}
Vanja Dukic, Hedibert~F Lopes, and Nicholas~G Polson.
\newblock Tracking epidemics with google flu trends data and a state-space seir
  model.
\newblock \emph{Journal of the American Statistical Association}, 107\penalty0
  (500):\penalty0 1410--1426, 2012.

\bibitem[Bickel and Chen(2009)]{BC09}
Peter~J Bickel and Aiyou Chen.
\newblock A nonparametric view of network models and newman--girvan and other
  modularities.
\newblock \emph{Proceedings of the National Academy of Sciences}, 106\penalty0
  (50):\penalty0 21068--21073, 2009.

\bibitem[Zhao et~al.(2012)Zhao, Levina, and Zhu]{zhao2012consistency}
Yunpeng Zhao, Elizaveta Levina, and Ji~Zhu.
\newblock Consistency of community detection in networks under degree-corrected
  stochastic block models.
\newblock \emph{The Annals of Statistics}, 40\penalty0 (4):\penalty0
  2266--2292, 2012.

\bibitem[Bickel et~al.(2015)Bickel, Chen, Zhao, Levina, and
  Zhu]{bickel2015correction}
Peter~J Bickel, Aiyou Chen, Yunpeng Zhao, Elizaveta Levina, and Ji~Zhu.
\newblock Correction to the proof of consistency of community detection.
\newblock \emph{The Annals of Statistics}, pages 462--466, 2015.

\bibitem[Jin et~al.(2018)Jin, Ke, and Luo]{JZS18}
Jiashun Jin, Zheng~Tracy Ke, and Shengming Luo.
\newblock {SCORE+} for network community detection.
\newblock \emph{CoRR}, abs/1811.05927, 2018.
\newblock URL \url{http://arxiv.org/abs/1811.05927}.

\bibitem[Fine and Gray(1999)]{fine1999proportional}
Jason~P Fine and Robert~J Gray.
\newblock A proportional hazards model for the subdistribution of a competing
  risk.
\newblock \emph{Journal of the American statistical association}, 94\penalty0
  (446):\penalty0 496--509, 1999.

\bibitem[Abadie et~al.(2010)Abadie, Diamond, and
  Hainmueller]{abadie2010synthetic}
Alberto Abadie, Alexis Diamond, and Jens Hainmueller.
\newblock Synthetic control methods for comparative case studies: Estimating
  the effect of california’s tobacco control program.
\newblock \emph{Journal of the American statistical Association}, 105\penalty0
  (490):\penalty0 493--505, 2010.

\end{thebibliography}

\appendix
\appendixpage

\section{Additional results for descriptive analysis of the citation network}\label{app.sec:summary}

\spacingset{1}
\begin{figure}[h]
\centering
 \includegraphics[width=13cm]{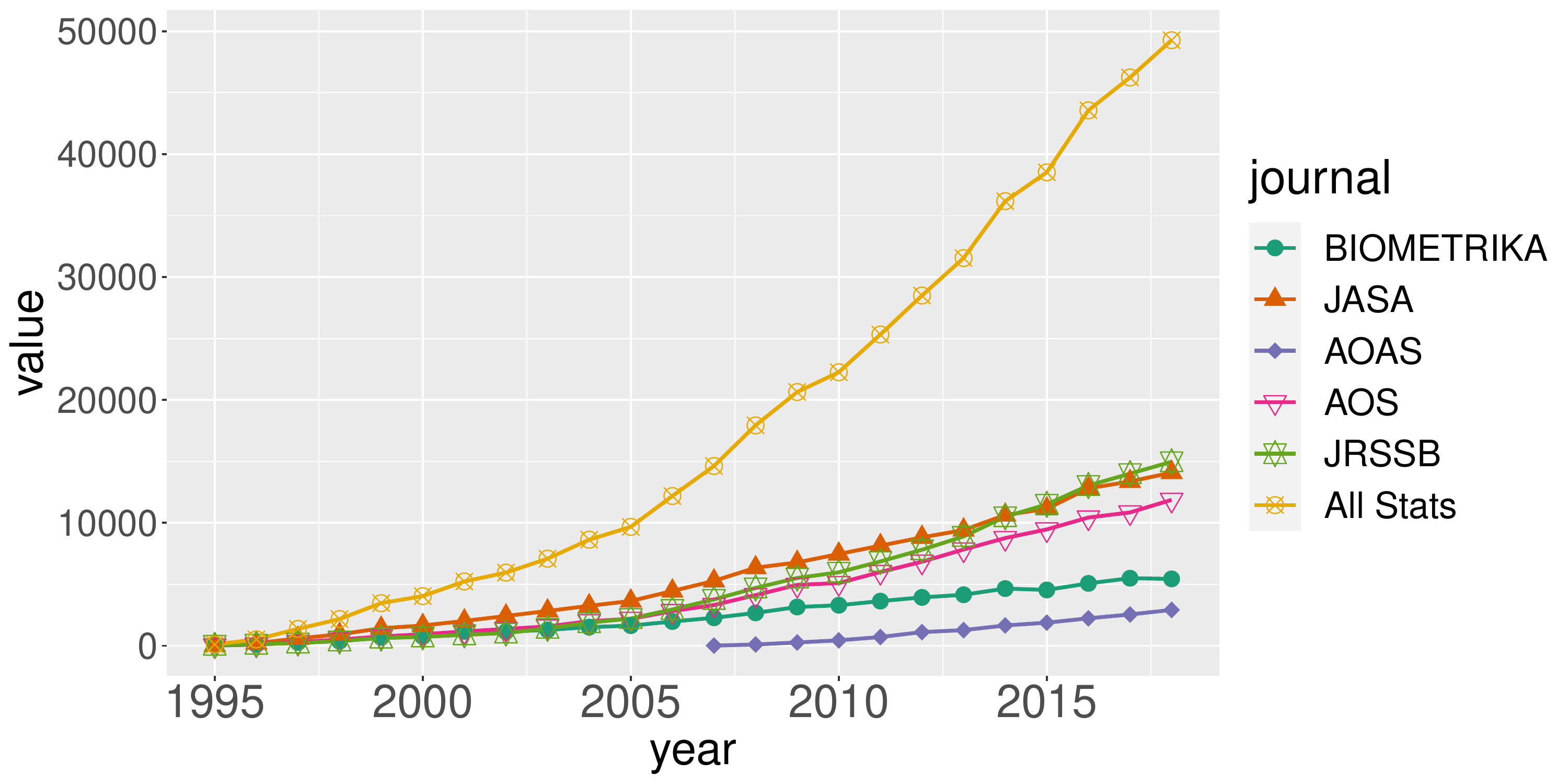}
 \caption{The total number of citations per year for each journal.}\label{fig:total_citation}
\end{figure}

\begin{figure}[h]
\centering
\includegraphics[width=13cm]{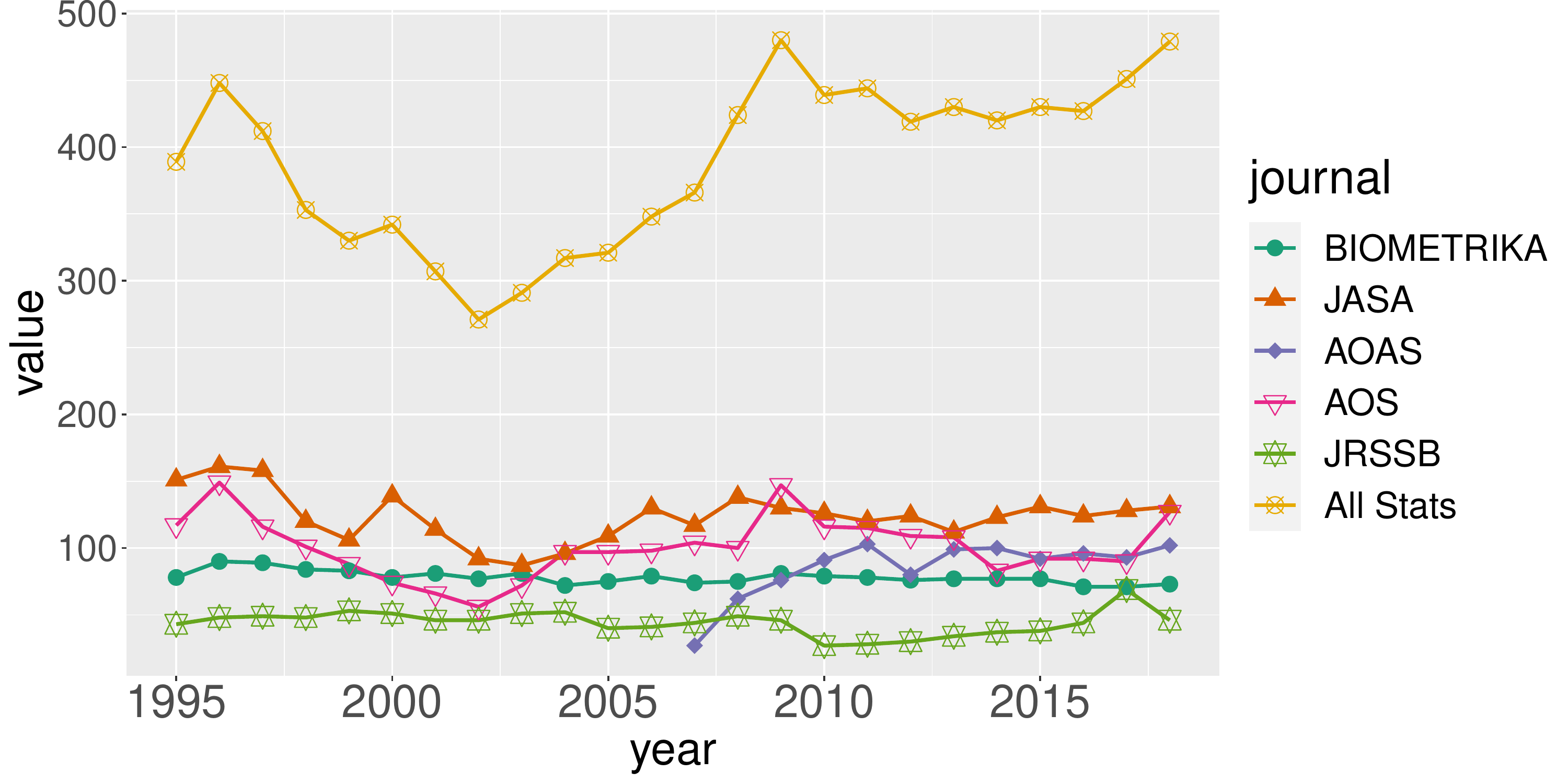}
\caption{The total number of publications per year for each journal.}
\label{fig:volume}
\end{figure}

\vspace{-0.5cm}
\begin{figure}[h]
\centering
 \includegraphics[width=12cm]{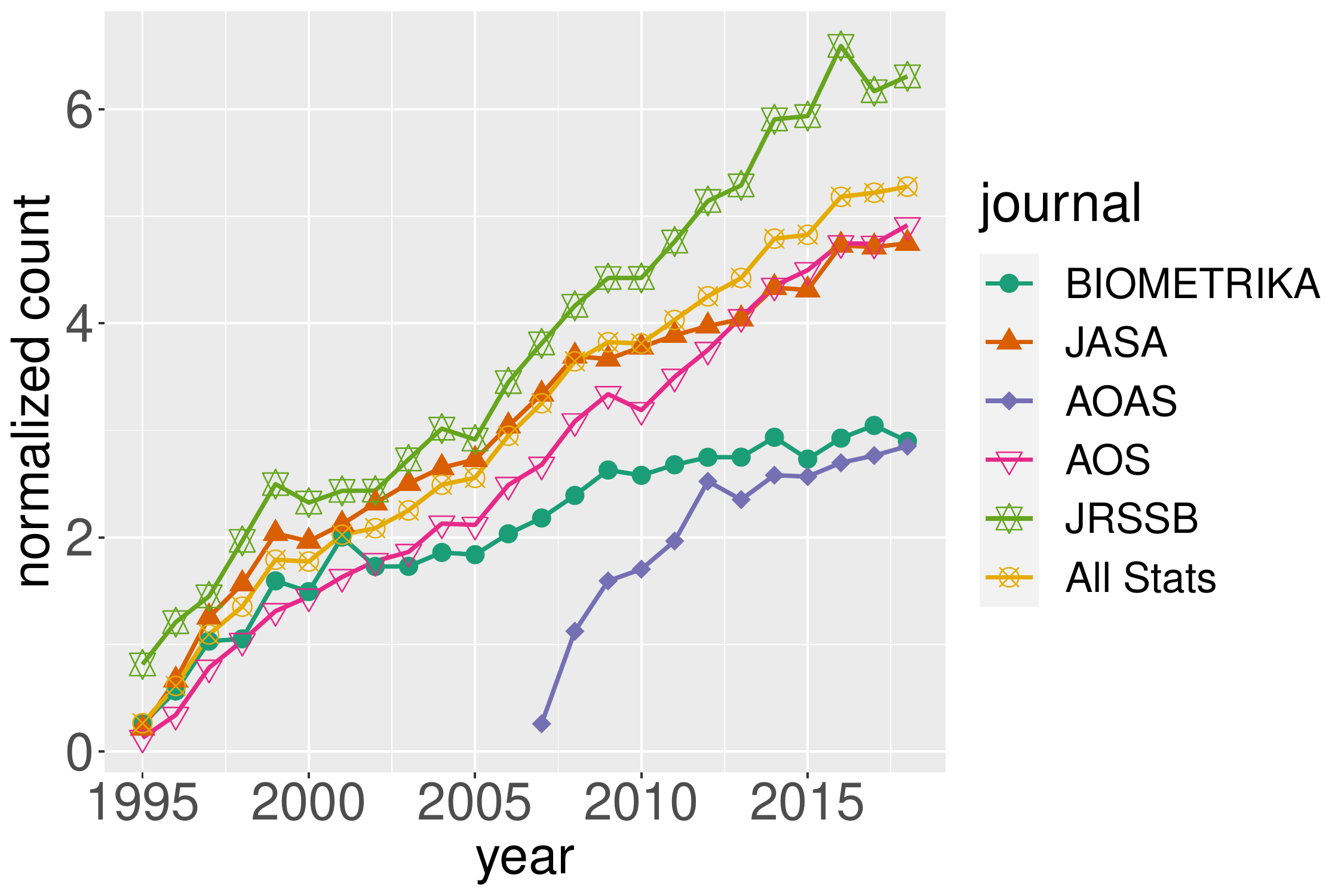}
 \vspace{-0.3cm}
\caption{The normalized citation counts for each journal over the years. Four papers with citations greater than 5000 are removed from JRSSB. }\label{fig:normal_2}
\end{figure}

 \vspace{-0.5cm}

\begin{figure}[h]
\centering
     \begin{subfigure}[b]{0.47\textwidth}
    \centering
    \includegraphics[width=\textwidth]{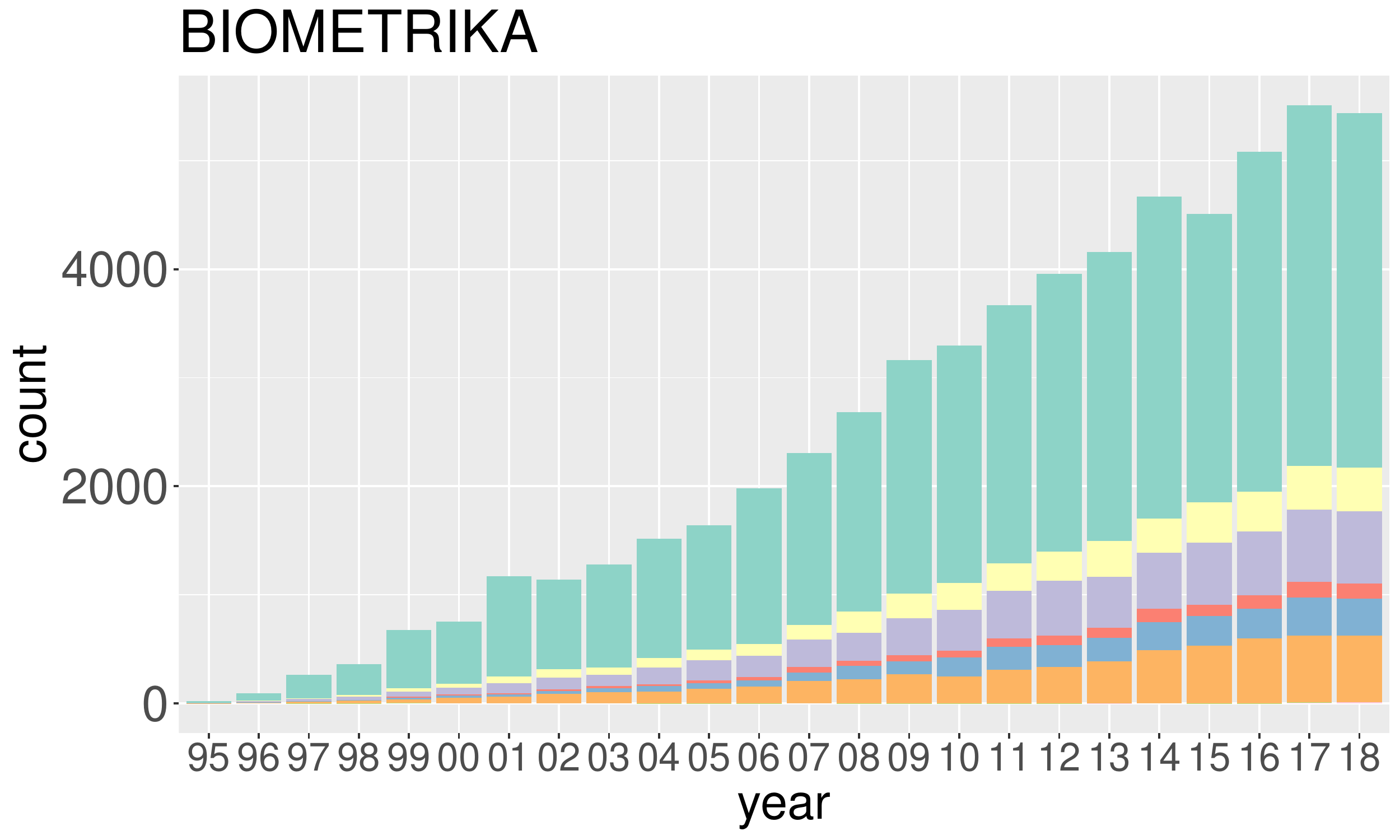}
     \vspace{-0.5cm}
    \caption{}
     \end{subfigure}
     \begin{subfigure}[b]{0.47\textwidth}
    \centering
    \includegraphics[width=\textwidth]{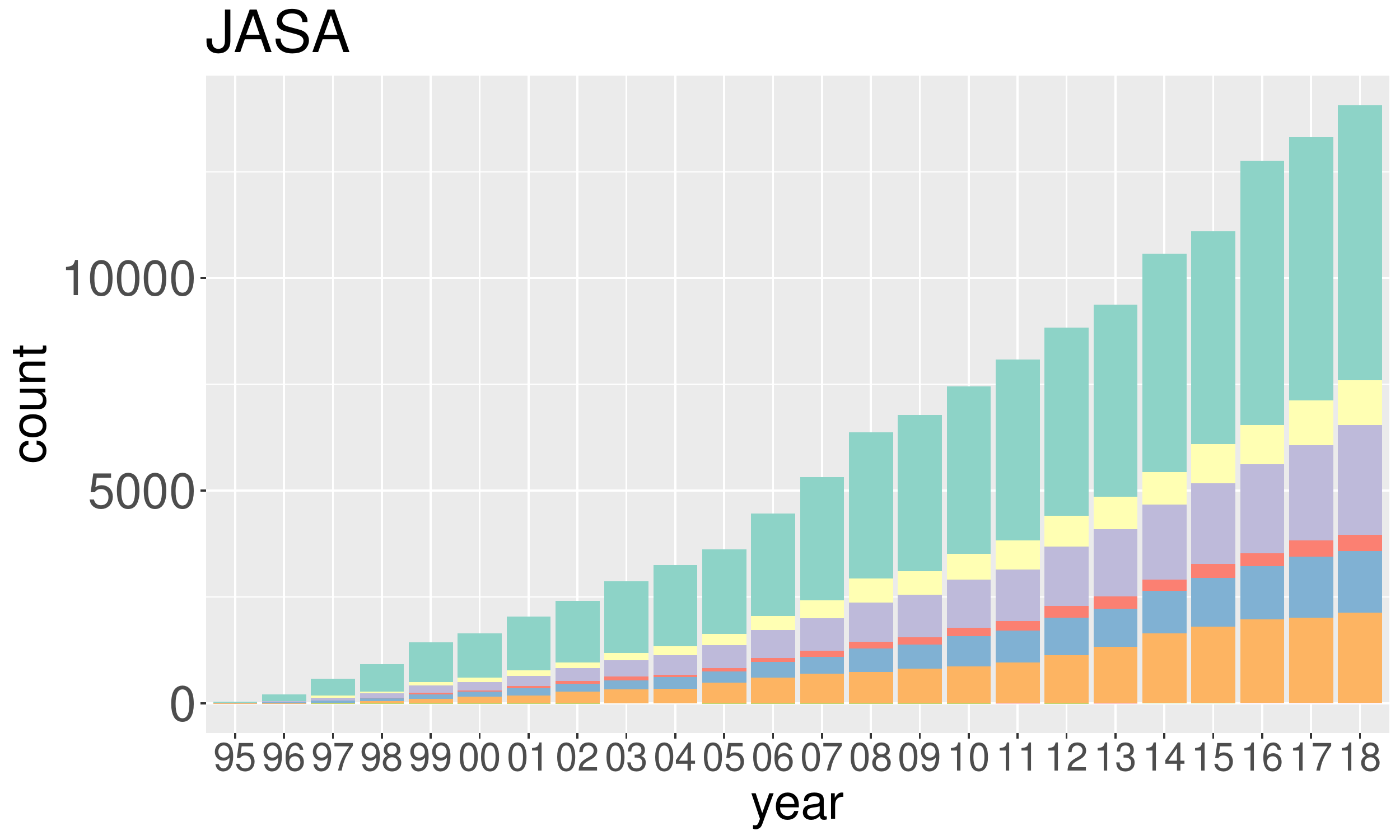}
     \vspace{-0.5cm}
    \caption{}
     \end{subfigure}
      \begin{subfigure}[b]{0.47\textwidth}
    \centering
    \includegraphics[width=\textwidth]{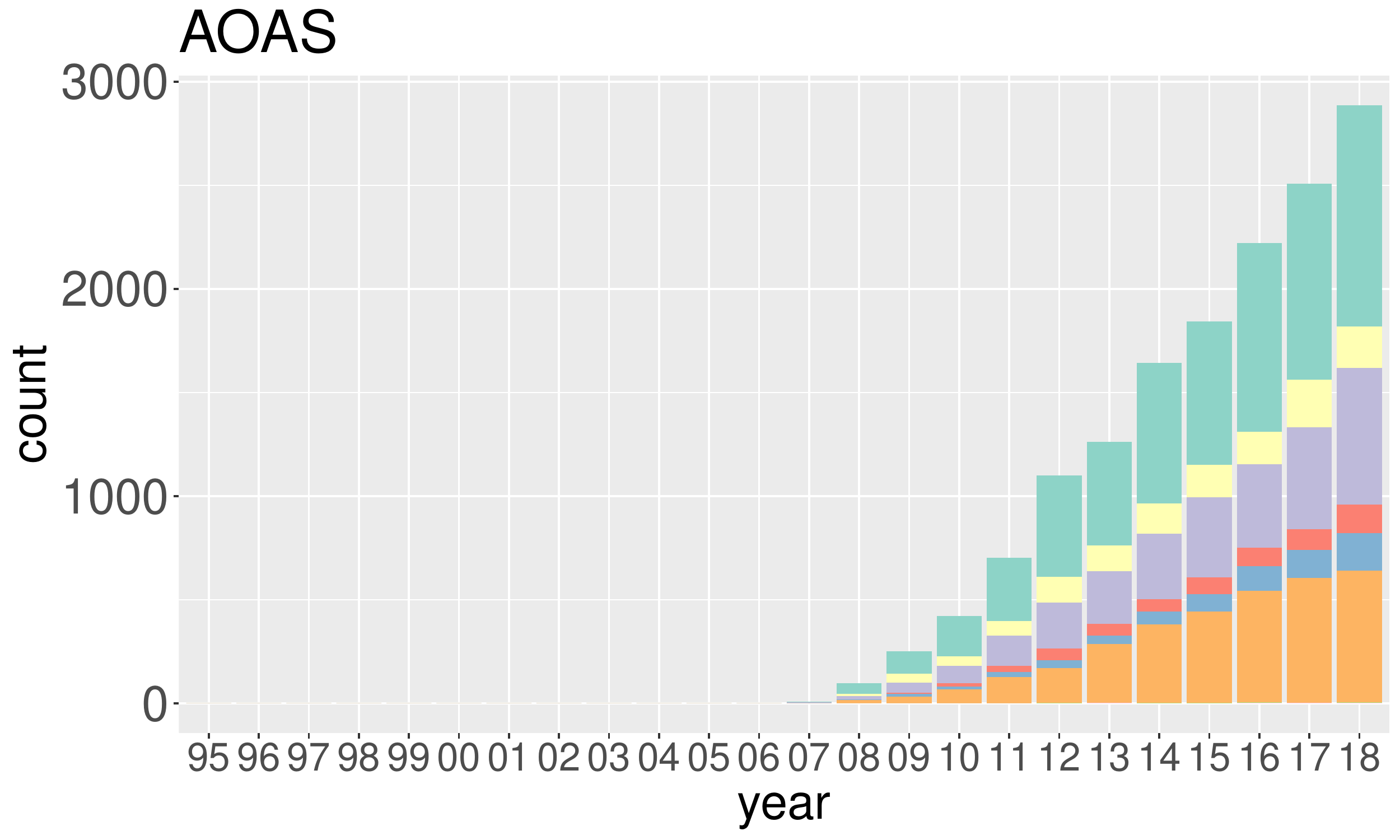}
     \vspace{-0.5cm}
    \caption{}
     \end{subfigure}
      \begin{subfigure}[b]{0.47\textwidth}
    \centering
    \includegraphics[width=\textwidth]{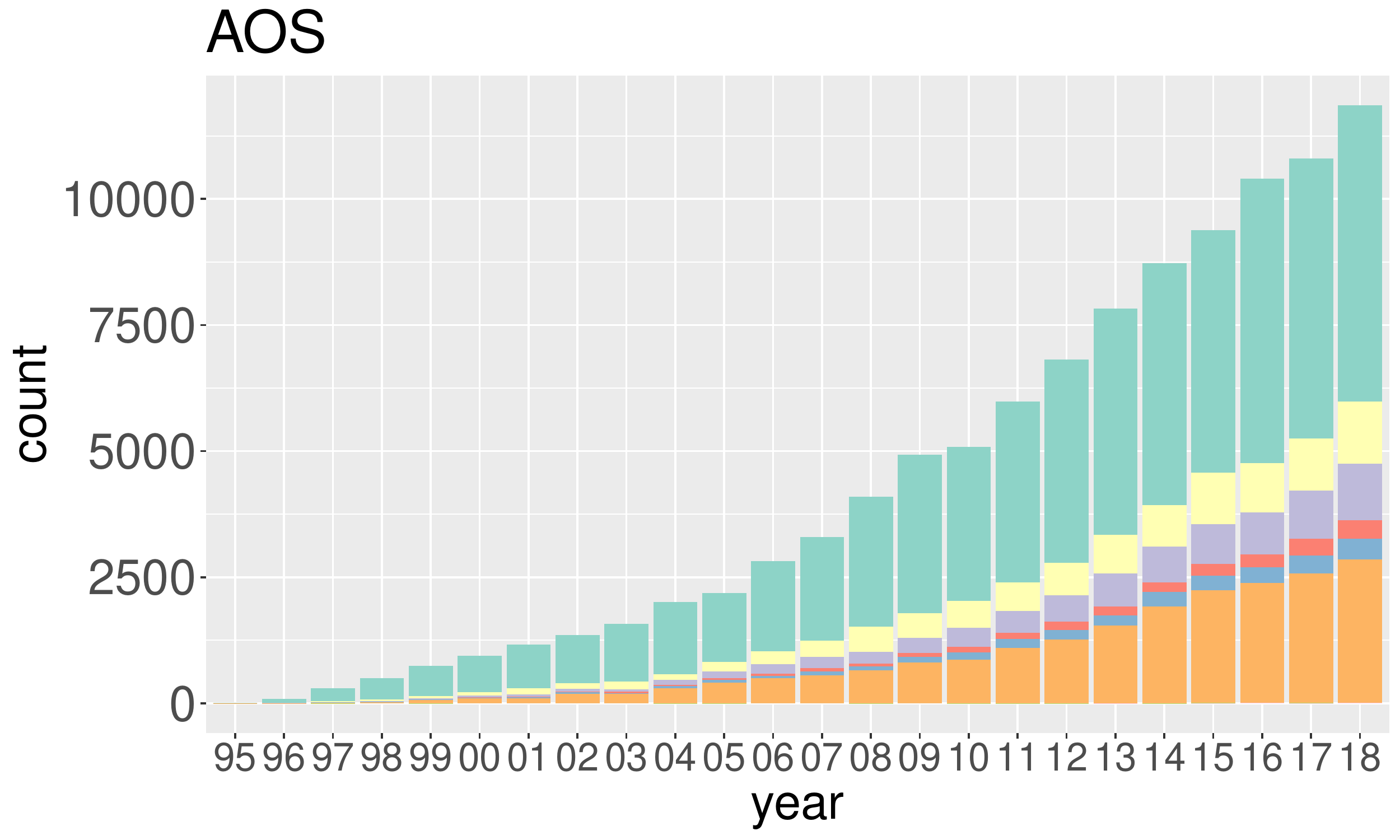}
     \vspace{-0.5cm}
    \caption{}
     \end{subfigure}
      \begin{subfigure}[b]{0.55\textwidth}
    \centering
    \includegraphics[width=\textwidth]{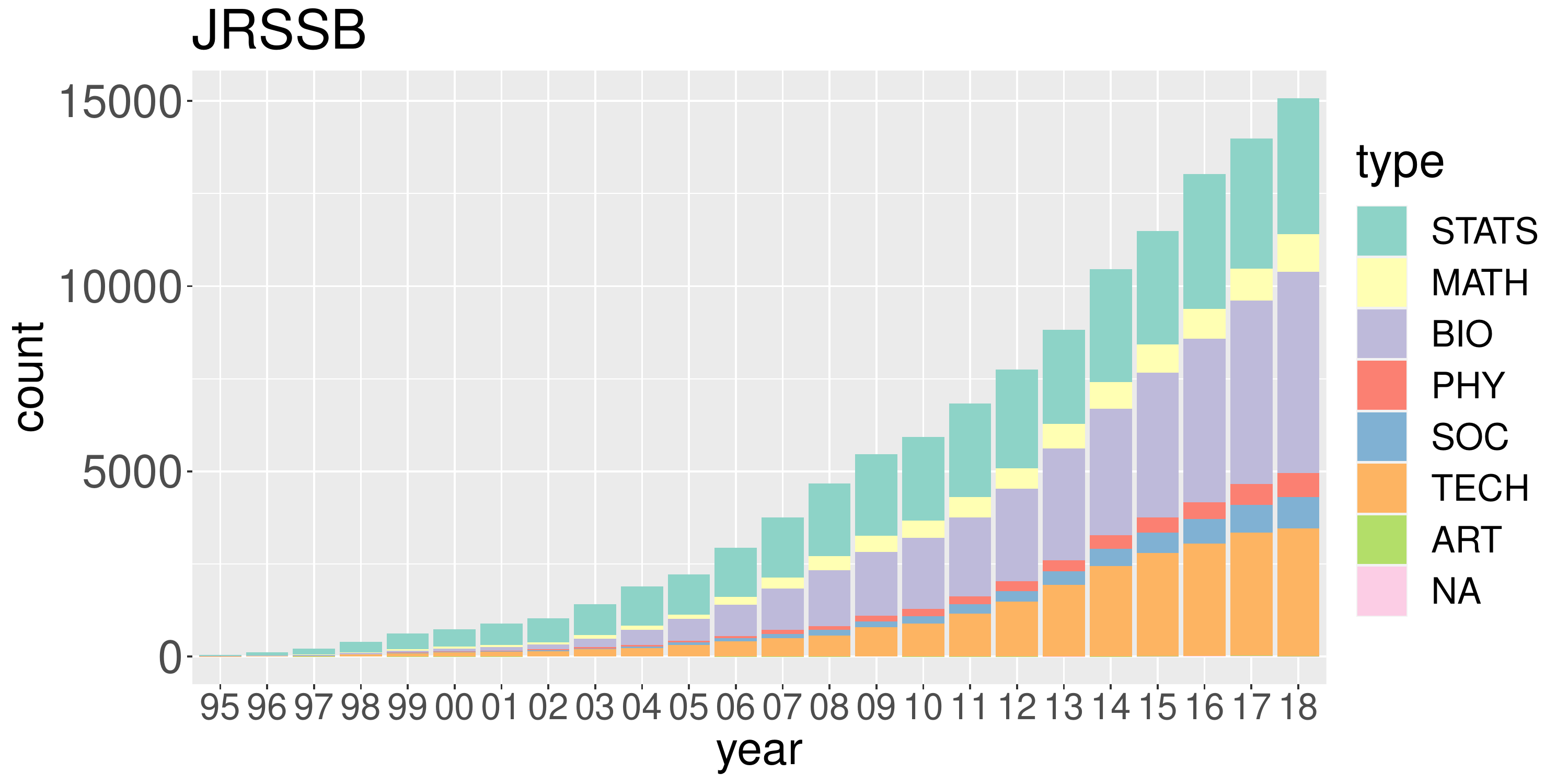}
     \vspace{-0.5cm}
    \caption{}
     \end{subfigure}
     \vspace{-0.3cm}
\caption{Breakdown of citations into research categories for each statistical journal: (a) BIOMETRIKA, (b) JASA, (c) AOAS, (d) AOS, and (e) JRSSB.}\label{fig:diversity_journal}
\end{figure}

\clearpage

\begin{table}[h]
\centering
\footnotesize
\begin{tabular}{p{8cm}|c}
\hline\hline
WoS category & Frequency  \\ \hline
Statistics \& Probability  & 534 \\
Mathematics, Interdisciplinary Applications  & 69 \\ 
Computer Science, Interdisciplinary Applications  & 61 \\ 
Mathematical \& Computational Biology   & 48 \\
Economics &  40 \\ 
Social Sciences, Mathematical Methods  & 35 \\ 
Operations Research \& Management Science  & 30 \\ 
Computer Science, Artificial Intelligence  & 28 \\ 
Engineering, Electrical \& Electronic  & 24 \\ 
 Mathematics, Applied  & 21 \\\hline
\end{tabular}
\caption{Top 10 most frequent WoS categories in all the citations towards \citet{azzalini1996multivariate}.}\label{table:hl_paper_1}
\end{table}

\begin{table}[h]
\centering
\footnotesize
\begin{tabular}{p{8cm}|c}
\hline\hline
WoS category & Frequency  \\ \hline
Computer Science, Artificial Intelligence  & 25 \\
 Engineering, Electrical \& Electronic  & 23 \\
 Operations Research \& Management Science  & 21 \\
  Economics &  17 \\
  Business, Finance  & 12 \\
Computer Science, Interdisciplinary Applications  &   9 \\
 Engineering, Industrial   &  9 \\
   Multidisciplinary Sciences  &  9 \\
 Computer Science, Information Systems   &  8 \\
 Engineering, Civil  &  8\\ \hline
\end{tabular}
\caption{Top 10 most frequent WoS categories in the citations towards \citet{azzalini1996multivariate} after removing the internal papers.}\label{table:hl_paper_2}
\end{table}

\begin{table}[h]
\centering
\footnotesize
\begin{tabular}{p{8cm}|c}
\hline\hline
WoS category & Frequency  \\ \hline
Psychiatry   & 152 \\
Psychology, Multidisciplinary  & 132 \\
Psychology, Clinical &  106 \\
Public, Environmental \& Occupational Health  & 97 \\
Medicine, General \& Internal  & 88\\
Psychology  &  88 \\
Multidisciplinary Sciences  & 58 \\
Neurosciences  & 55 \\
Psychology, Applied &  49 \\
Psychology, Developmental  & 47\\ \hline
\end{tabular}
\caption{Top 10 most frequent WoS categories in all the citations towards \citet{duval2000nonparametric}.}\label{table:lh_paper_1}
\end{table}

\begin{table}[h]
\centering
\footnotesize
\begin{tabular}{p{8cm}|c}
\hline\hline
WoS category & Frequency  \\ \hline
Psychiatry   & 152 \\
Psychology, Multidisciplinary  & 132 \\
Psychology, Clinical &  106 \\
Medicine, General \& Internal  & 88\\
Psychology  &  88 \\
Public, Environmental \& Occupational Health  & 86 \\
Multidisciplinary Sciences  & 58 \\
Neurosciences  & 55 \\
Psychology, Applied &  49 \\
Psychology, Developmental  & 47\\ \hline
\end{tabular}
\caption{Top 10 most frequent WoS categories in the citations towards \citet{duval2000nonparametric} after removing the internal papers.}\label{table:lh_paper_2}
\end{table}

\begin{table}[h]
\centering
\footnotesize
\begin{tabular}{p{8cm}|c}
\hline\hline
WoS category & Frequency  \\ \hline
Psychiatry  & 360 \\
Psychology, Developmental & 342 \\
Psychology, Clinical & 293 \\
Substance Abuse &  171 \\
 Public, Environmental \& Occupational Health & 167 \\
Psychology, Educational &  147 \\
Psychology, Multidisciplinary & 134 \\
Family Studies   & 112\\
Education \& Educational Research & 108 \\      Psychology & 106 \\\hline
\end{tabular}
\caption{Top 10 most frequent WoS categories in all the citations towards \citet{lo2001testing}.}\label{table:lh_paper_3}
\end{table}

\clearpage

\section{Proofs of the main results}
\subsection{Proof of Proposition~\ref{prop:aPPR_clustering}}\label{app.sec:prop_1}

Under the DC-SBM, \citet{CZR20} constructed the ``block-wise” population version of aPPR vector $\bp^*\in \bbR^K$ and proved that when there is a single seed node in block 1,
\begin{equation}\label{eq:CZR_order}
\bp^*_1 > \max\{\bp^*_k \mid k = 2, \ldots, K\}\,. \end{equation}
The separation between block 1 and the other blocks is defined as 
$\Delta_\alpha \in [0, 1]$,
\begin{equation}\label{eq:CZR_delta}
\Delta_\alpha = \frac{\bp^*_1 - \max\{\bp^*_k \mid k = 2, \ldots, K\}}{\bp^*_1}.\,
\end{equation}
Note that $\Delta_\alpha$ is an increasing function function of $\alpha$. This separation together with appropriate concentration analysis allowed them to show in their Corollary 1 that the sample aPPR vector can consistently recover all the nodes in block 1 given the correct size cutoff. 

The following property of $\bp^*$ can be easily derived from the linearity of PPR vectors in general,
\begin{equation}\label{eq:CZR_linearity_1}
\bp^*(\omega_1 \pi_1 + \omega_2 \pi_2) = \omega_1 \bp^*(\pi_1)
+ \omega_2 \bp^*(\pi_2), \mbox{ where } \omega_i \geq 0 \mbox{ and } \omega_1 +\omega_2 = 1 \,.
\end{equation}
This property enables us to extend Corollary 1 in \citet{CZR20} to the setting with multiple seed nodes in a straightforward way.

\begin{proof}[\textbf{Proof of Proposition}~\ref{prop:aPPR_clustering}]

We first check that their assumption $\frac{\max_{i \in \cI} \cd_i}{\min_{i \in \cI} \cd_i} < c_0$ for some constant $c_0$ holds under our assumption~\ref{prop.dcsbm.assp.3}.
We have 
\begin{equation*}
 \cd_i = \sum_{j} \sA_{ij} 
    =  \sum_{j}\theta_i \theta_j B_{g(i)g(j)} = \theta_i \rho_N \sum_{j} \theta_j S_{g(i)g(j)} \,.
\end{equation*}
By assumption~\ref{prop.dcsbm.assp.3},
\begin{equation}
   \frac{\max_{i \in \cI} \cd_i}{\min_{i \in \cI} \cd_i}  \leq 
    \frac{\max_{i \in \cI} \theta_i  \sum_{j} \theta_j S_{g(i)g(j)}}{\min_{i \in \cI} \theta_i  \sum_{j} \theta_j S_{g(i)g(j)}} 
  \leq \frac{U_\theta  \max S_{ij} \sum_{j} \theta_j }{L_\theta \min S_{ij} \sum_{j} \theta_j} = \frac{U_\theta  \max S_{ij}}{L_\theta \min S_{ij}} \,.
\end{equation}
Since $S$ is a fixed matrix, $\frac{\max_{i \in \cI} \cd_i}{\min_{i \in \cI} \cd_i}$ is bounded above.

It remains to show the inequality~\eqref{eq:CZR_order} holds for multiple seed nodes from the same block. Without loss of generality, we consider two seed nodes $v_1 = 1$ and $v_2 = 2$ from block 1, and their corresponding preference vectors are $\pi_1 = (1,0,0,\ldots,0)^\top$ and $\pi_2 = (0,1,0,\ldots,0)^\top$. When $\omega_1 + \omega_2 = 1$ and $\omega_i \geq 0$, $\omega_1 \pi_1 + \omega_2\pi_2$ can be considered as a preference vector containing two seed nodes from the same block. Now Eq~\eqref{eq:CZR_order} applies to $\pi_1$ and $\pi_2$ separately, that is
\begin{multline}\label{eq:CZR_order_2}
 \bp_1^*(\pi_1) >  \max\{\bp^*_k (\pi_1) \mid k = 2, \ldots, K\} \mbox{ and }\\  \bp_1^*(\pi_2) > \max\{\bp^*_k (\pi_2) \mid k = 2, \ldots, K\} \,.
\end{multline}
By Eq~\eqref{eq:CZR_linearity_1} and Eq~\eqref{eq:CZR_order_2}, we have 
\begin{align}
    \bp_1^*(\omega_1 \pi_1 + \omega_2 \pi_2) & = \omega_1 \bp_1^*(\pi_1)   
+ \omega_2 \bp_1^*(\pi_2)   \notag\\
& > \omega_1 \max\{\bp^*_k (\pi_1) \mid k = 2, \ldots, K\}  + \omega_2 \max\{\bp^*_k (\pi_2) \mid k = 2, \ldots, K\} \notag\\
& \geq  \max\{\omega_1\bp^*_k (\pi_1) + \omega_2 \bp^*_k (\pi_2) \mid k = 2, \ldots, K\} \notag\\
& = \max\{\bp^*_k (\omega_1\pi_1 + \omega_2\pi_2) \mid k = 2, \ldots, K\} \,.
\end{align}
The rest of the proof is the same as that of Corollary 1 in \citet{CZR20}.
\end{proof}

\subsection{Properties of the population version of conductance} \label{app.sec:prop_SBM}

In this section, we analyze the optimality properties of the conductance function under the population version before we present the sample version in the next section. Such a technique has been widely used in a number of works (e.g., \citet{BC09} and \citet{zhao2012consistency}); our case mostly differs in the construction of the confusion matrix and analysis of the population version of the objective function.  

A major difference between the previous works and our analysis is that 
they aim to recover all the blocks, whereas we are only concerned about the target block (block 1). For a given cutoff set $\cC_n$ in Eq~\eqref{eq:appr_select} ($n \in [N]$), which essentially partitions all the nodes $\cI$ into two sets, we consider the label assignment function $z=h(\cC_n)$. More concretely, 
\begin{equation}\label{eq:label_equation}
z(u) =\begin{cases}
1\,,& u \in \cC_n\,;\\
2\,,& u \notin \cC_n\,, 
\end{cases}
\end{equation}
for each node $u\in\cI$. In other words, we merge blocks $2,\ldots, K$ into one block and collectively call them block 2. Therefore, the correct assignment $z_0$ (as far as block 1 is concerned) should have $z_0(u) =1$ for $u \in \cI_1$, and $z_0(u) = 2$ for $u \notin \cI_1$. Given the aPPR vector $p^*$ and the corresponding sequence $\{\cC_n\}_{n\in[N]}$, denote $$\zeta = \left\{ z=h(\cC_n) | n \in [N]\right\},$$
which is the set of all possible labels generated from $\{\cC_n\}_{n\in[N]}$. We have $|\zeta| = N$.

Recall that Proposition~\ref{prop:aPPR_clustering} establishes the aPPR vector recovers block 1 with high probability when $n=n_1$, i.e., $\cC_{n_1}=\cI_1$. We will show that under this high probability event, $\phi(\cC_{n_1})$ is a local minimum by analyzing the neighborhood around $n_1$. It is easy to see this event also implies the following property for the set $\cC_n$,
\begin{equation}\label{eq:prop_1_assump}
    \cC_n \left\{\begin{array}{c}
\subsetneq \\
=\\
\supsetneq
\end{array}\right\} \cI_1 \mbox{ when } n 
\left\{\begin{array}{c}
< \\
=\\
>
\end{array}\right\} n_1\,.
\end{equation}
 In other words, all the nodes in $\cC_n$ are from block 1 when the cutoff $n < n_1$;  $\cC_n$ is exactly block 1 when $n = n_1$; and the whole block 1 is contained in $\cC_n$ when $n > n_1$.

For clarity of description, we first study the properties of $\phi$ under the SBM with $K$ blocks. Following the convention, let $\|z - z_0\|_1 = \sum^N_{u = 1}\bone\{z(u) \neq z_0(u)\}$, and for $1\leq a,b \leq 2$,
\begin{equation}
     O_{ab}(z) = \sum_{u \neq v}  \bone\{ z(u) = a, z(v) = b\} A_{uv}\,.
\end{equation}
Define the confusion matrix $R \in [0,1]^{2 \times K} $,
\begin{equation}
\label{eq:confusion_sbm}
    R_{ab}(z, g) = \frac{1}{N}\sum^N_{u = 1}\bone\{z(u) = a, g(u) = b\},\,
\end{equation}
where $g$ denotes the correct labels as introduced in \ref{sec:intro_SBM}. Let $R$ abbreviate $R(z, g)$, and $RSR^\top$ abbreviate $R(z, g) S R(z, g)^\top$. Note that $R^\top \bone = \tau$ where $\tau_k = |\cI_k|/N$ is the proportion of nodes in block $k$. Let $\mu_N = N^2 \rho_N$, then
$$\frac{1}{\mu_N}\mathbb{E}[O(z) \mid  g ] = RSR^\top . \,$$
For convenience, for a general $2 \times 2$ matrix $M$, define
\begin{equation}\label{eq:F_def}
    F(M) = \frac{M_{11}}{ M_{1\cdot}}\,, \mbox{ where }  M_{1\cdot} = M_{11} + M_{12}\,.
\end{equation}
We immediately have 
\begin{equation}\label{eq:conduct_to_F}
    1 - \phi(\cC_n) = F\left(\frac{O(z)}{\mu_N}\right)\,.
\end{equation}
Moreover, we write $$G(z) = F(RSR^\top(z))\,,$$ which is the population version of Eq~\eqref{eq:conduct_to_F} and only depends on $z$. The following lemma shows $z_0$ is a well separated local optimum in a suitable neighborhood defined around $\cC_{n_1}$. Recall that we are working under the event $\cC_{n_1}=\cI_1$ so that Eq~\eqref{eq:prop_1_assump} holds.

\begin{lemma}\label{prop.sbm.differ}
Suppose the assumption~\ref{prop.sbm.assp.2} holds under the SBM. Given a large enough $N$ and a sequence $\{\cC_n\}_{n\in [N]}$, there exists $n'$ satisfying $n' - n_1 = \Omega(N)$ such that 
    $$ G(z) - G(z_0) \leq - \frac{1}{N} \Omega(|n - n_1|)$$
    uniformly for all $z$ in the set $\zeta' = \{z=h(\cC_n) \mid n \in [n']\}$.
    
\end{lemma}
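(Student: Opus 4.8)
The plan is to work entirely at the population level and track how the scalar $G(z) = F(RSR^\top(z))$ changes as we move the cutoff $n$ away from $n_1$, using the structural constraint~\eqref{eq:prop_1_assump} that all ``wrong'' nodes added to $\cC_n$ come from blocks $2,\dots,K$ when $n>n_1$, and all ``missing'' nodes when $n<n_1$ are genuine block-1 nodes. First I would introduce the shorthand $\delta = (n - n_1)/N$ and parametrize the confusion matrix $R(z,g)$: when $n > n_1$, the extra $|n-n_1|$ nodes in $\cC_n$ are distributed among blocks $2,\dots,K$ according to some proportions summing to $\delta$, so that $R_{11} = \tau_1$, $R_{1k} = \epsilon_k \geq 0$ with $\sum_{k\geq 2}\epsilon_k = \delta$, and the row for block 2 absorbs the complement. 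Symmetrically, when $n<n_1$, we have $R_{11} = \tau_1 - |\delta|$, $R_{1k}=0$ for $k\geq 2$. In both cases $z_0$ corresponds to $\delta = 0$, $R = \begin{pmatrix} \tau_1 & 0 & \cdots \\ 0 & \tau_2 & \cdots\end{pmatrix}$.

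Next I would compute $G(z_0)$ and $G(z)$ explicitly in these coordinates. With $M = RSR^\top$, the entry $M_{11}$ and the row sum $M_{1\cdot} = M_{11}+M_{12}$ are quadratic/bilinear in the $\epsilon_k$'s (or linear in $|\delta|$ in the $n<n_1$ case), so $G(z) = M_{11}/M_{1\cdot}$ is a smooth function of $\delta$ near $0$. The key computation is the first-order behavior: I expect $G(z_0) - G(z) = c\,|\delta| + O(\delta^2)$ where the leading constant $c$ is controlled by the assumption~\ref{prop.sbm.assp.2}. Concretely, adding a block-$k$ node ($k\geq 2$) to $\cC_{n_1}=\cI_1$ increases the numerator $M_{11}$ only through the $\theta$-weighted cross term $S_{1k}\tau_1$ (this is the new within-cluster edge mass, roughly $2 S_{1k}\tau_1$ per unit of $\epsilon_k$ after accounting for both orderings), while it increases the denominator $M_{1\cdot}$ by the full degree contribution $\tilde S_{k\cdot} = \sum_j S_{kj}\tau_j$; the sign of $\frac{d}{d\epsilon_k}G$ at $0$ is then governed by comparing $S_{1k}\tau_1$ against $\tfrac12 \tilde S_{k\cdot}$ times the baseline ratio $M_{11}/M_{1\cdot} = S_{11}\tau_1/\tilde S_{1\cdot}$ — and condition~\eqref{eq:assp}, namely $S_{11}\min_{i>1}\tilde S_{i\cdot} > 2\max_{j>1} S_{1j}\tilde S_{1\cdot}$, is exactly what forces this derivative to be strictly negative, uniformly over $k\geq 2$. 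The $n<n_1$ direction is easier: removing block-1 nodes strictly decreases $M_{11}$ while decreasing $M_{1\cdot}$ proportionally less (since those nodes have edges to blocks $2,\dots,K$ too), so $G$ decreases; the assortativity implicit in~\eqref{eq:assp} again gives a clean bound.

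Then I would assemble the uniform statement. Since the per-unit decrease rate is bounded below by a positive constant $c_0 = c_0(S,\tau) > 0$ (taking the worst case over which blocks $k\geq 2$ receive the spurious nodes, which is where $\min_{i>1}\tilde S_{i\cdot}$ enters), and the second-order remainder is $O(\delta^2) = O(|\delta| \cdot |\delta|)$, there is a constant $\delta_0 > 0$ — determined by the gap in~\eqref{eq:assp} — such that for all $|\delta| \leq \delta_0$ the linear term dominates: $G(z_0) - G(z) \geq \tfrac{c_0}{2}|\delta| = \tfrac{c_0}{2N}|n-n_1|$. Setting $n' = n_1 + \lfloor \delta_0 N\rfloor$ gives $n'-n_1 = \Omega(N)$ and the claimed bound $G(z)-G(z_0) \leq -\frac{1}{N}\Omega(|n-n_1|)$ uniformly over $z \in \zeta' = \{h(\cC_n) : n\in[n']\}$, which is the lemma.

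The main obstacle I anticipate is the bookkeeping in the $n>n_1$ case: the spurious nodes can be split among several blocks $k\geq 2$ in arbitrary proportions $\epsilon_k$, and one must verify that the worst such split still yields a strictly negative derivative — this is where the $\min_{i>1}$ in~\eqref{eq:assp} is essential rather than, say, an average. A secondary subtlety is that $M_{11}$ is genuinely quadratic in $\epsilon = (\epsilon_2,\dots,\epsilon_K)$ (there is an $\epsilon^\top S_{[2:K,2:K]}\epsilon$ term from spurious-spurious edges), so one must confirm the quadratic part does not overturn the sign within the window $|\delta|\leq\delta_0$; but since that term is $O(\delta^2)$ and enters $M_{11}$ (the numerator) with a sign that only helps or hurts by $O(\delta^2)$, shrinking $\delta_0$ handles it. I would also double-check that the $F(\cdot)$ map is Lipschitz on the relevant region (i.e., $M_{1\cdot}$ stays bounded away from $0$, which holds since $\tau_1$ is bounded below and $S$ is fixed), so that first-order Taylor expansion of the ratio is legitimate.
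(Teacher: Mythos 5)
Your proposal is correct and follows essentially the same route as the paper's proof: the same case split driven by Eq.~\eqref{eq:prop_1_assump}, the same confusion-matrix parametrization with the spurious mass split among blocks $k \ge 2$, condition~\ref{prop.sbm.assp.2} supplying the strictly negative per-unit coefficient in the worst case over $k$, and a window $n'-n_1 = \lfloor \varepsilon N \rfloor$ chosen so that the quadratic spurious--spurious term cannot overturn the sign. The paper bounds the ratio algebraically rather than via a formal first-order expansion (a cosmetic difference), and the only small inaccuracy in your sketch is the $n<n_1$ case, where the exact identity $G(z)=R_{11}S_{11}/\tS_{1\cdot}$ gives the linear decrease directly without any appeal to~\eqref{eq:assp}.
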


\begin{proof}[\textbf{Proof of Lemma}~\ref{prop.sbm.differ}]
We have
\begin{equation}\label{eq:prop_1_F}
   G(z) = F(RSR^\top(z)) = \frac{[RSR^\top]_{11}}{[RSR^\top]_{1\cdot} } = \frac{\sum_{i,j} R_{1i} S_{ij} R_{1j}}{\sum_{i,j,k} R_{1i} S_{ij} R_{kj}}\,.
\end{equation}

According to~\eqref{eq:prop_1_assump} , we consider the following cases. 

\underline{Case 1}: $n = n_1$. Then, $\cC_n = \cI_1$ (i.e., $\{u \mid z(u) = 1\}  = \{u \mid g(u) = 1\}$), $z=z_0$. Note that $R_{1j} = 0$ for $j \neq 1$, $R_{21} = 0$, and  $R_{11} = \tau_1$.
By Eq \eqref{eq:prop_1_F}, 
\begin{equation*}
    G(z_0) = \frac{S_{11} \tau_1^2}{\tau_1 \sum_{j,k} S_{1j} R_{kj} }= \frac{S_{11} \tau_1^2}{\tau_1 \sum_{j} S_{1j} \tau_j} = \tau_1\frac{S_{11}}{\tS_{1\cdot}}\,.
\end{equation*}

\underline{Case 2}: $n < n_1$.  By Eq~\eqref{eq:prop_1_assump}, $\cC_n \subsetneq \cI_1$, that is, $\{u \mid z(u) = 1\} \subsetneq  \{u \mid g(u) = 1\}$. It follows then  $R_{1j} = 0$ for $j \neq  1$, and
\begin{equation}
    R_{11} = \frac{1}{N}\sum^N_{u = 1}\bone\{z(u) = 1, g(u) = 1\} = \frac{\sum^N_{u = 1}\bone\{z(u) = 1\}}{N} = \frac{|\cC_n|}{N} = \frac{n}{N}\,.
\end{equation}
We have
\begin{equation*}
\tau_1 - R_{11} = \frac{1}{N}   |n - n_1|\,,\mbox{ since } \tau_1 = \frac{n_1}{N}\,,
\end{equation*}
and
\begin{equation*}
    G(z) = \frac{S_{11} R_{11}^2}{R_{11} \sum_{j} S_{1j} \tau_j} = R_{11}\frac{S_{11}}{\tS_{1\cdot}} \,.
\end{equation*}
Therefore,
\begin{equation*}
G(z) - G(z_0) = (R_{11} - \tau_1) \frac{S_{11}}{\tS_{1\cdot}}  \leq  - \frac{1}{N}  \Omega( |n - n_1|)\,.
\end{equation*}

\underline{Case 3}: $ n > n_1 $. By Eq~\eqref{eq:prop_1_assump}, $\cC_n \supsetneq \cI_1$, that is, $\{u \mid z(u) = 1\} \supsetneq  \{u \mid g(u) = 1\}$. Hence $R_{21} = 0$, $R_{11} = \tau_1$,  and 
\begin{align}\label{eq:R1b}
    \sum_{j > 1}R_{1j} &=  \frac{1}{N}\sum^N_{u = 1} \sum_{b > 1}\bone\{z(u) = 1, g(u) = j\} = \frac{1}{N}\sum^N_{u = 1} \bone\{z(u) = 1, g(u) \neq 1\} \notag\\ 
& =\frac{1}{N}( \sum^N_{u = 1}\bone\{z(u) = 1\} - \sum^N_{u = 1}\bone\{g(u) = 1\}) = \frac{1}{N}( |\cC_n| - |\cI_1|)
= \frac{1}{N} |n - n_1|\,.  
\end{align}

We have 
\begin{align}\label{eq:prop_1_n_geq}
     G(z) &= \frac{\tau_1^2 S_{11} + 2 \tau_1 \sum_{j > 1} S_{1j} R_{1j} + \sum_{i,j > 1}R_{1i} S_{ij} R_{1j} }{\sum_{j} R_{1i}S_{ij}\tau_j} \notag\\
    & = \frac{\tau_1^2 S_{11} + 2 \tau_1 \sum_{j > 1} S_{1j} R_{1j} + \sum_{i,j > 1}R_{1i} S_{ij} R_{1j} }{\tau_1 \tS_{1\cdot} + \sum_{i > 1} \tS_{i\cdot} R_{1i}}\,. 
\end{align}

Substituting,
\begin{equation*}
   G(z) -  G(z_0) = \frac{\tau_1^2 S_{11} + 2 \tau_1 \sum_{j > 1} S_{1j} R_{1j} + \sum_{i,j > 1}R_{1i} S_{ij} R_{1j} }{\tau_1 \tS_{1\cdot} + \sum_{i > 1} \tS_{i\cdot} R_{1i}} - \frac{\tau_1 S_{11}}{\tS_{1\cdot}}\,.
\end{equation*}
For a small but fixed $\varepsilon > 0$ (to be specified later), by choosing $n'$ satisfying $n' - n_1 = \lfloor N \varepsilon \rfloor$, we can restrict  $\max_{j > 1} R_{1j} \leq \varepsilon$ for all $N>1/\varepsilon$ and $n \in [n']$ according to Eq~\eqref{eq:R1b}.
Then we have 
\begin{align}\label{eq:prop_1_n_geq_differ}
   &G(z) -  G(z_0)  \notag\\ &\leq  \frac{\tau_1^2 S_{11} + 2\tau_1 \max_{j > 1} S_{1j}\sum_{j > 1}R_{1j} + \varepsilon \max_{i, j > 1} S_{ij}  \sum_{j > 1}R_{1j} }{\tau_1 \tS_{1\cdot} + \min_{i > 1}\tS_{i\cdot} \sum_{j > 1}R_{1j} } - \frac{\tau_1 S_{11}}{\tS_{1\cdot}}\notag\\
&= \frac{\left[\tS_{1\cdot}(2\tau_1  \max_{j > 1} S_{1j} + \varepsilon \max_{i,j > 1} S_{ij}) - \tau_1 S_{11}\min_{i > 1}\tS_{i\cdot} \right] \sum_{j > 1}R_{1j} }{\tS_{1\cdot} (\tau_1 \tS_{1\cdot} + \min_{i > 1}\tS_{i\cdot} \sum_{j > 1}R_{1j})}\notag\\
&= \frac{\left[\tS_{1\cdot}(2  \max_{j > 1} S_{1j} + \varepsilon \tau_1^{-1}\max_{i,j > 1} S_{ij}) -  S_{11}\min_{i > 1}\tS_{i\cdot} \right]  }{ \tS_{1\cdot}(\tS_{1\cdot} + \tau_1^{-1}\min_{i > 1}\tS_{i\cdot} \sum_{j > 1}R_{1j})  } (\frac{1}{N}   |n - n_1|) 
\end{align}
According to assumption~\ref{prop.sbm.assp.2}, there exists a constant $c>0$ such that
\begin{equation}\label{eq:prop_1_assp1_eq}
S_{11}\min_{i > 1}\tS_{i\cdot} - c >  2  \max_{j > 1} S_{1j} \tS_{1\cdot}\,.    
\end{equation}
Let 
\begin{equation*}
    \varepsilon = \frac{c\tau_1}{2\tS_{1\cdot} \max_{i,j > 1} S_{ij}}\,,
\end{equation*}
it follows then 
\begin{equation*}
    \tS_{1\cdot}(2  \max_{j > 1} S_{1j} + \varepsilon \tau_1^{-1} \max_{i,j > 1} S_{ij}) - S_{11}\min_{i > 1}\tS_{i\cdot} < -\frac{c}{2} <0\,,
\end{equation*}
and~\eqref{eq:prop_1_n_geq_differ} is upper bounded by
\begin{equation}\label{eq:lemma_1_upper_bound}
 -\frac{c/2}{ \tS_{1\cdot}(\tS_{1\cdot} + \tau_1^{-1}\min_{i > 1}\tS_{i\cdot} K\varepsilon)  } (\frac{1}{N}   |n - n_1|) =     -\frac{1}{ \frac{2 \tS_{1\cdot}^2}{c} + \frac{K \min_{i > 1}\tS_{i\cdot} } {\max_{i,j > 1} S_{ij}} } (\frac{1}{N}   |n - n_1|) 
 \end{equation}

Therefore, 
$$ G(z) -  G(z_0) < - \frac{1}{N}  \Omega( |n - n_1|) \mbox{ for all } n \in\{n_1 + 1, \ldots, n'\} \,.$$
\end{proof}

The same result can be shown for a DC-SBM with $K$ blocks by defining a confusion tensor, adding a dimensionality for the nodes; all the other notations remain the same unless otherwise specified. Define the confusion tensor $T \in \{0,\frac{1}{N}\}^{2\times K \times N}$ as
\begin{equation}
    T_{abu} (z, g) = \frac{1}{N} \bone\{z(u) = a, g(u) = b\}\,.
\end{equation}
 Then, define the degree-corrected confusion matrix $\tR \in [0,1]^{2 \times K}$,
\begin{equation}
\tR_{a,b}(z, g) =    \sum^{N}_{u = 1} \theta_u T_{abu} (z,g) \,.
\end{equation}
Let $T$ abbreviate $T(z,g)$, and $\tR$ abbreviate $\tR(z,g)$. 
Now we have
\begin{equation}
    [\tR^\top\bone]_b = \sum_{a,u} \theta_u T_{abu} = \frac{n_b}{N} = \tau_b\,.
\end{equation}
Also, 
\begin{equation}
    \frac{1}{\mu_N} \mathbb{E} [O_{ab}\mid g, \theta] = \sum^{K}_{i,j = 1} \sum^{N}_{u,v = 1} T_{a,i,u} \theta_{u} S_{ij}  \theta_{v} T_{b,j,v} = [\tR S \tR^\top]_{ab}\,.
\end{equation}

The following lemma is similar to Lemma~\ref{prop.sbm.differ} but extends the result to the DC-SBM.

\begin{lemma}\label{prop.dcsbm.differ}
Suppose the assumptions~\ref{prop.dcsbm.assp.3} and~\ref{prop.sbm.assp.2} hold under the DC-SBM. Define $\tG(z) = F(\tR S \tR^\top(z))$. Given a large enough $N$ and a sequence $\{\cC_n\}_{n\in [N]}$,
there exists $n'$ satisfying $n' - n_1 = \Omega(N)$ such that 
    $$\tG(z) -\tG(z_0) \leq - \frac{1}{N} \Omega(|n - n_1|)$$
     uniformly for all $z$ in the set $\zeta' = \{z=h(\cC_n) \mid n \in [n']\}$.
\end{lemma}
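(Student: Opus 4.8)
The plan is to mirror the proof of Lemma~\ref{prop.sbm.differ} almost verbatim, replacing the confusion matrix $R$ by the degree-corrected confusion matrix $\tR$ and replacing $RSR^\top$ by $\tR S \tR^\top$. The key structural fact that makes this work is that under the identifiability constraint $\sum_{u\in\cI_k}\theta_u = n_k$ and assumption~\ref{prop.dcsbm.assp.3}, the row sums of $\tR$ behave, up to bounded multiplicative constants, exactly like those of $R$; in particular $\tR^\top\bone = \tau$ still holds exactly, so the ``endpoint'' value $\tG(z_0)$ is literally unchanged: $\tG(z_0) = \tau_1 S_{11}/\tS_{1\cdot}$, because when $\cC_{n_1}=\cI_1$ we have $\tR_{1j}=0$ for $j\neq 1$, $\tR_{21}=0$, and $\tR_{11}=\sum_{u\in\cI_1}\theta_u/N = n_1/N = \tau_1$.

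First I would set up the three cases exactly as in Lemma~\ref{prop.sbm.differ}, using the event $\cC_{n_1}=\cI_1$ from Proposition~\ref{prop:aPPR_clustering} so that Eq~\eqref{eq:prop_1_assump} holds. In Case~1 ($n=n_1$) compute $\tG(z_0)=\tau_1 S_{11}/\tS_{1\cdot}$ as above. In Case~2 ($n<n_1$), since $\cC_n\subsetneq\cI_1$ we again get $\tR_{1j}=0$ for $j\neq 1$ and $\tR_{11}=\frac{1}{N}\sum_{u\in\cC_n}\theta_u$; here the one genuinely new point is that $\tau_1 - \tR_{11} = \frac{1}{N}\sum_{u\in\cI_1\setminus\cC_n}\theta_u$, which by assumption~\ref{prop.dcsbm.assp.3} lies in $[\,L_\theta|n-n_1|/N,\ U_\theta|n-n_1|/N\,]$, so $\tau_1-\tR_{11} = \frac{1}{N}\Theta(|n-n_1|)$ rather than exactly $\frac{1}{N}|n-n_1|$. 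Since $\tG(z) = \tR_{11}S_{11}/\tS_{1\cdot}$ in this case, the difference $\tG(z)-\tG(z_0) = (\tR_{11}-\tau_1)S_{11}/\tS_{1\cdot} \le -\frac{1}{N}\Omega(|n-n_1|)$ follows, the constant now carrying a factor $L_\theta$. In Case~3 ($n>n_1$), $\tR_{21}=0$, $\tR_{11}=\tau_1$, and $\sum_{j>1}\tR_{1j} = \frac{1}{N}\sum_{u\in\cC_n\setminus\cI_1}\theta_u \le \frac{U_\theta}{N}|n-n_1|$; I would then reproduce the chain of inequalities in Eq~\eqref{eq:prop_1_n_geq}--\eqref{eq:lemma_1_upper_bound}, replacing $R_{1j}$ by $\tR_{1j}$ everywhere, noting the algebra is identical because $\tG$ has the same rational-function form in $\tR$ as $G$ has in $R$. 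I fix a small $\varepsilon>0$ and choose $n'$ with $n'-n_1=\lfloor N\varepsilon/U_\theta\rfloor = \Omega(N)$ so that $\max_{j>1}\tR_{1j}\le\varepsilon$ uniformly for $n\in[n']$, then invoke assumption~\ref{prop.sbm.assp.2} exactly as before to make the bracketed coefficient strictly negative, giving $\tG(z)-\tG(z_0)\le -\frac{1}{N}\Omega(|n-n_1|)$ uniformly on $\zeta'$.

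I do not expect any serious obstacle here: the only substantive difference from Lemma~\ref{prop.sbm.differ} is that the clean identity $\tau_1 - R_{11} = |n-n_1|/N$ degrades to a two-sided bound $\tau_1-\tR_{11} = \frac{1}{N}\Theta(|n-n_1|)$ and similarly for $\sum_{j>1}\tR_{1j}$, which is harmless because the statement is only asserting an $\Omega$-bound. The mildly delicate bookkeeping point is to make sure the lower bound $L_\theta$ enters the $\Omega(\cdot)$ in Cases~2 and~3 (so the implied constants are legitimately positive and $N$-independent) and that the upper bound $U_\theta$ is used only to control $\varepsilon$ and $n'$; both are immediate from assumption~\ref{prop.dcsbm.assp.3}. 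Everything else — the case split driven by Eq~\eqref{eq:prop_1_assump}, the rational-function manipulation, and the choice of $\varepsilon$ via assumption~\ref{prop.sbm.assp.2} — transfers without change.
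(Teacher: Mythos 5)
Your proposal is correct and follows essentially the same route as the paper's proof: the same three-case split driven by Eq~\eqref{eq:prop_1_assump}, the identity $\tR^\top\bone=\tau$ from the identifiability constraint so that $\tG(z_0)=\tau_1 S_{11}/\tS_{1\cdot}$, the lower bound $L_\theta$ supplying the $\Omega(|n-n_1|)$ in Cases 2 and 3, and the upper bound $U_\theta$ (via $\tR_{1j}\le U_\theta R_{1j}$) only entering the control of $\max_{j>1}\tR_{1j}$ through the choice of $\varepsilon$ and $n'$. Your rescaling $n'-n_1=\lfloor N\varepsilon/U_\theta\rfloor$ is just an equivalent bookkeeping variant of the paper's choice of $\varepsilon$ with $U_\theta$ in its denominator.
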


\begin{proof}[\textbf{Proof of Lemma}~\ref{prop.dcsbm.differ}]

We have 
\begin{equation}\label{eq:prop_2_F}
   \tG(z) = F(\tR S \tR^\top(z)) = \frac{[\tR S \tR^\top]_{11}}{[\tR S \tR^\top]_{1\cdot}} = \frac{\sum_{i,j} \tR_{1i} S_{ij} \tR_{1j}}{\sum_{i,j,k} \tR_{1i} S_{ij} \tR_{kj}}\,.
\end{equation}
Similar to the proof for Lemma~\ref{prop.sbm.differ}, we consider the three cases in Eq~\eqref{eq:prop_1_assump}.

\underline{Case 1}: $n = n_1$. Again, we have $\cC_n = \cI_1$ and $z=z_0$, so $\tR_{1j} = 0$ for $j\neq 1$, $\tR_{21} = 0$, and $\tR_{11} = \tau_1$.
By Eq~\eqref{eq:prop_2_F}, we have
\begin{equation}\label{eq:prop_2_n_eq}
    \tG(z_0)  = \tau_1\frac{S_{11}}{\tS_{1\cdot}}\,.
\end{equation}

\underline{Case 2}: $n < n_1$.  By Eq~\eqref{eq:prop_1_assump}, $\cC_n \subsetneq \cI_1$ (i.e., $\{u \mid z(u) = 1\} \subsetneq  \{u \mid g(u) = 1\}$), so  $\tR_{1j} = 0$ for $j \neq  1$, and
\begin{multline}\label{eq:prop_2_distant_1}
   \tau_1 -   \tR_{11}   =  \frac{n_1}{N} - \sum^{N}_{u = 1} \frac{\theta_u}{N} \bone\{z(u) = 1, g(u) = 1\} \\
   = \sum_{u \in \cI_1} \frac{\theta_u}{N} - \sum_{u \in \cC_n} \frac{\theta_u}{N} = \frac{
   1}{N} \sum_{u \in \cI_1 \setminus \cC_n} \theta_{u} \\ 
   \geq \frac{L_\theta }{N} \sum_{u \in \cI_1 \setminus \cC_n} 1 = \frac{L_\theta }{N}|n - n_1| \,.
\end{multline}
The last inequality holds by assumption~\ref{prop.dcsbm.assp.3}. Also, we have 
\begin{equation*}
    \tG(z) =  \tR_{11}\frac{S_{11}}{\tS_{1\cdot}} \,.
\end{equation*}
Then,
\begin{equation*}
\tG(z) - \tG(z_0) = (\tR_{11} - \tau_1) \frac{S_{11}}{\tS_{1\cdot}}  \leq  - \frac{1}{N}  \Omega( |n - n_1|)\,.
\end{equation*}

\underline{Case 3}: $ n > n_1 $. By Eq~\eqref{eq:prop_1_assump}, $\cC_n \supsetneq \cI_1$ (i.e., $\{u \mid z(u) = 1\} \supsetneq  \{u \mid g(u) = 1\}$), so  $\tR_{21} = 0$, $\tR_{11} = \tau_1$, and 
\begin{equation*}
\sum_{j > 1}\tR_{1j} =  \frac{1}{N} \sum^N_{u = 1} \theta_u \bone\{z(u) = 1, g(u) \neq 1\} = \frac{1}{N} \sum_{u \in \cC_n\setminus \cI_1} \theta_u \geq \frac{L_\theta}{N}|n - n_1| \,.
\end{equation*}
Similar to Eq~\eqref{eq:prop_1_n_geq},
\begin{equation*}
   \tG(z) = \frac{\tau_1^2 S_{11} + 2 \tau_1 \sum_{j > 1} S_{1j} \tR_{1j} + \sum_{i,j > 1}\tR_{1i} S_{ij} \tR_{1j} }{\tau_1 \tS_{1\cdot} + \sum_{i > 1} \tS_{i\cdot} \tR_{1i}}\,.
\end{equation*}
Now, we have 
\begin{equation*}
   \tG(z) -  \tG(z_0) = \frac{\tau_1^2 S_{11} + 2 \tau_1 \sum_{j > 1} S_{1j} \tR_{1j} + \sum_{i,j > 1}\tR_{1i} S_{ij} \tR_{1j} }{\tau_1 \tS_{1\cdot} + \sum_{i > 1} \tS_{i\cdot} \tR_{1i}} - \frac{\tau_1 S_{11}}{\tS_{1\cdot}}\,.
\end{equation*}
By assumption~\ref{prop.dcsbm.assp.3},
\begin{equation*}
\tR_{ij} = \frac{1}{N}\sum_{u \in [N]} \theta_u \bone\{z(u) = i, g(u) = j\} \\ \leq \frac{U_\theta}{N}\sum_{u \in [N]} \bone\{z(u) = i, g(u) = j\}  =  U_\theta R_{ij}\,.    
\end{equation*} 
Here $R$ is the original confusion matrix defined in Eq~\eqref{eq:confusion_sbm}. By the same argument as in Lemma~\ref{prop.sbm.differ}, we can find a fixed $\varepsilon > 0$, such that by choosing $n'-n_1=\lfloor N\varepsilon\rfloor$, we can restrict  $\max_{j > 1} R_{1j} \leq \varepsilon$ for all $N > 1/\varepsilon$ and $n \in [n']$ according to Eq~\eqref{eq:R1b}.  Therefore, $$\max_{j > 1} \tR_{1j} \leq U_\theta \varepsilon\,.$$ 
Similar to Eq~\eqref{eq:prop_1_n_geq_differ},
\begin{equation}\label{eq:prop_2_n_geq_differ}
\tG(z) -  \tG(z_0) 
\leq \frac{[\tS_{1\cdot}(2  \max_{j > 1} S_{1j} + \varepsilon  \tau_1^{-1} U_\theta \max_{i,j > 1} S_{ij}) -  S_{11}\min_{i > 1}\tS_{i\cdot} ] \sum_{j > 1}\tR_{1j} }{\tS_{1\cdot}(\tS_{1\cdot} + \tau_1^{-1}\min_{i > 1}\tS_{i\cdot} \sum_{j > 1}\tR_{1j}) }.     
\end{equation}
Let 
\begin{equation*}
    \varepsilon = \frac{c \tau_1}{2 U_\theta \tS_{1\cdot} \max_{i,j > 1} S_{ij}}\,.
\end{equation*}
According to Eq~\eqref{eq:prop_1_assp1_eq}, we have 
\begin{equation*}
    \tS_{1\cdot}(2  \max_{j > 1} S_{1j} + \varepsilon  \tau_1^{-1} U_\theta \max_{i,j > 1} S_{ij}) -  S_{11}\min_{i > 1}\tS_{i\cdot} < -\frac{c}{2} <0\,.
\end{equation*}
Eq~\eqref{eq:prop_2_n_geq_differ} has an upper boundary similar to~\eqref{eq:lemma_1_upper_bound} that is
\begin{equation}
   -\frac{1}{ \frac{2 \tS_{1\cdot}^2}{c} + \frac{K \min_{i > 1}\tS_{i\cdot} } {\max_{i,j > 1} S_{ij}} } (\frac{L_\theta}{N}   |n - n_1|) \,.
   \end{equation}
Therefore, 
$$ \tG(z) -  \tG(z_0) < - \frac{1}{N}  \Omega( |n - n_1|) \mbox{ for } n \in\{n_1 + 1, \ldots, n'\} \,.$$
\end{proof}

\subsection{Proof of Theorem~\ref{thm:min}}
\label{app.sec:thm_min}

 The proof of the main theorem relies on the optimality  properties of the population version we derived in the previous section and concentration inequalities in following lemma.

\begin{lemma}\label{lemma:inequalities}
Let $X(z) = O(z)/\mu_N - \tR S \tR^\top (z)$,  $\| X\|_\infty$ denotes $\max_{ij} |X_{ij}|$. Define the constant $C_S= U_\theta^2\max_{ab} S_{ab}$. Then,
\begin{equation}\label{eq:x_con}
    \mathbb{P}(\max_{z \in \zeta}\| X(z)\|_{\infty} \geq \varepsilon)  \leq 8 N \exp \left(- \frac{ \varepsilon^2 \mu_N}{8  C_S}\right) \mbox{ for } \varepsilon \leq 6  C_S\,,
\end{equation}
\begin{equation}\label{eq:x_con_2}
    \mathbb{P}(\max_{z \in \zeta : \|z - z_0\|_1 = m}\| X(z)\|_{\infty} \geq \varepsilon)  \leq 16 \exp \left(- \frac{ \varepsilon^2 \mu_N}{8 C_S}\right) \mbox{ for } \varepsilon \leq 6  C_S\,,
\end{equation}
and
\begin{equation}\label{eq:x_differ_con}
    \mathbb{P}(\max_{z \in \zeta: \|z - z_0\|_1 = m}\| X(z) - X (z_0) \|_{\infty} \geq \varepsilon) \leq 16 \exp \left( - \frac{ N }{16 m C_S} \varepsilon^2 \mu_N  \right) \mbox{ for } \varepsilon \leq 12 m C_S/ N \,.
\end{equation}

\end{lemma}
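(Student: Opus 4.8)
\textbf{Proof plan for Lemma~\ref{lemma:inequalities}.} The plan is to treat each of the four entries of $X(z)$, for a \emph{fixed} label assignment $z$, as a centered sum of independent Bernoulli variables, apply a Bernstein-type inequality to obtain the exponential rate, and then pass to the stated maxima by a union bound whose cardinality is exactly what distinguishes \eqref{eq:x_con}, \eqref{eq:x_con_2} and \eqref{eq:x_differ_con}. Concretely, for fixed $z$ the entry $O_{ab}(z)$ is, up to a harmless factor of $2$ on the diagonal $a=b$ coming from the symmetry $A_{uv}=A_{vu}$, a sum over unordered pairs of independent $\mathrm{Bernoulli}(\theta_u\theta_v\rho_N S_{g(u)g(v)})$ variables, and $\frac{1}{\mu_N}\mathbb{E}[O(z)\mid g,\theta]=\tR S\tR^\top(z)$, so $X_{ab}(z)$ is conditionally centered. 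Using the identifiability constraint $\sum_u\theta_u=N$ one gets $\sum_i\tR_{ai}\le 1$, hence $[\tR S\tR^\top]_{ab}\le\max_{ij}S_{ij}\le C_S$ (recall $U_\theta\ge 1$ because each block average of $\theta$ equals $1$), so both $\mathbb{E}[O_{ab}(z)]$ and $\mathrm{Var}(O_{ab}(z))$ are of order $\mu_N C_S$, while every summand of $O_{ab}(z)/\mu_N$ lies in $[0,2/\mu_N]$.

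With these variance and range bounds, Bernstein's inequality gives, for each fixed $z$ and each $(a,b)$, a bound of the form $2\exp(-c\,\varepsilon^2\mu_N/C_S)$; the restriction $\varepsilon\le 6C_S$ is precisely what makes the linear Bernstein term subordinate to the variance term, so that the constant can be taken as in \eqref{eq:x_con}. A union bound over the two signs and the four entries contributes the factor $8$. For \eqref{eq:x_con} we then union over all $z\in\zeta$, and since $|\zeta|=N$ this yields the prefactor $8N$. For \eqref{eq:x_con_2} the only change is the index set: restricting to $z\in\zeta$ with $\|z-z_0\|_1=m$, the nestedness of the cluster sequence $\{\cC_n\}$ together with the good event of Proposition~\ref{prop:aPPR_clustering} (under which $\cC_n\subsetneq\cI_1$ for $n<n_1$, $\cC_n=\cI_1$ for $n=n_1$, and $\cC_n\supsetneq\cI_1$ for $n>n_1$, cf.\ \eqref{eq:prop_1_assump}) forces $\|h(\cC_n)-z_0\|_1=|n-n_1|$, so at most the two indices $n=n_1\pm m$ produce such a $z$; the union bound therefore contributes $2\times 8=16$.

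For \eqref{eq:x_differ_con} the gain is a \emph{smaller variance} for the increment. Writing $D=\{u:z(u)\neq z_0(u)\}$ with $|D|=m$, the difference $O_{ab}(z)-O_{ab}(z_0)$ only involves pairs $\{u,v\}$ meeting $D$, and each such pair enters with a coefficient in $\{-2,-1,0,1,2\}$, so $\mathrm{Var}(O_{ab}(z)-O_{ab}(z_0))\le 4\sum_{u\in D}\cd_u$. Since $\cd_u\le\theta_u\rho_N\max_{ij}S_{ij}\,N\le U_\theta\rho_N\max_{ij}S_{ij}\,N$, this is at most $4m\,C_S\,\mu_N/N$, hence the normalized increment has variance proxy $\lesssim m C_S/(N\mu_N)$ while each summand still lies in $[-2/\mu_N,2/\mu_N]$. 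Bernstein's inequality with this (much smaller) variance, valid in the range $\varepsilon\le 12mC_S/N$, produces the rate $-\frac{N}{16mC_S}\varepsilon^2\mu_N$, and the prefactor $16$ is again $2$ signs $\times$ $4$ entries $\times$ the $\le 2$ admissible values of $z$. In each case the population term $\tR S\tR^\top(z)-\tR S\tR^\top(z_0)$ is handled directly through the confusion tensor and does not affect the rate.

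The main obstacle is the variance bookkeeping behind \eqref{eq:x_differ_con}: one must carefully identify which node pairs contribute to $X(z)-X(z_0)$ and show their total conditional variance scales like $m/N$ rather than like $1$, since this is exactly what produces the extra factor $N/m$ in the exponent. The two remaining technical points are routine: the symmetry of $A$, which is dealt with by grouping $A_{uv}$ with $A_{vu}$ and only costs the stated factor of $2$, and the data-dependence of $\zeta$, which is why the union bounds for \eqref{eq:x_con_2}--\eqref{eq:x_differ_con} are taken on the good event of Proposition~\ref{prop:aPPR_clustering}, where the chain $\{\cC_n\}$ has the monotone structure \eqref{eq:prop_1_assump} that limits to two the number of $z\in\zeta$ at a given Hamming distance from $z_0$.
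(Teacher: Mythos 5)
Your plan is correct and follows essentially the same route as the paper's proof: Bernstein's inequality applied entrywise to $\mu_N X_{ab}(z)$ for fixed $z$ (with $\mathrm{Var}(O_{ab})\lesssim \mu_N C_S$ and, for the increment, variance restricted to pairs meeting the discrepancy set so that it scales like $\tfrac{m}{N}\mu_N C_S$), the ranges $\varepsilon\le 6C_S$ and $\varepsilon\le 12mC_S/N$ taming the linear Bernstein term, and union bounds of cardinality $N$, $2$, and $2$ yielding the prefactors $8N$, $16$, $16$. Your justification of the cardinality-two claim via the nested cluster structure on the good event of Proposition~\ref{prop:aPPR_clustering} is, if anything, more careful than the paper's appeal to Eq~\eqref{eq:label_equation} alone; otherwise the arguments coincide.
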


\begin{proof}
These are well-known inequalities that can be proved by Bernstein's inequality. For the sake of completeness, we present the details here. We have
\begin{equation*}
    \frac{1}{\mu_N} \mathbb{E} [O_{ab} | g,\theta]  = [\tR S \tR^\top]_{ab}\,,
\end{equation*}
thus
\begin{equation*}
    \mu_N X_{ab} = O_{ab} - \mathbb{E} [O_{ab} | g, \theta]\,.
\end{equation*}
Also, we have
\begin{equation*}
    O_{ab} = \sum_{i \neq j}  \bone\{ z(i) = a, z(j) = b\} A_{ij} = 2\sum_{i < j}  \bone\{ z(i) = a, z(j) = b\} A_{ij}\,.
\end{equation*}
$\mu_N X_{ab}$ is a sum of independent zero mean random variables bounded by 1. By Bernstein's inequality, 
\begin{align*}
 \mathbb{P}(|\mu_N X_{ab}| \geq \varepsilon \mu_N ) & \leq 2 \exp \left( \frac{- \varepsilon^2 \mu_N^2}{2(\mbox{Var}(\mu_N X_{ab}) + \varepsilon \mu_N/3 )}\right) \\ & = 2 \exp \left( \frac{- \varepsilon^2 \mu_N^2}{2(\mbox{Var}(O_{ab}) + \varepsilon \mu_N/3 )}\right) \,.   
\end{align*}
Note that $A_{ij}  \overset{\mbox{ind.}}{\sim} \mbox{Bernoulli}\left( \theta_i \theta_j B_{g(i)g(j)} \right)$ and $B_{ij} = \rho_N S_{ij}$, it follows
\begin{align*}
    & \mbox{Var}(A_{ij}) =  \theta_i \theta_j \rho_N S_{g(i)g(j)} -  (\theta_i \theta_j \rho_N S_{g(i)g(j)})^2 \leq \rho_N U_\theta^2 \max S_{ij} =  \rho_N  C_S \,;\\
    & \mbox{Var}(O_{ab}) \leq  4 \frac{N(N  - 1)}{2} \rho_N   C_S \leq 2N^2 \rho_N   C_S = 2 \mu_N   C_S\,.
\end{align*}

Since $\varepsilon \leq 6  C_S$, for fixed $a, b, z$,
\begin{equation*}
  \mathbb{P}(|X_{ab}| \geq \varepsilon  )\leq 2 \exp \left(- \frac{ \varepsilon^2 \mu_N}{8 C_S}\right) \,.  
\end{equation*}
Therefore, 
\begin{equation}\label{eq:x_con_3}
    \mathbb{P}(\|X(z) \|_\infty \geq \varepsilon) \leq  8 \exp \left(- \frac{ \varepsilon^2 \mu_N}{8  C_S}\right) \mbox{ for a fixed $z$.}
\end{equation}
We have $ |\zeta| = N$, which establishes Eq~\eqref{eq:x_con}. Moreover, according to Eq~\eqref{eq:label_equation}, we have $| \{z \in \zeta : \|z - z_0\|_1 = m\}| \leq 2$, which establishes Eq~\eqref{eq:x_con_2}.

Now, we assume $z(m + 1) = z_0(m+1), \ldots, z(N) = z_0(N)$. Then, 
\begin{align*}
  O_{ab}(z)  -  O_{ab}(z_0)   &= 2\sum^{m}_{i < j}  (\bone\{ z(i) = a, z(j) = b\} - \bone\{ z_0(i) = a, z_0(j) = b\}) A_{ij} \\
& + 2\sum^{m}_{i = 1}\sum^{N}_{j = m + 1} (\bone\{ z(i) = a, z(j) = b\} - \bone\{ z_0(i) = a, z_0(j) = b\}) A_{ij} \,;\\
 \mbox{Var}( O_{ab}(z)  -  O_{ab}(z_0)) &\leq 4\left[\frac{m (m - 1)}{2} + m (N - m)\right] \rho_N C_S \leq 4mN\rho_N C_S = 4\frac{m}{N} \mu_N C_S \,.   
\end{align*}
Based on the Bernstein inequality, we have 
\begin{align*}
 \mathbb{P}(\mu_N |X_{ab}(z) - X_{ab}(z_0)| \geq \varepsilon \mu_N ) & \leq 2 \exp \left( \frac{- \varepsilon^2 \mu_N^2}{2(\mbox{Var}(\mu_N (X_{ab}(z) - X_{ab}(z_0)) + \varepsilon \mu_N/3 )}\right) \notag\\
&\leq 2 \exp \left( \frac{- \varepsilon^2 \mu_N^2}{2(\mbox{Var}(\mu_N (O_{ab}(z) - O_{ab}(z_0)) + \varepsilon \mu_N/3 )}\right) \notag\\
&\leq 2 \exp \left( \frac{- \varepsilon^2 N \mu_N}{2(4 m C_S + \varepsilon N/3 )}\right)\,.   
\end{align*}
For $\varepsilon \leq 12 m C_S/ N$ and fixed $a,b, z, z_0$,
\begin{equation*}
  \mathbb{P}(|X_{ab}(z) - X_{ab}(z_0)| \geq \varepsilon )    \leq 2 \exp \left( - \frac{ N }{16 m C_S} \varepsilon^2 \mu_N  \right)  \,.
\end{equation*}
Then,
\begin{equation*}
  \mathbb{P}(\|X(z) - X(z_0)\|_\infty \geq \varepsilon )    \leq 8 \exp \left( - \frac{ N }{16 m C_S} \varepsilon^2 \mu_N  \right)  \,,
\end{equation*}
which establishes Eq~\eqref{eq:x_differ_con}.
\end{proof}

The proof of the main theorem combines the population version result in Lemma~\ref{prop.dcsbm.differ}, which holds under the high probability event established in Proposition~\ref{prop:aPPR_clustering}, and Lemma~\ref{lemma:inequalities}, which controls noise through concentration. 


\begin{proof}[\textbf{Proof of Theorem}~\ref{thm:min}]

According to Eq~\eqref{eq:conduct_to_F}, our goal is same as showing that there exists $n' - n_1 = \Omega(N)$ such that 
\begin{equation}\label{eq:thm_result}
    F\left( \frac{O(z)}{\mu_N}\right) - F\left( \frac{O(z_0)}{\mu_N}\right) \leq - \frac{1}{N} \Omega_P(\|z - z_0\|_1),
\end{equation}
where $z = \zeta'$ and $\zeta' =  \{ z = h(\cC_n)\mid n \in [n']\}$ for $\lambda_n$ satisfying~\ref{prop.dcsbm.assp.4}. Note that $\|z - z_0\|_1 = |n - n_1|$ according to the definition in Eq~\eqref{eq:label_equation}.

The proof technique is similar to \cite{bickel2015correction}. By Taylor expansion, 
\begin{equation}
   F\left( \frac{O(z)}{\mu_N}\right) - F(\tR S \tR^\top (z) ) 
    = \left. \frac{\partial F}{\partial M} \right|_{M = \tR S \tR^\top (e)} \mbox{vec}(X(z)) + \mathcal{O}(\|X(z)\|_\infty^2)\,;  
\end{equation}
and     
\begin{equation*}
    F\left( \frac{O(z_0)}{\mu_N}\right) - F(\tR S \tR^\top (z_0) ) 
    = \left. \frac{\partial F}{\partial M} \right|_{M = \tR S \tR^\top (z)} \mbox{vec}(X(z_0)) + \mathcal{O}(\|X(z_0)\|_\infty^2)\,,
\end{equation*}
where $\frac{\partial F}{\partial M}$ is the partial derivative with respect to the vectorized $M$.

$\frac{\partial F}{\partial M}$ is continuous with respect to $M$, so
\begin{equation*}
  \left. \frac{\partial F}{\partial M} \right|_{M = \tR S \tR^\top (z)} =  \left. \frac{\partial F}{\partial M} \right|_{M = \tR S \tR^\top (z_0)}  + \mathcal{O} ( \|\tR S \tR^\top (z) - \tR S \tR^\top (z_0)\|_\infty)\,.
\end{equation*}

Note that $ \|\tR S \tR^\top (z) - \tR S \tR^\top (z_0)\|_\infty = \mathcal{O} (\|z - z_0\|_1/ N )$. Then, we have
\begin{multline*}
   F\left( \frac{O(z)}{\mu_N}\right) - F(\tR S \tR^\top (z) )  -     F\left( \frac{O(z_0)}{\mu_N}\right) + F(\tR S \tR^\top (z_0) ) \\
   =  \left. \frac{\partial F}{\partial M} \right|_{M = \tR S \tR^\top (z_0)} \mbox{vec}(X(z) - X(z_0)) + \mathcal{O} \left(\frac{\|z - z_0\|_1}{N} \|X(z)\|_\infty \right) + \mathcal{O}(\|X(z)\|_\infty^2) + \mathcal{O}(\|X(z_0)\|_\infty^2)\,.
\end{multline*}

Therefore, there exists positive constants $C_1$, $C_2$, $C_3$, $C_4$ such that
\begin{multline*}
F\left( \frac{O(z)}{\mu_N}\right) -  F\left( \frac{O(z_0)}{\mu_N}\right) \\
\leq  F(\tR S \tR^\top (z) ) - F(\tR S \tR^\top (z_0) )  
+ C_1\|X(z) - X(z_0)\|_\infty \\ + C_2 \|X(z)\|_\infty \frac{\|z - z_0\|_1}{N} + C_3\|X(z)\|^2_\infty +C_4\|X(z_0)\|^2_\infty
\end{multline*}

Under the highly probability event described in Eq~\eqref{eq:prop_1_assump}, by Lemma~\ref{prop.dcsbm.differ}, there exist $n' - n_1 = \Omega(N)$ and a positive constant $C_0$ satisfying
\begin{equation*}
F(\tR S \tR^\top (z) ) - F(\tR S \tR^\top (z_0) )    \leq  -C_0 \frac{ \|z - z_0\|_1}{N} \,, \mbox{ for $z \in \zeta'$. }
\end{equation*} The other terms can be bounded by concentration, noting that assumption~\ref{prop.dcsbm.assp.4} implies $\lambda_N\to \infty$.
We write 
\begin{align}\label{eq:f_close}
 &\mathbb{P}\left( \max_{z \in \zeta': z \neq z_0} \left\{F\left( \frac{O(z)}{\mu_N}\right) - F\left( \frac{O(z_0)}{\mu_N} \right) \right\} \geq - C_0  \frac{\|z - z_0\|_1}{5N}  \right) \notag  \\
      & \leq  \mathbb{P}\left( \max_{z \in \zeta' : z \neq z_0} \{  - C_0 \frac{ \|z - z_0\|_1}{N}   + C_1\|X(z) - X(z_0)\|_\infty +   C_2 \|X(z)\|_\infty \frac{\|z - z_0\|_1}{N} \right. \\
    & \quad \left. + C_3\|X(z)\|^2_\infty +C_4\|X(z_0)\|^2_\infty \} \geq - C_0  \frac{\|z - z_0\|_1}{5N} \right) \notag  \\
  & \leq \mathbb{P}\left(\max_{z \in \zeta : z \neq z_0} \frac{\| X(z) - X (z_0) \|_{\infty}}{\|z - z_0\|_1 /N} \geq \frac{C_0}{ 5 C_1} \right) + \mathbb{P}\left(\max_{z \in \zeta}\| X(z)\|_{\infty}\geq \frac{C_0}{ 5 C_2} \right) \notag  \\ 
  &  \quad +  \mathbb{P}\left(\max_{z \in \zeta: z \neq z_0} \frac{\| X(z)\|^2_{\infty}}{\|z - z_0\|_1 /N}\geq \frac{C_0}{ 5 C_3} \right) +  \mathbb{P}\left( \max_{z \in \zeta : z \neq z_0 }\frac{\| X(z_0)\|^2_{\infty}}{\|z - z_0\|_1 /N}\geq \frac{C_0}{ 5 C_4} \right)\,.   
\end{align}

By Eq~\eqref{eq:x_con}, we have
\begin{equation}\label{eq:x_itself}
    \mathbb{P}\left(\max_{z \in \zeta}\| X(z)\|_{\infty}\geq \varepsilon \right) \longrightarrow 0 \mbox{ as } \lambda_N \longrightarrow \infty\,.
\end{equation}

According to Eq~\eqref{eq:x_differ_con}, we have
\begin{equation*}
    \mathbb{P}\left(\max_{z \in \zeta :\|z - z_0\|_1 = m}\| X(z) - X (z_0) \|_{\infty} \geq \varepsilon \frac{m}{N}\right) \leq 16 \exp \left( - \frac{ m }{16 N C_S} \varepsilon^2 \mu_N  \right)\,.
\end{equation*}
Then, 
\begin{align}\label{eq:x_differ}
 \mathbb{P}\left(\max_{z \in \zeta : z \neq z_0} \frac{\| X(z) - X (z_0) \|_{\infty}}{\|z - z_0\|_1 /N}   \geq \varepsilon \right) & \leq \sum^N_{m = 1}  \mathbb{P} \left( \max_{z \in \zeta : \|z - z_0\|_1 = m}  \frac{\| X(z) - X (z_0) \|_{\infty}}{m /N} \geq \varepsilon \right) \notag  \\  
   & \leq \sum^N_{m = 1} 16 \exp \left( - m \frac{ \varepsilon^2 \lambda_N }{16 C_S}   \right)
    \longrightarrow 0 \mbox{ as } \lambda_N \longrightarrow \infty\,.   
\end{align}

Also, by Eq~\eqref{eq:x_con_2}, we have
\begin{equation*}
    \mathbb{P}\left(\max_{z \in \zeta : \|z - z_0\|_1 = m} \| X(z)\|^2_{\infty} \geq  \varepsilon \frac{m}{N}\right) \leq 16 \exp \left(- \frac{  m }{8 C_S N} \varepsilon \mu_N \right)\,.
\end{equation*}
Similar to Eq~\eqref{eq:x_differ},
\begin{align} \label{eq:x_sqr}
\mathbb{P}\left(\max_{z \in \zeta : z \neq z_0} \frac{\| X(z)\|^2_{\infty} }{\|z - z_0\|_1 /N}   \geq \varepsilon \right) & \leq \sum^N_{m = 1}  \mathbb{P} \left( \max_{z \in \zeta : \|z - z_0\|_1 = m}  \frac{\| X(z)\|^2_{\infty} }{m /N} \geq \varepsilon \right)  \notag \\  
  & \leq \sum^N_{m = 1} 16 \exp \left(-  m \frac{\varepsilon \lambda_N  }{8 C_S}  \right)
    \longrightarrow 0 \mbox{ as } \lambda_N \longrightarrow \infty\,.    
\end{align}

Again, by Eq~\eqref{eq:x_con_3}, 
\begin{equation*}
    \mathbb{P}(\|X(z_0) \|^2_\infty \geq \varepsilon \frac{m}{N}) \leq 8 \exp \left(- \frac{ m }{8  C_S N} \varepsilon  \mu_N \right)  \mbox{  for all } m \in [N]\,.
\end{equation*}
Then,
\begin{align} \label{eq:xz_sqr}
\mathbb{P}\left(\max_{z \in \zeta : z \neq z_0} \frac{\| X(z_0)\|^2_{\infty} }{\|z - z_0\|_1 /N}   \geq \varepsilon \right) & \leq \sum^N_{m = 1}  \mathbb{P} \left(  \frac{\| X(z_0)\|^2_{\infty} }{m /N} \geq \varepsilon \right)   \notag\\  
 &  \leq \sum^N_{m = 1}  8 \exp \left(- m \frac{  \varepsilon  \lambda_N }{8  C_S} \right)
    \longrightarrow 0 \mbox{ as } \lambda_N \longrightarrow \infty\,.    
\end{align}

Combining Eq~\eqref{eq:f_close}, \eqref{eq:x_itself}, \eqref{eq:x_differ}, \eqref{eq:x_sqr}, and \eqref{eq:xz_sqr}, we have
\begin{equation*}
  F\left( \frac{O(z)}{\mu_N}\right) - F\left( \frac{O(z_0)}{\mu_N} \right) \leq - \frac{1}{N}\Omega_P(\|z - z_0\|_1)\, \mbox{ for } z \in \zeta'\,.
\end{equation*}

\end{proof}

\clearpage

\section{Additional results for simulations and real data}

\subsection{Simulation studies in Section~\ref{sec:simulation}}~\label{app.sec:simulation}
\spacingset{1}
\vspace{-1cm}
\begin{figure}[h!]
\begin{subfigure}[b]{0.5\textwidth}
    \centering
   \includegraphics[width=\textwidth]{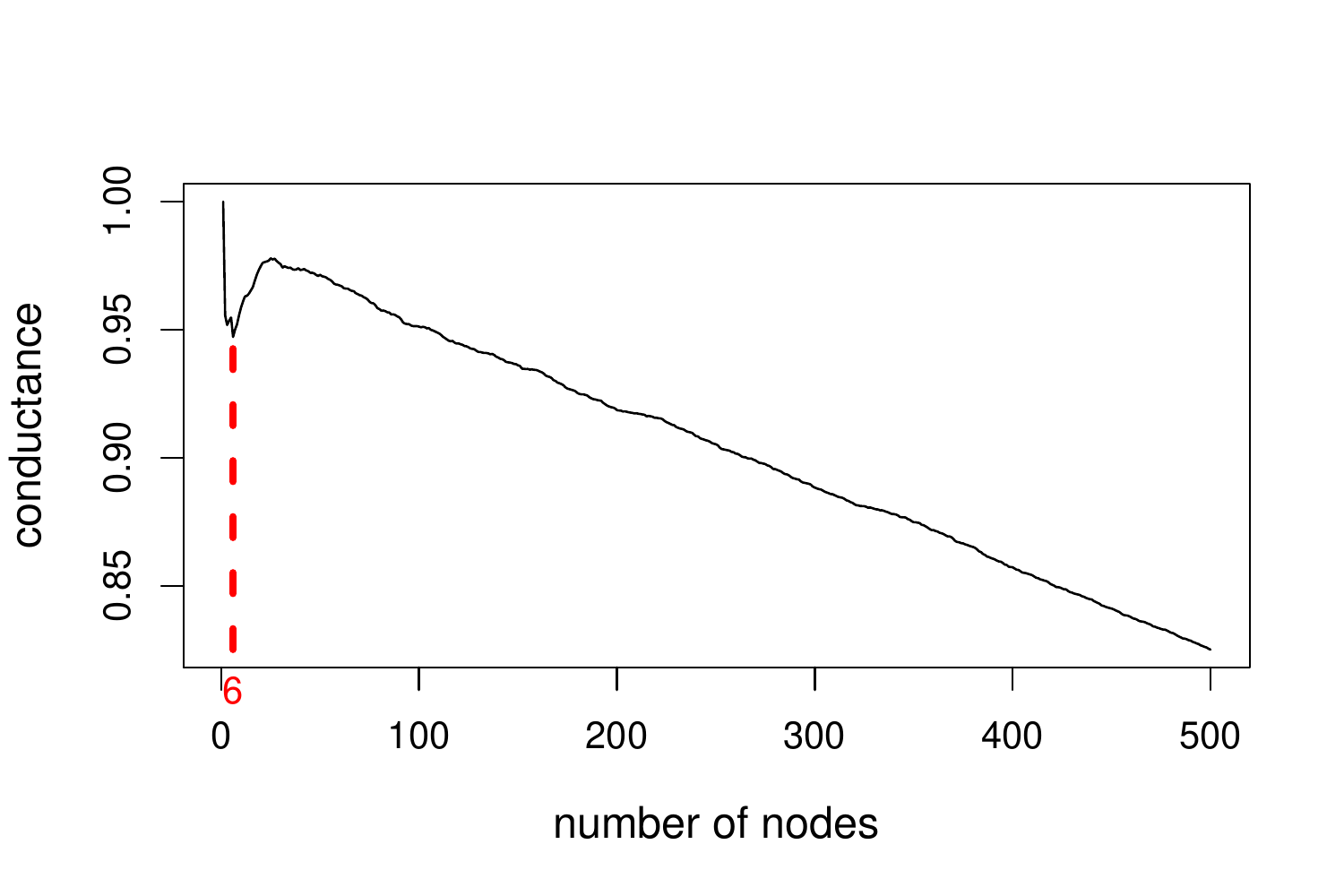}
     \caption{1 seed} 
\end{subfigure}
\begin{subfigure}[b]{0.5\textwidth}
    \centering
   \includegraphics[width=\textwidth]{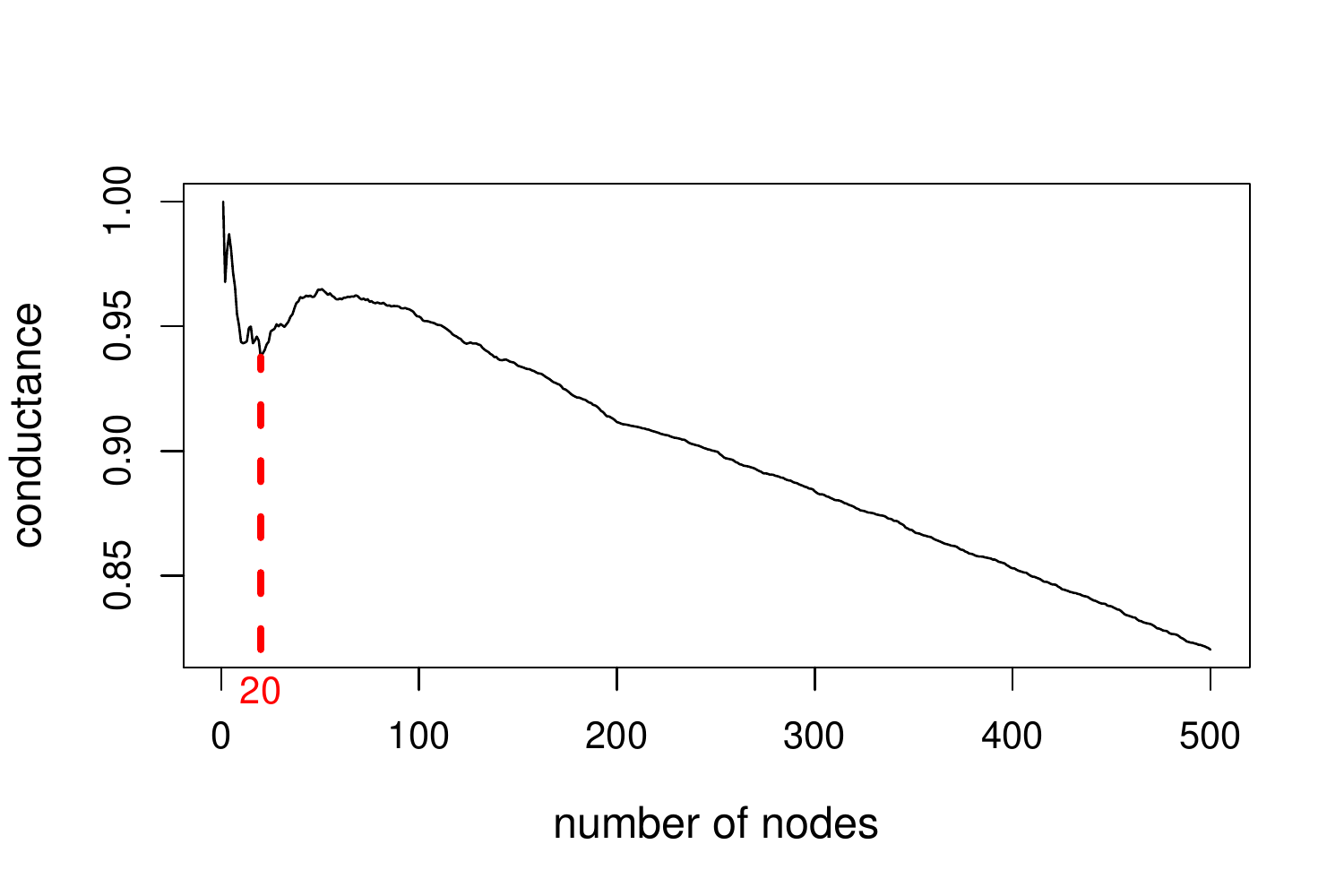}
      \caption{5 seeds} 
\end{subfigure}
\begin{subfigure}[b]{0.5\textwidth}
    \centering
   \includegraphics[width=\textwidth]{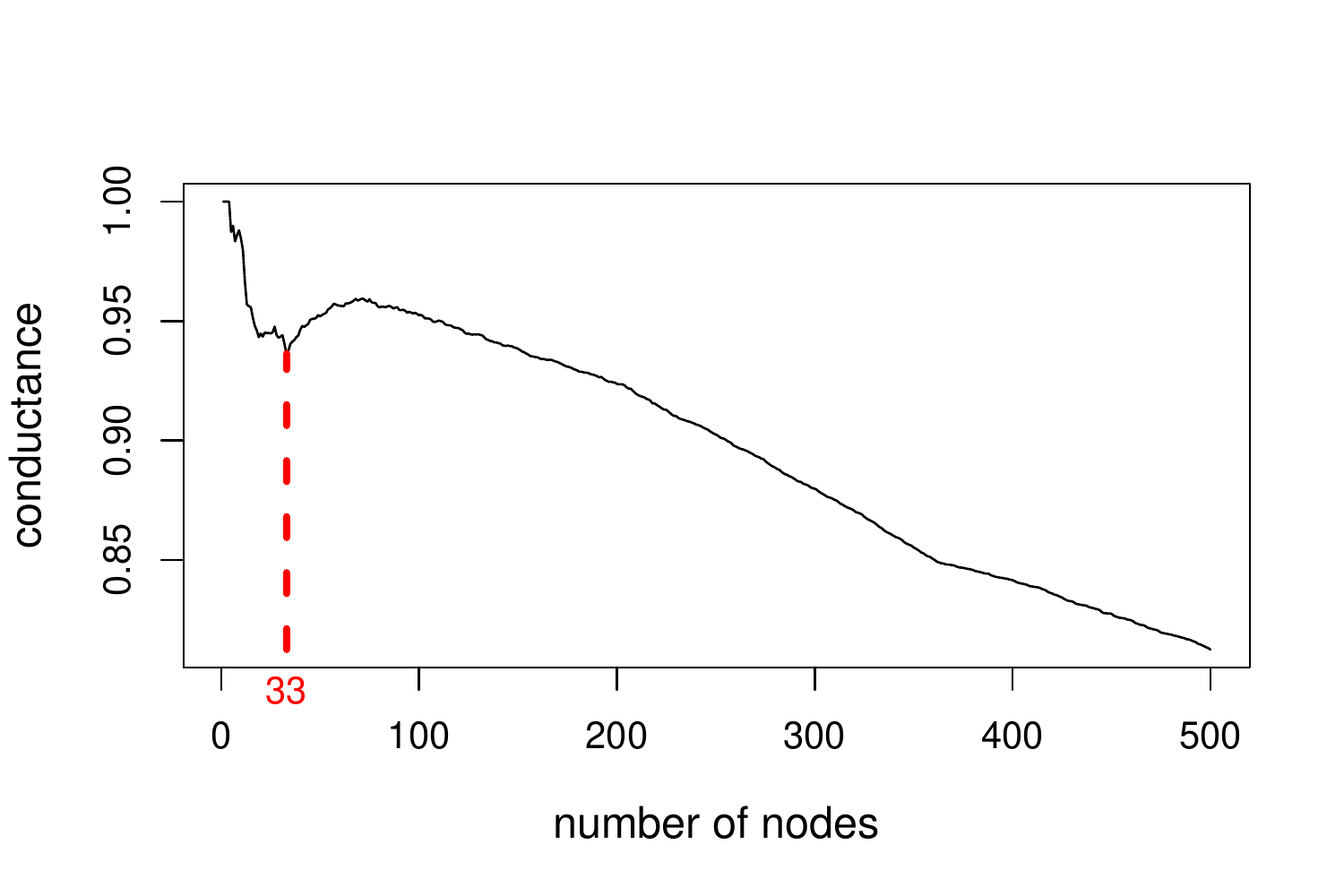}
      \caption{10 seeds} 
\end{subfigure}
\begin{subfigure}[b]{0.5\textwidth}
    \centering
   \includegraphics[width=\textwidth]{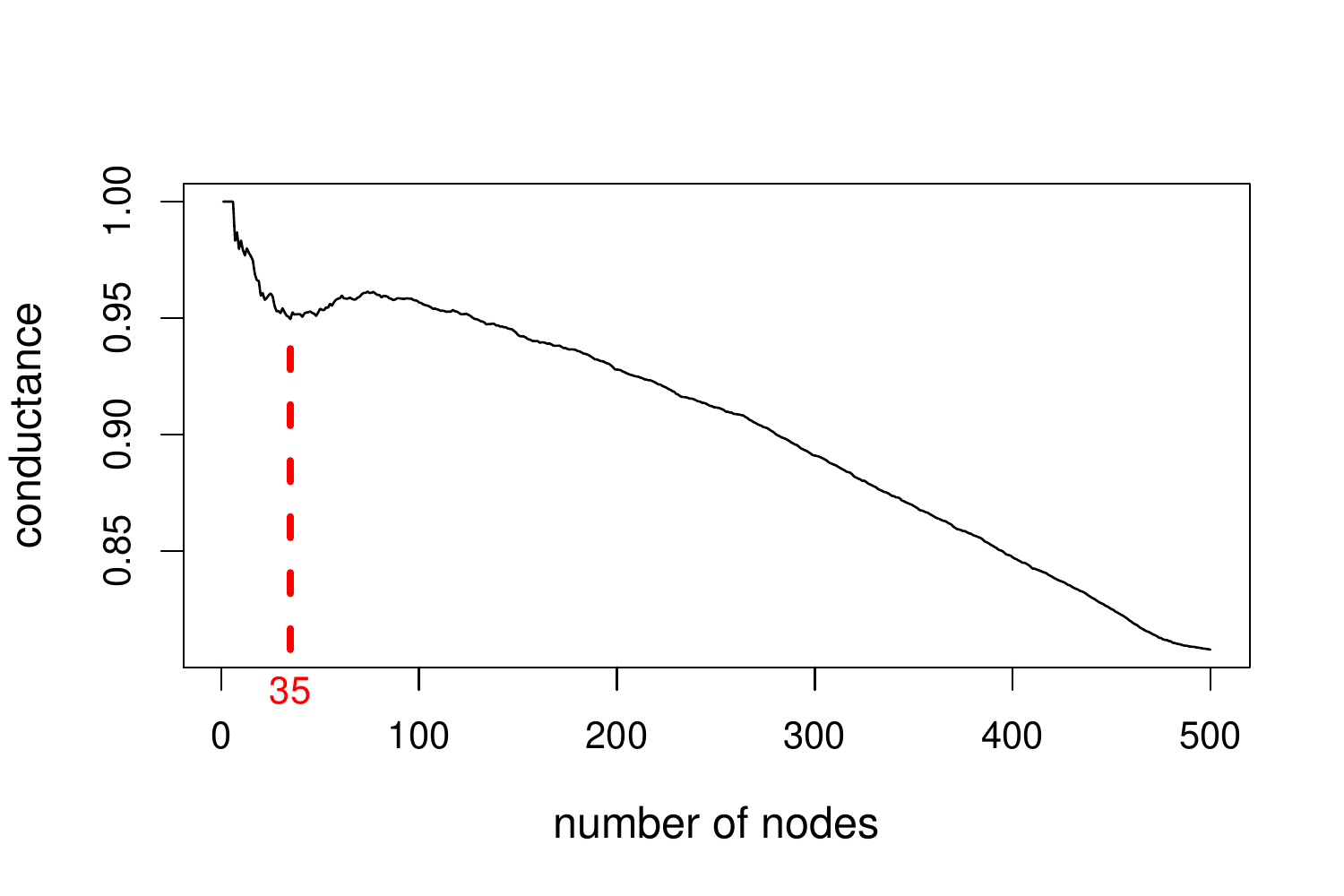}
      \caption{15 seeds} 
\end{subfigure}
\begin{subfigure}[b]{0.5\textwidth}
    \centering
   \includegraphics[width=\textwidth]{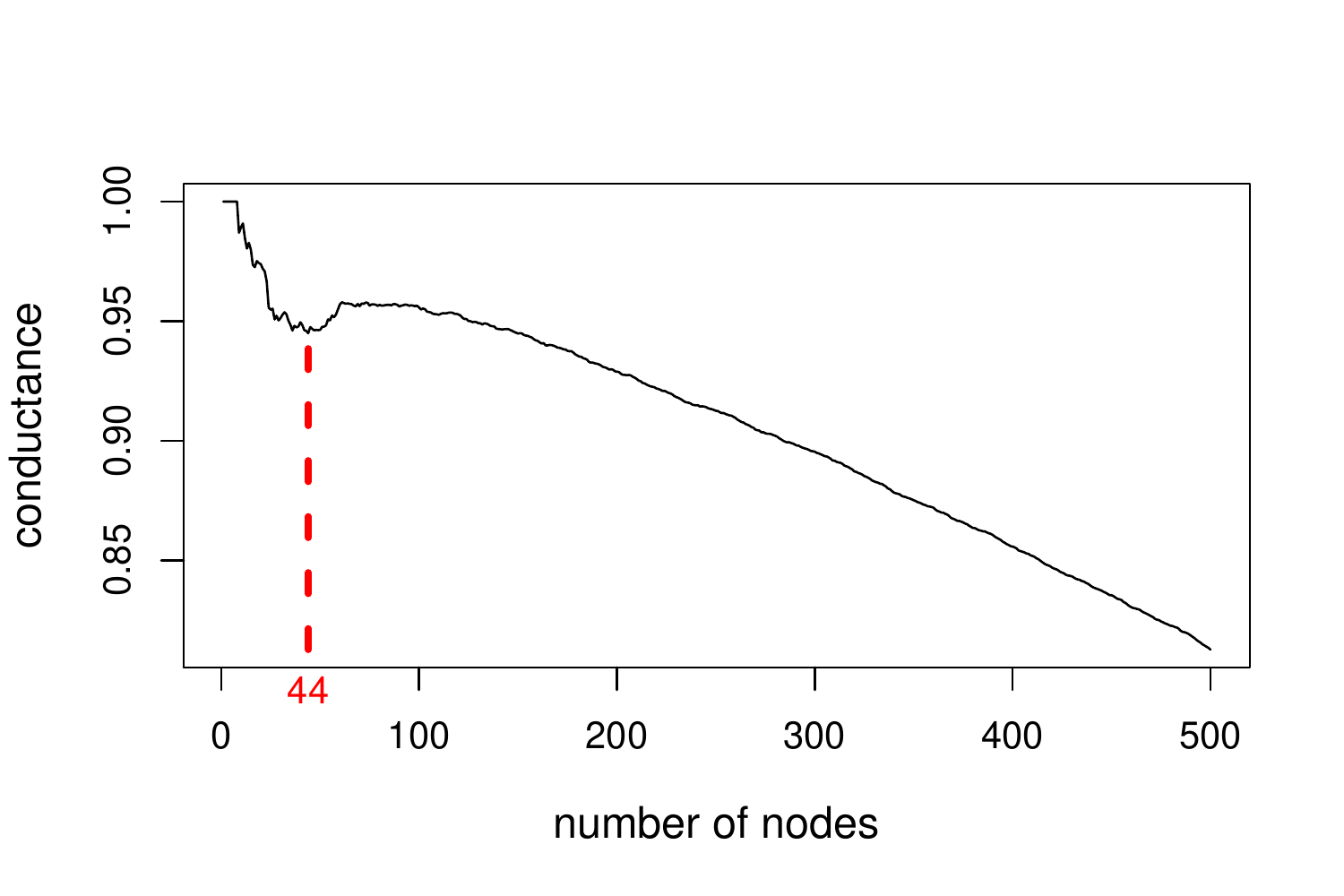}
     \caption{20 seeds} 
\end{subfigure}
\caption{Examples of conductance plots for different seed numbers when $\alpha = 0.15$, under the setting of Table~\ref{table:sim_results_1}. The dotted line indicates the local minimum selected.} \label{fig:conduct_exp}
\end{figure}

\begin{table}[h]
\centering
\footnotesize
\begin{tabular}{l l|llll}
\hline\hline
&  & \multicolumn{2}{c}{Precision}  & \multicolumn{2}{c}{Recall}  \\ \hline
 $\alpha$ & \# of seeds & Mean(\%) & SD(\%) & Mean(\%) & SD(\%) \\ \hline
\multirow{4}{*}{0.05}   & 1   & 97.00 & 0.14 & 98.77 & 1.54 \\
                        & 5   & 97.49 & 0.22 & 99.23 & 0.94\\
                        & 10  & 97.93 & 0.21 & 99.43 & 0.63 \\
                        & 15  & 98.30 & 0.30 & 99.51 & 0.45 \\ 
                        & 20  & 98.59 & 0.40 & 99.58 & 0.32 \\\hline \hline

\multirow{4}{*}{0.25}   & 1   & 97.00 & 0.15 & 98.76 & 1.54 \\
                        & 5   & 97.48 & 0.21 & 99.24 & 0.94\\
                        & 10  & 97.93 & 0.22 & 99.41 & 0.63 \\
                        & 15  & 98.30 & 0.29 & 99.50 & 0.45 \\ 
                        & 20  & 98.60 & 0.40 & 99.55 & 0.36 \\\hline \hline
\end{tabular}
\caption{The means and standard deviations of precision and recall rates for local clustering under different $\alpha$ values and numbers of seeds. Each setting is simulated 50 times.}\label{table:sim_results_2}
\end{table}

\clearpage

\subsection{ More simulation results}~\label{app.sec:more_simulation}

We compare the local clustering procedure with common global clustering techniques including the usual spectral clustering, SCORE \citep{J15} and the more recently proposed SCORE+ \citep{JZS18}. We find that SCORE+ performs better than SCORE in most of our experimental settings, so we only present the results of SCORE+ in what follows. We consider the three settings below.

\spacingset{1}
\begin{table}[h]
\centering
\footnotesize
\begin{tabular}{l l|llll}
\hline\hline
&  & \multicolumn{2}{c}{Precision}  & \multicolumn{2}{c}{Recall}  \\ \hline
 & Method & Mean(\%) & SD(\%) & Mean(\%) & SD(\%) \\ \hline
\multirow{3}{*}{Setting 1}  & Local & 97.49 & 0.50 & 99.11 &  0.45 \\ 
     & Spectral  &  97.18 &  2.15 & 94.19 &  15.56\\ 
     & SCORE+    &  98.30 & 0.32 & 99.87 & 0.07 \\ \hline
\multirow{3}{*}{Setting 2}  & Local & 97.53 & 0.47 & 99.35 & 0.31 \\ 
     & Spectral  &  97.34 &  1.41 & 91.65 &  17.63 \\
     & SCORE+    & 94.71  &  1.38 & 56.28 &  14.70 \\ \hline
\multirow{3}{*}{Setting 3}  & Local & 98.60 & 0.40   &  99.56  & 0.37 \\ 
     & Spectral  &  96.97 &  0.34 & 52.47 &  7.69 \\ 
     & SCORE+    &   96.77 & 0.01 &  50.03 & 0.06 \\ \hline \hline
\end{tabular}
\caption{Means and standard deviations of precision and recall for local clustering, spectral clustering and SCORE+.
Each setting is repeated in 50 simulations.}\label{table:sim_results}
\end{table}

\noindent\textbf{Setting 1:}  $n_1 = 150$. The preference vector has $\pi_i = 1/50$ for $i = 1, \ldots, 50$, $\pi_i = 0$ for others. For the local clustering method, we observe that a local minimum usually occurs for $n<200$. Thus we search for the minimum point in the range $1-200$. Figure~\ref{fig:conduct_exp_2_1} is an example of a conductance plot under this setting. The local minimum is obvious at the point $n=129$ in this example. 

\noindent\textbf{Setting 2:}  $n_1 = 100$.   The preference vector has $\pi_i = 1/30$ for $i = 1, \ldots, 30$, $\pi_i = 0$ for others. In this case, we search for the local minimum within the range $1-150$. Figure~\ref{fig:conduct_exp_2_2} gives an example of a conductance plot under setting 2. Again the local minimum is clear in the plot.

\noindent\textbf{Setting 3:}  $n_1 = 50$.   The preference vector has $\pi_i = 1/20$ for $i = 1, \ldots, 20$, $\pi_i = 0$ for others. Here, we search for the local minimum in the range $1 - 55$, as mentioned in Section~\ref{sec:simulation}.

All the other parameters (e.g., $K$, $B$ and $\theta_i$) are the same as in Section~\ref{sec:simulation}.
The teleportation constant $\alpha$ is set to 0.15 as before. 
Note that the number of seeds in $\pi$ decreases when the size of block 1 decreases, as we expect fewer seeds to be available for smaller community sizes. 

For each setting, we calculate the precision and recall from 50 simulations and record their means and standard deviations in Table~\ref{table:sim_results}.
All the methods have high precision and recall rates in Setting 1. However, as $n_1$ decreases and the two block sizes become more imbalanced in Settings 2 and 3, the performance of spectral clustering and SCORE+ become worse, whereas local clustering remains stable with high averages and small standard deviations. As shown in Table~\ref{table:sim_results_1} and Table~\ref{table:sim_results_2}, which examine the effect of $\alpha$ and seed number under Setting 3, local clustering has slightly higher average precision and substantially higher average recall than the other two methods even when only a single seed is used. We also note that the standard deviations of local clustering are smaller than those of spectral clustering in all the settings. More detailed distributions of these precision and recall rates under the three settings can be found in the violin plots of Figure~\ref{fig:label_violin}.


\spacingset{1}

\begin{figure}[h!]
\begin{subfigure}[b]{0.5\textwidth}
    \centering
   \includegraphics[width=\textwidth]{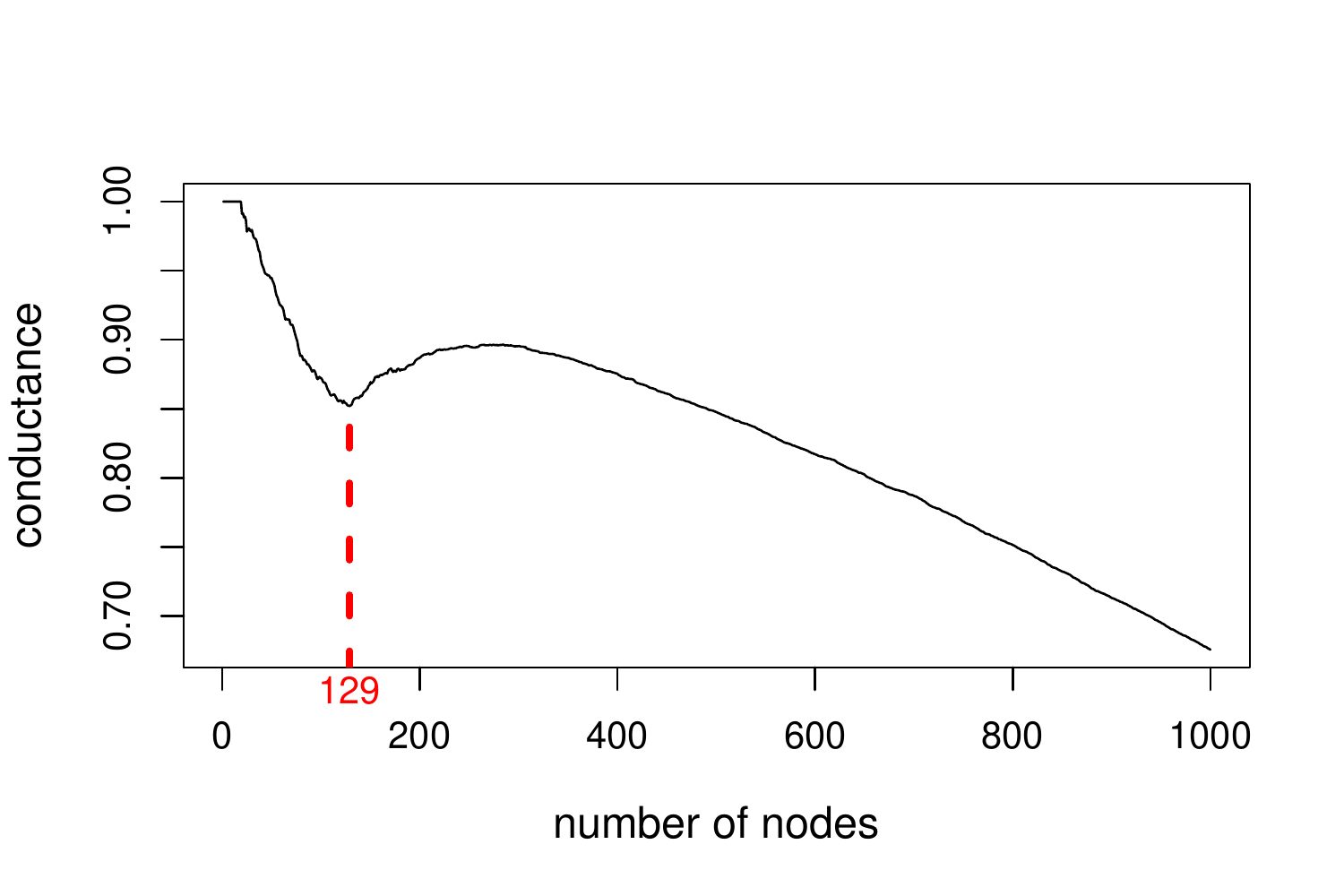}
     \caption{Setting 1}\label{fig:conduct_exp_2_1}
\end{subfigure}
\begin{subfigure}[b]{0.5\textwidth}
    \centering
   \includegraphics[width=\textwidth]{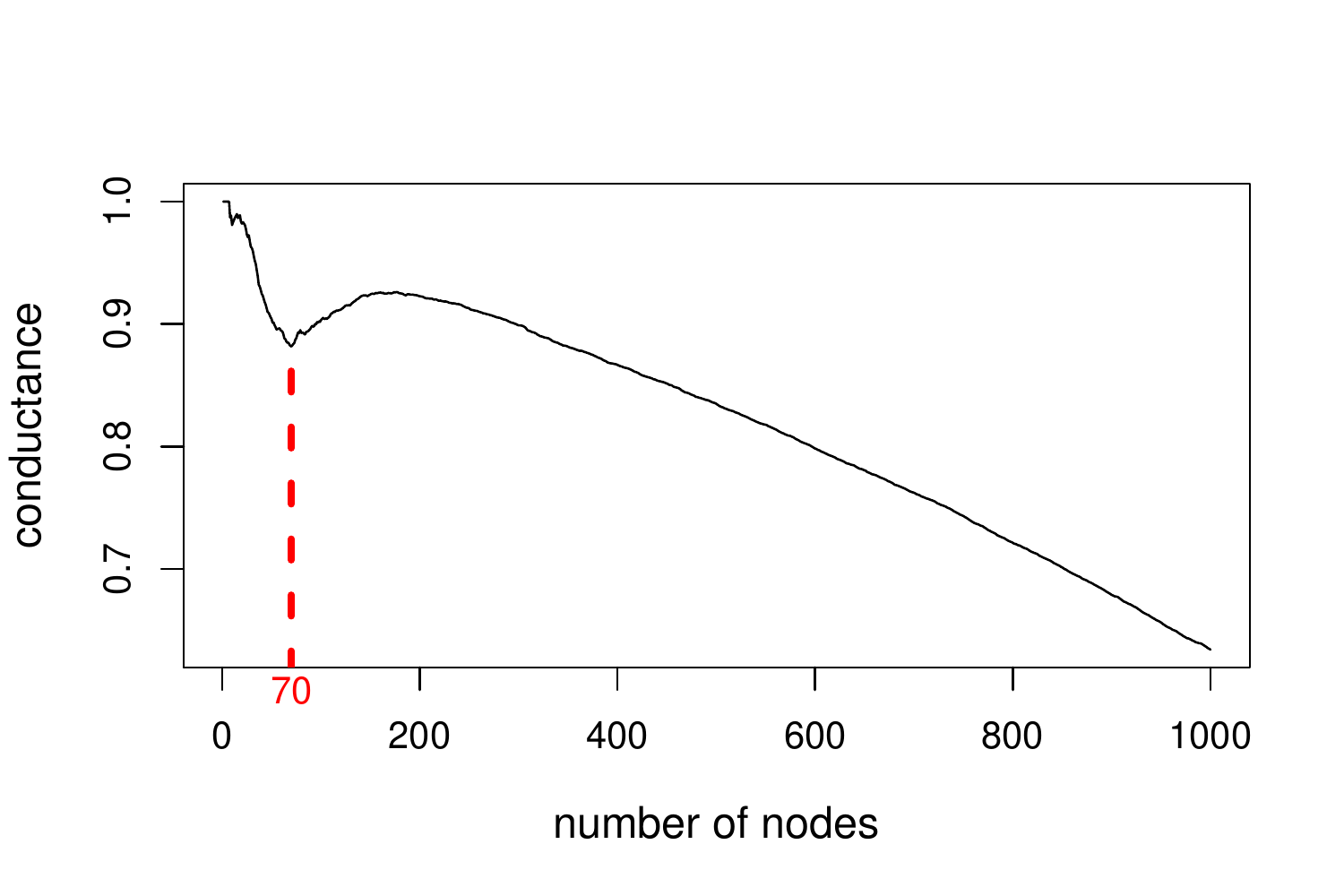}
      \caption{Setting 2}\label{fig:conduct_exp_2_2}
\end{subfigure}
\caption{Examples of conductance plots under Settings 1 and 2. More examples under Setting 3 can be found in Figure~\ref{fig:conduct_exp}.} \label{fig:conduct_exp_2}
\end{figure}

\begin{figure}[h!]
\begin{subfigure}[b]{1\textwidth}
 \caption{Setting 1} 
     \scalebox{0.35}{\includegraphics{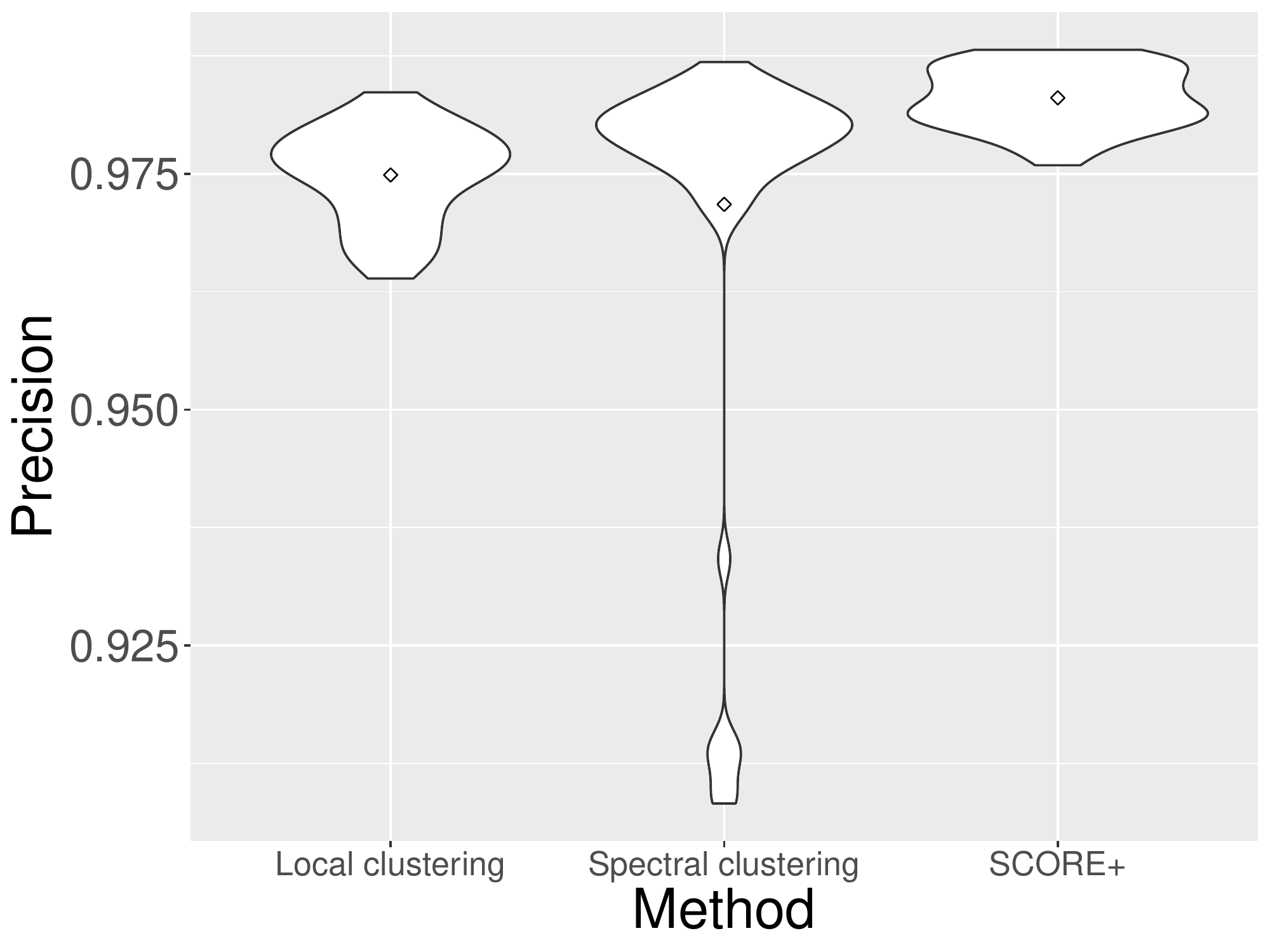}}
    \scalebox{0.35}{\includegraphics{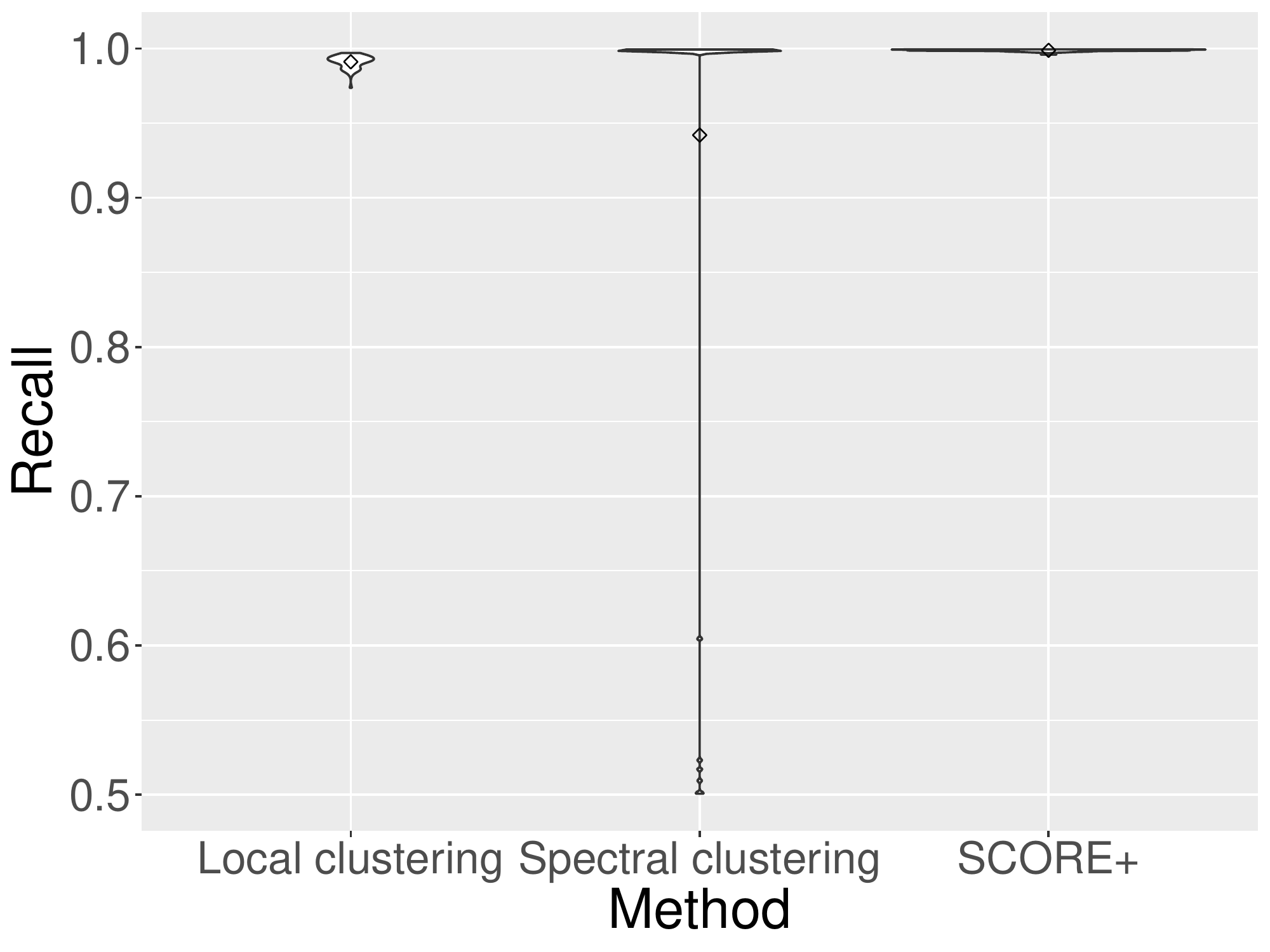}} 
\end{subfigure}
 \begin{subfigure}[b]{1\textwidth}
 \caption{Setting 2} 
     \scalebox{0.35}{\includegraphics{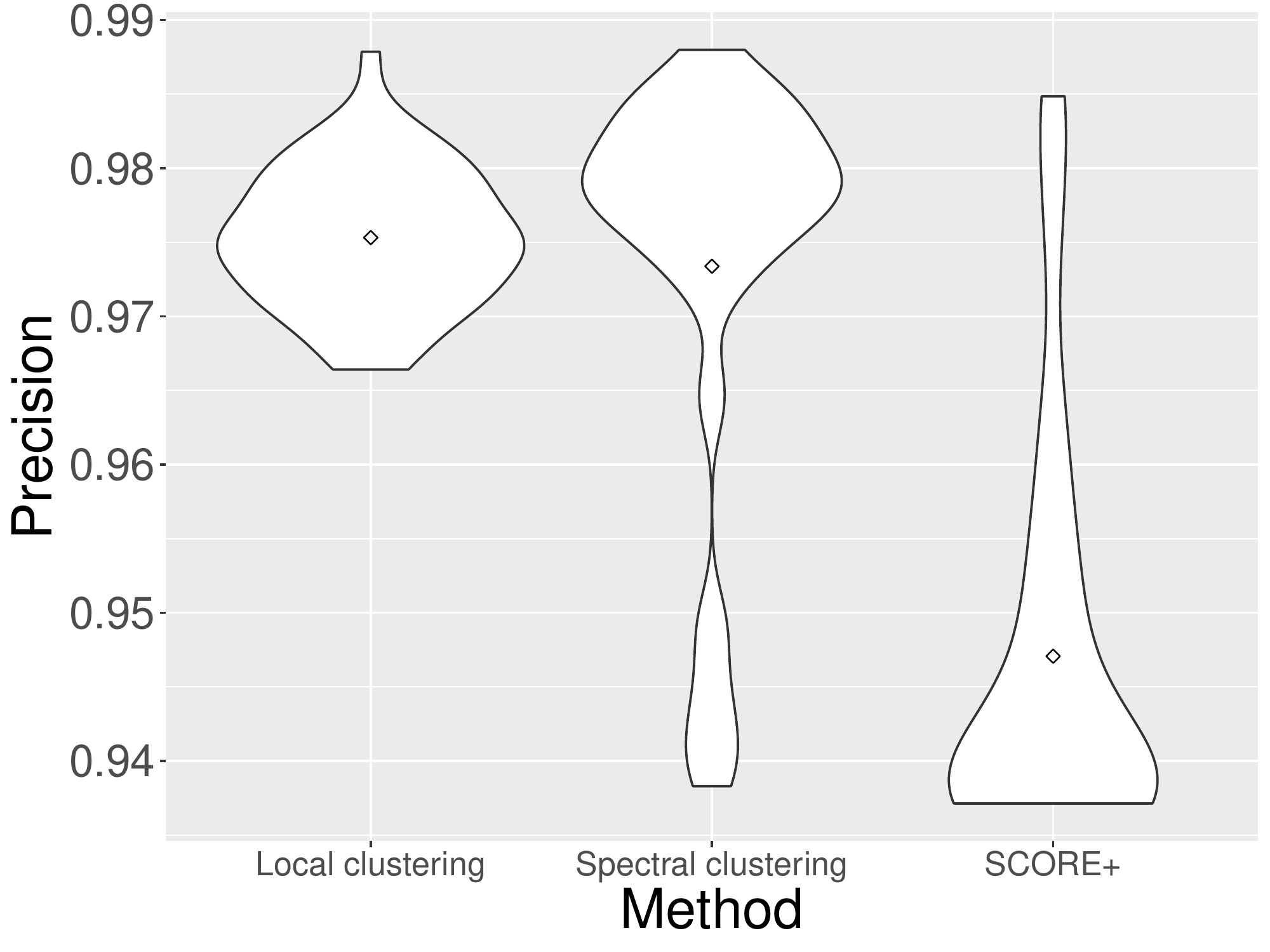}}
    \scalebox{0.35}{\includegraphics{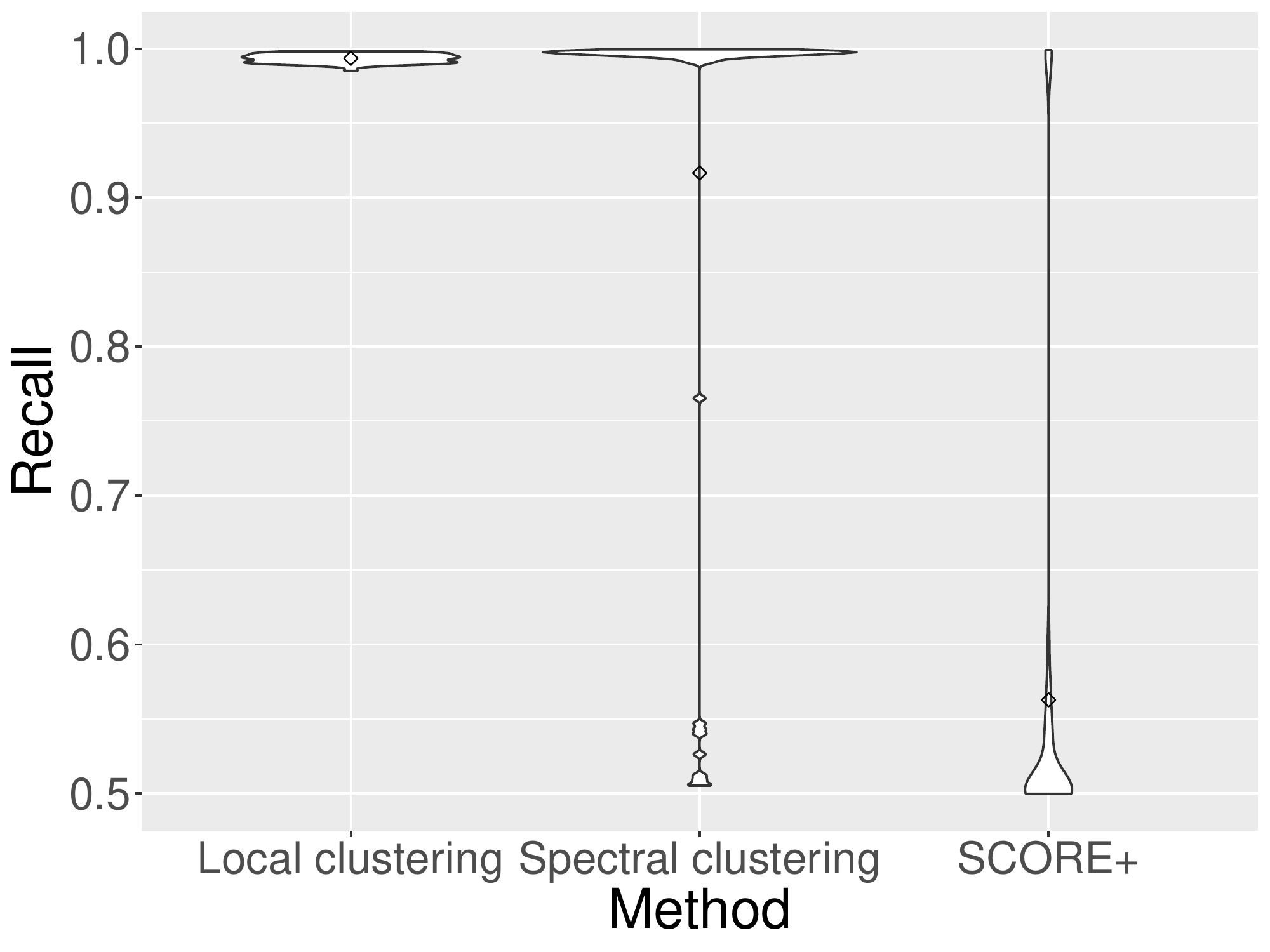}} 
\end{subfigure}
\begin{subfigure}[b]{1\textwidth}
    \caption{Setting 3} 
     \scalebox{0.35}{\includegraphics{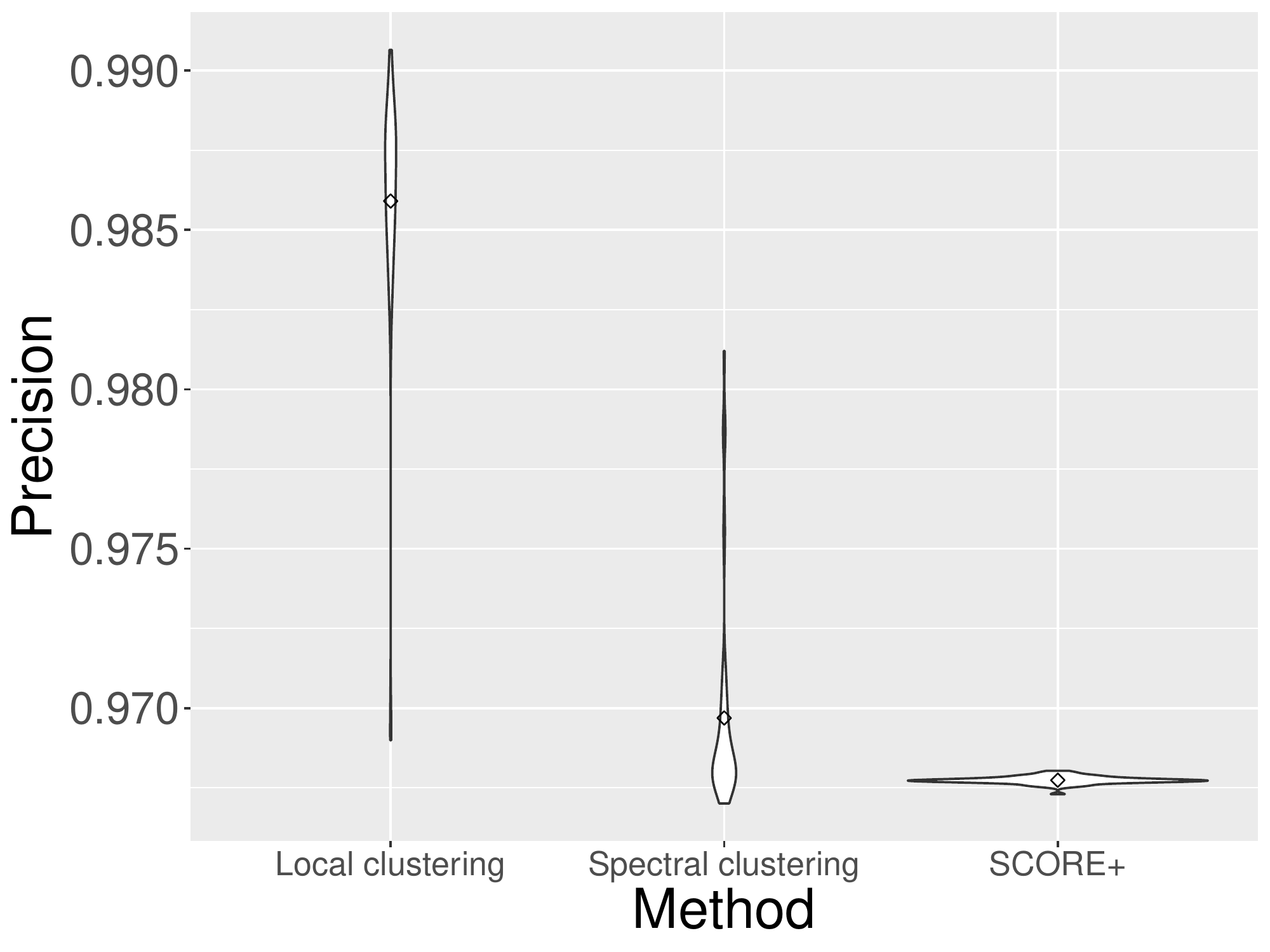}}
    \scalebox{0.35}{\includegraphics{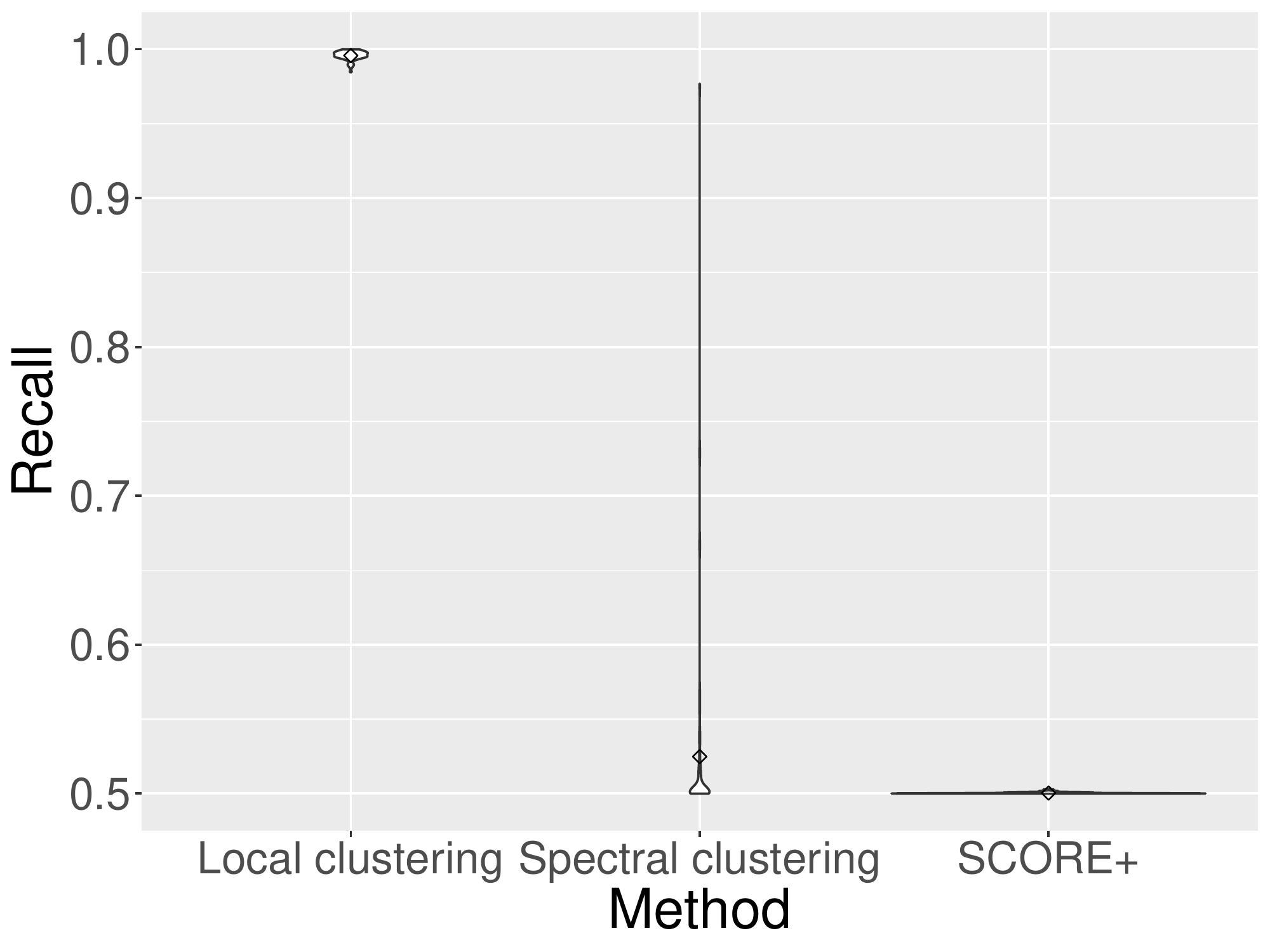}}  
\end{subfigure}
\caption{The violin plots of precision and recall for Setting 1-3.}\label{fig:label_violin}
\end{figure}

\clearpage

\subsection{Threshold selection and conductance plots for case studies in Section~\ref{sec:case_studies}}\label{app.sec:case_study}

\begin{figure}[h]
     \centering
     \begin{subfigure}[b]{0.35\textwidth}
    \centering
    \includegraphics[width=\textwidth]{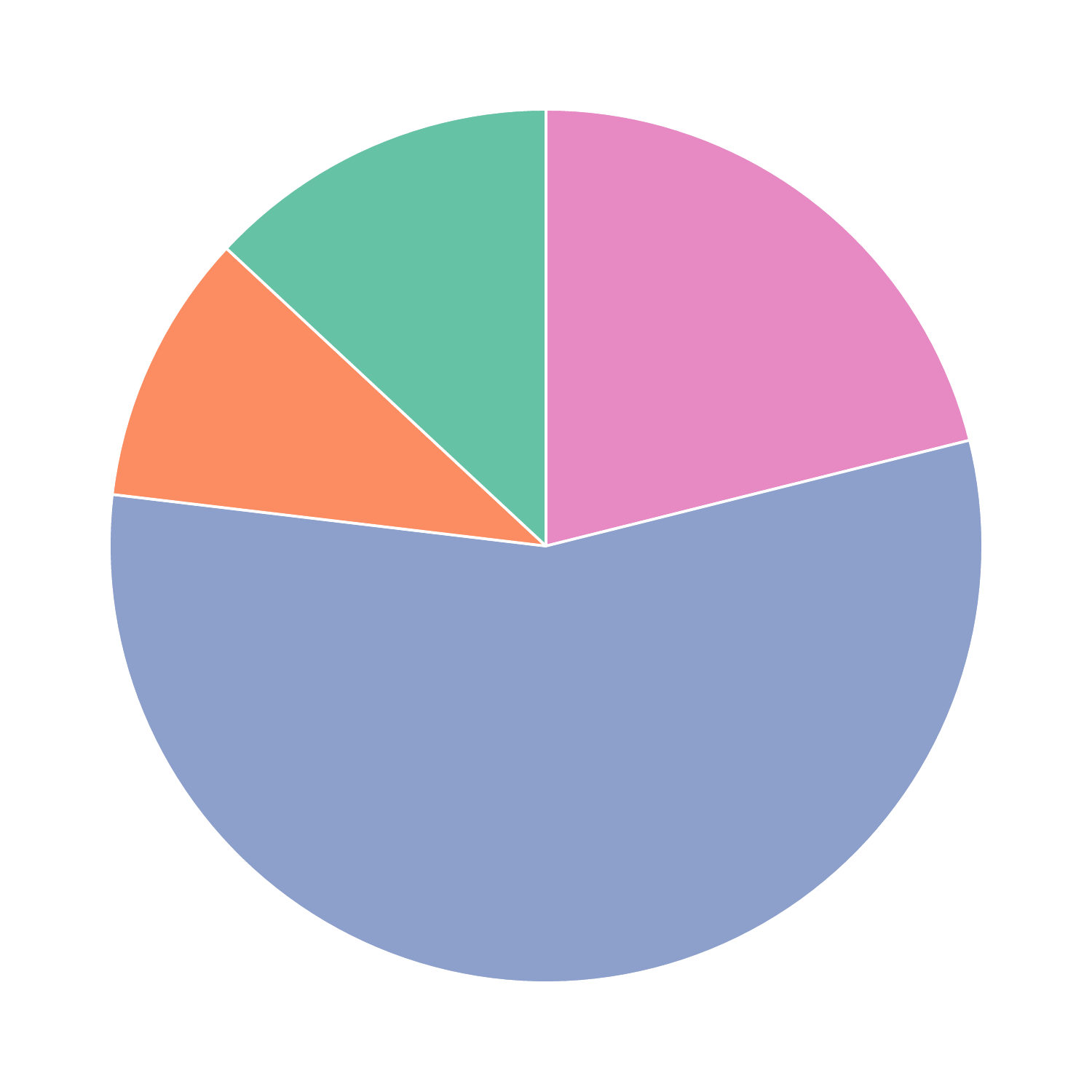}
    \caption{}\label{fig:pie_single_cell}
     \end{subfigure}
      \begin{subfigure}[b]{0.35\textwidth}
    \centering
    \includegraphics[width=\textwidth]{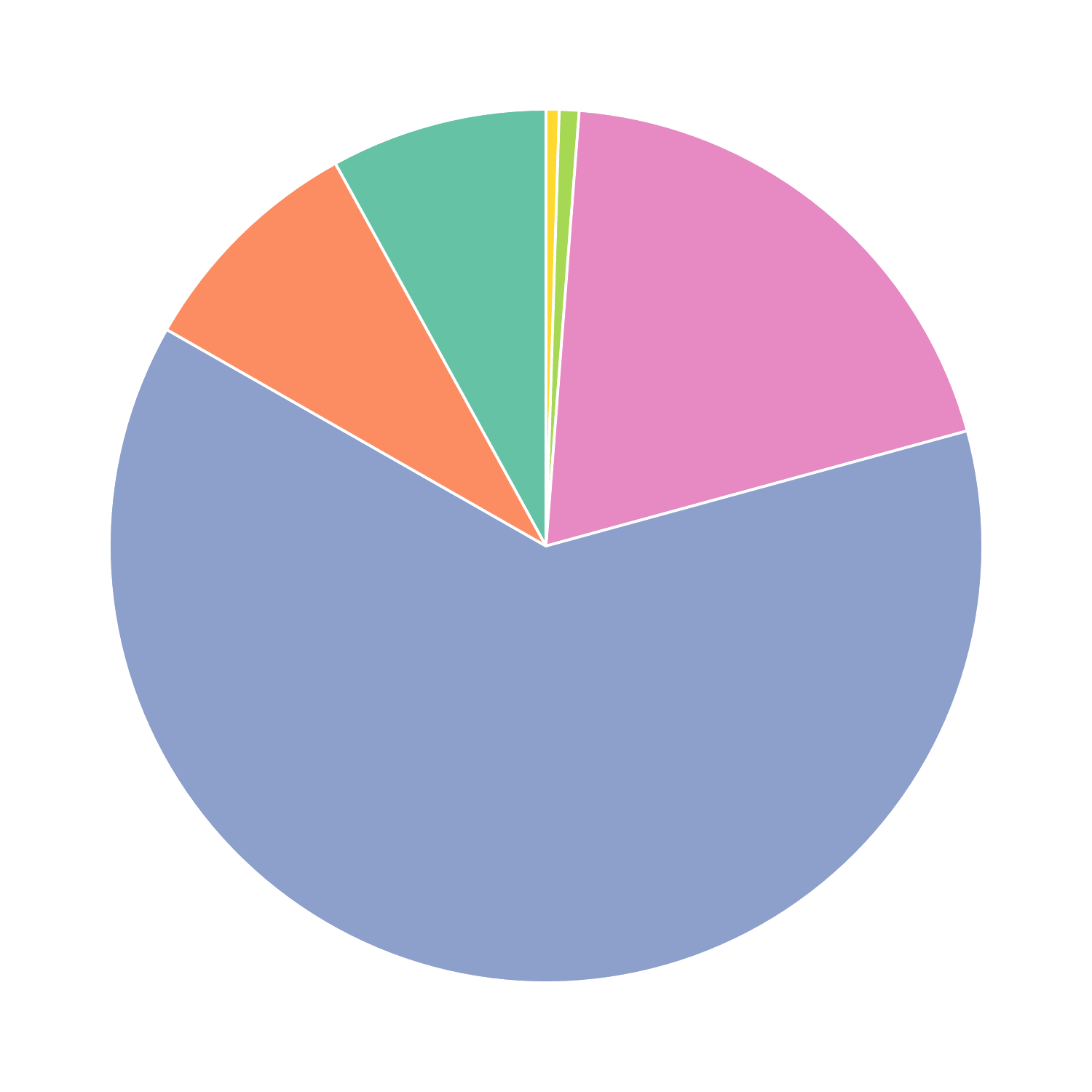}
    \caption{}\label{fig:pie_flu}
     \end{subfigure}
    \begin{subfigure}[b]{0.12\textwidth}
    \centering
    \includegraphics[width=\textwidth]{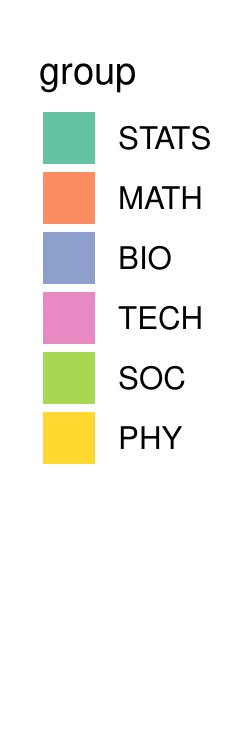}
     \end{subfigure}
 \caption{ The categories of citing papers related to: (a) single-cell transcriptomics; (b) flu.}
\end{figure}

\begin{figure}[h]
\centering
\begin{subfigure}[b]{0.45\textwidth}
    \centering
   \includegraphics[width=\textwidth]{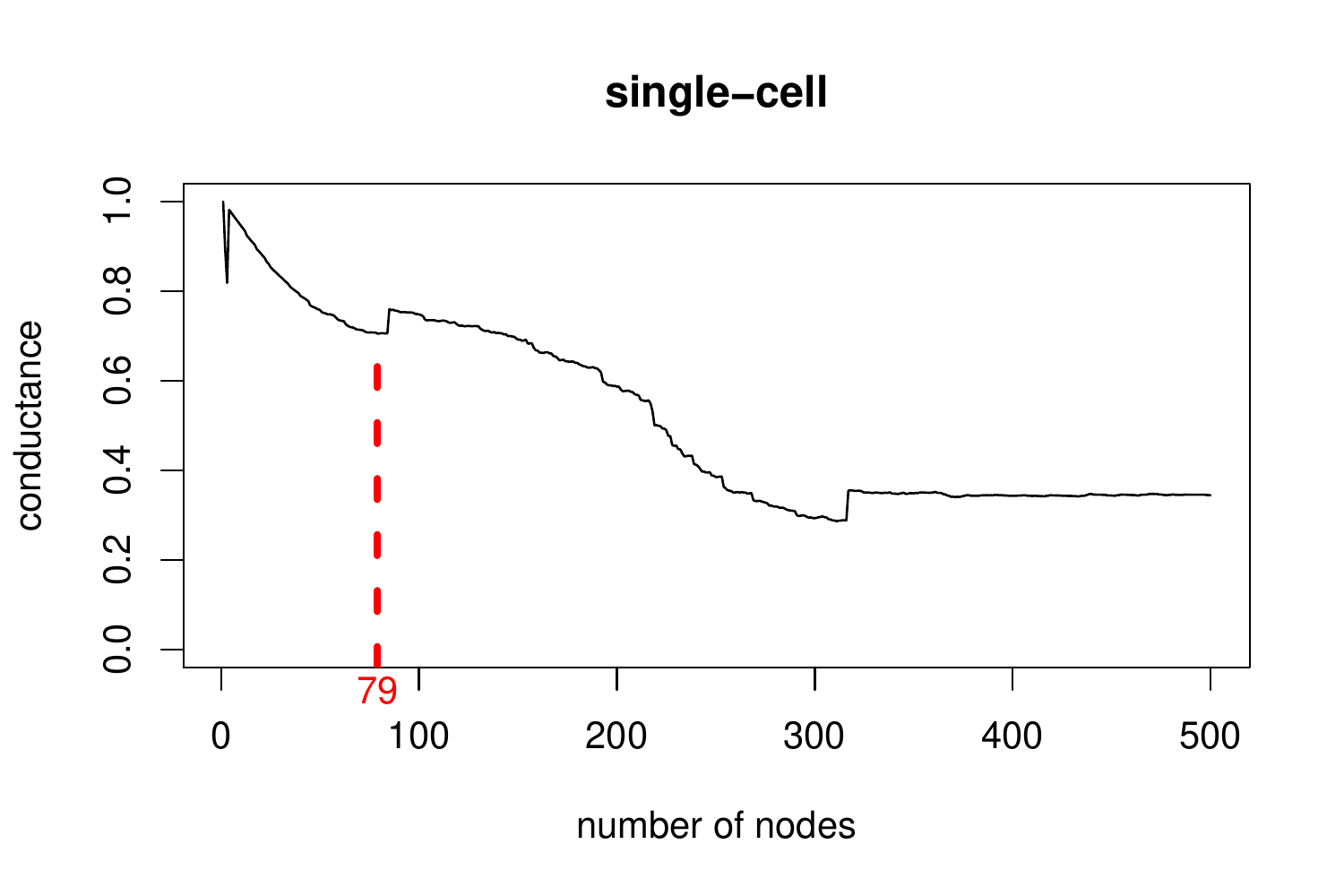}
     \caption{} 
\end{subfigure}
\begin{subfigure}[b]{0.45\textwidth}
    \centering
   \includegraphics[width=\textwidth]{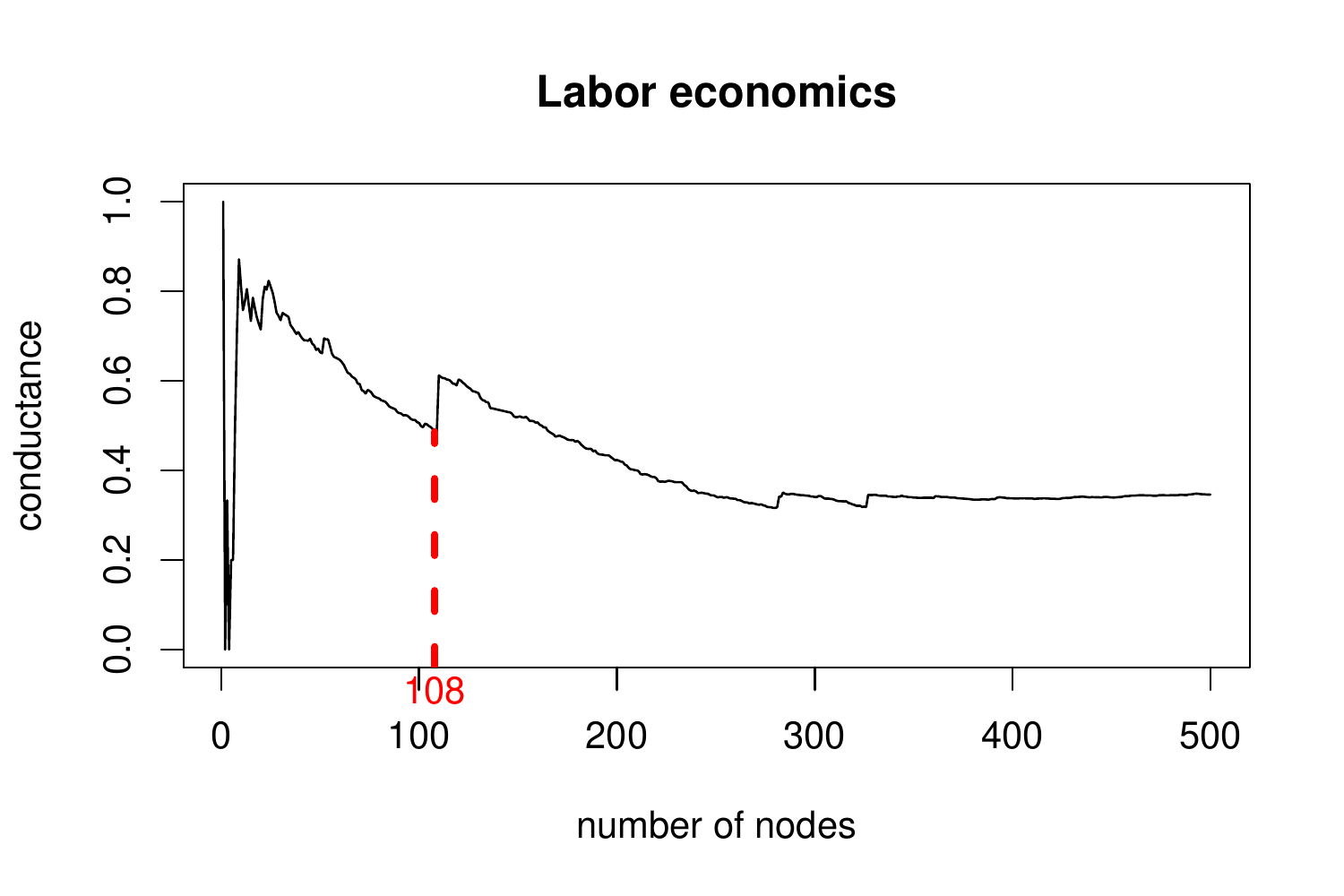}
     \caption{} 
\end{subfigure}
\begin{subfigure}[b]{0.45\textwidth}
    \centering
   \includegraphics[width=\textwidth]{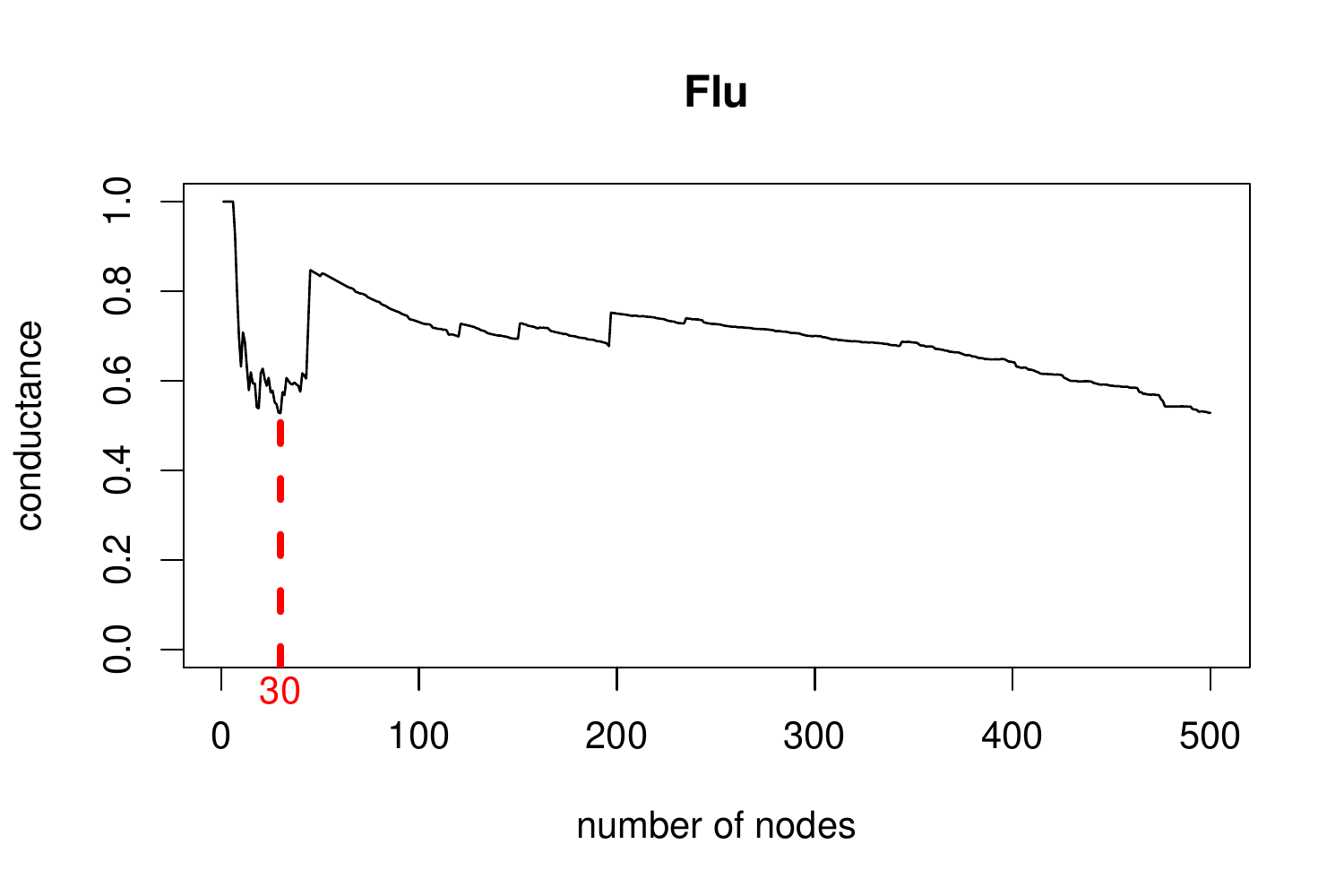}
     \caption{} 
\end{subfigure}
\caption{The conductance plots for each topic: (a) single-cell transcriptomics, (b) labor economics, and (c) flu. The x-axis is truncated at 500 for visual clarity.}\label{fig:conductance_topic}
\end{figure}


For each topic, we construct the preference vector by Eq~\eqref{eq:pi_vector}. We choose the threshold $t$ based on the citation counts from the topic papers to the source papers. For the topic ``single-cell" and ``labor economics", the top papers receive more than 90 citations; we set $t = 10$. For the topic ``flu", the highest citation count is less than 90, and we set $t = 5$. 

The conductance plot for each topic is shown in Figure~\ref{fig:conductance_topic}. In most cases, there is an obvious local minimum leading to a reasonable community size. In (b), we choose the first minimum occurring after $n \geq 10$ for a more plausible subnetwork size and clearer interpretation of result.


\subsection{Additional case studies}
\label{sec:more_cases}
\spacingset{1}
\begin{table}[h]
\centering
\footnotesize
\begin{tabular}{c|r|c | c}
\hline\hline
Topic & Size & Graph density  & Average clustering coefficient \\ \hline
Covid-19 (BIO) & 39 & 0.054  & 0.309    \\ \hline
Covid-19 (SOC) & 12 & 0.115  & 0     \\ \hline
\end{tabular}
\caption{Basic information for the networks in Figure~\ref{fig:case_2}.}
\end{table}

\noindent\textbf{Covid-19 (BIO \& SOC)} 

As a rapidly emerging topic of wide scientific interests and public relevance, we apply our local clustering procedure to the topic Covid-19. We search papers with the word ``covid" in their abstracts and either have the category label BIO or SOC. We treat these two sets of papers separately as we expect them to focus on different aspects of the pandemic in their studies.

Note that as we finished the data collection process before 2020 December, most of the research related to vaccines or new Covid-19 strains (e.g., the Delta variant)  had not yet appeared.
For Covid papers in BIO, a considerable number of papers found by the clustering procedure are on survival analysis. In particular, we can observe a hub at node 24 \citep{fine1999proportional} in Figure~\ref{fig:case_covid_bio}. This paper proposed the Fine-Gray method, which is popular in competing risk analysis (a type of survival analysis). The cause-specific hazard functions with explanatory covariates are commonly used in this type of analysis, but they often lack interpretations. As a result, clinicians prefer the cumulative incidence functions that are the marginal probability of certain events. \cite{fine1999proportional} modeled the cumulative incidence function by a proportional hazards model, which helps analysts measure the effect of covariates.  Another hub is node 39 \citep{benjamini1995controlling}, pointing to the need for multiple testing in many analyses in this field.

On the other hand, the papers found for Covid-19 SOC focus more on the societal reactions under a pandemic.  In Figure~\ref{fig:case_covid_soc},  we can observe a hub centered at node 10 \citep{abadie2010synthetic}. This paper analyzed the effects of  California Proposition 99 (Tobacco Tax and Health Protection Act of 1988) through the synthetic control method that is commonly used to evaluate the effect of an intervention in comparative case studies. At the beginning of the pandemic, the effect of various quarantine measures became of great public concern. Synthetic control methods are used to study the outcomes of different quarantine  policies.

\spacingset{1}
\begin{figure}
    \begin{subfigure}[b]{1\textwidth}
    \caption{Covid-19 (BIO)} \label{fig:case_covid_bio}
    \vspace{-0.2cm}
     \scalebox{0.5}{\includegraphics{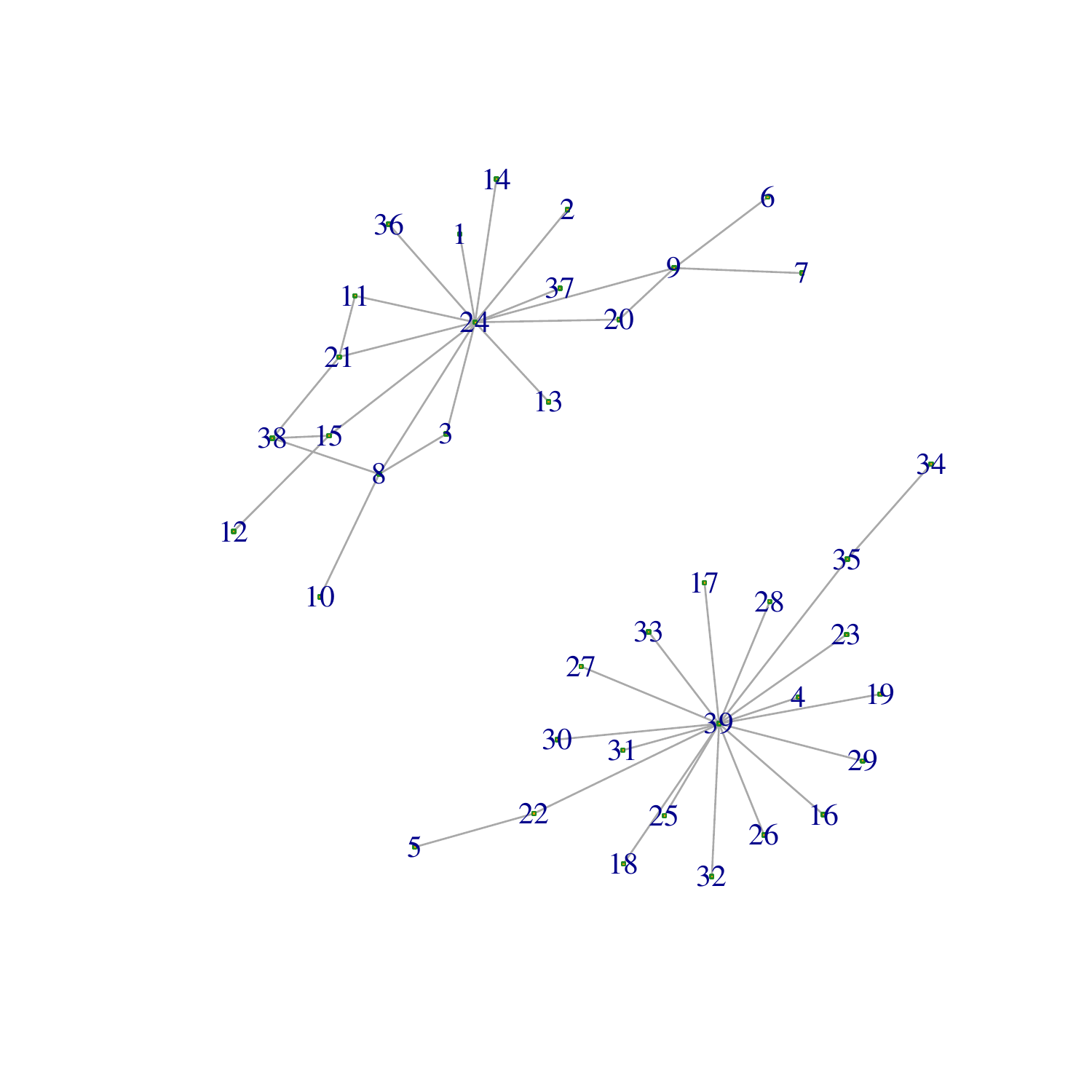}}
    \scalebox{0.45}{\includegraphics{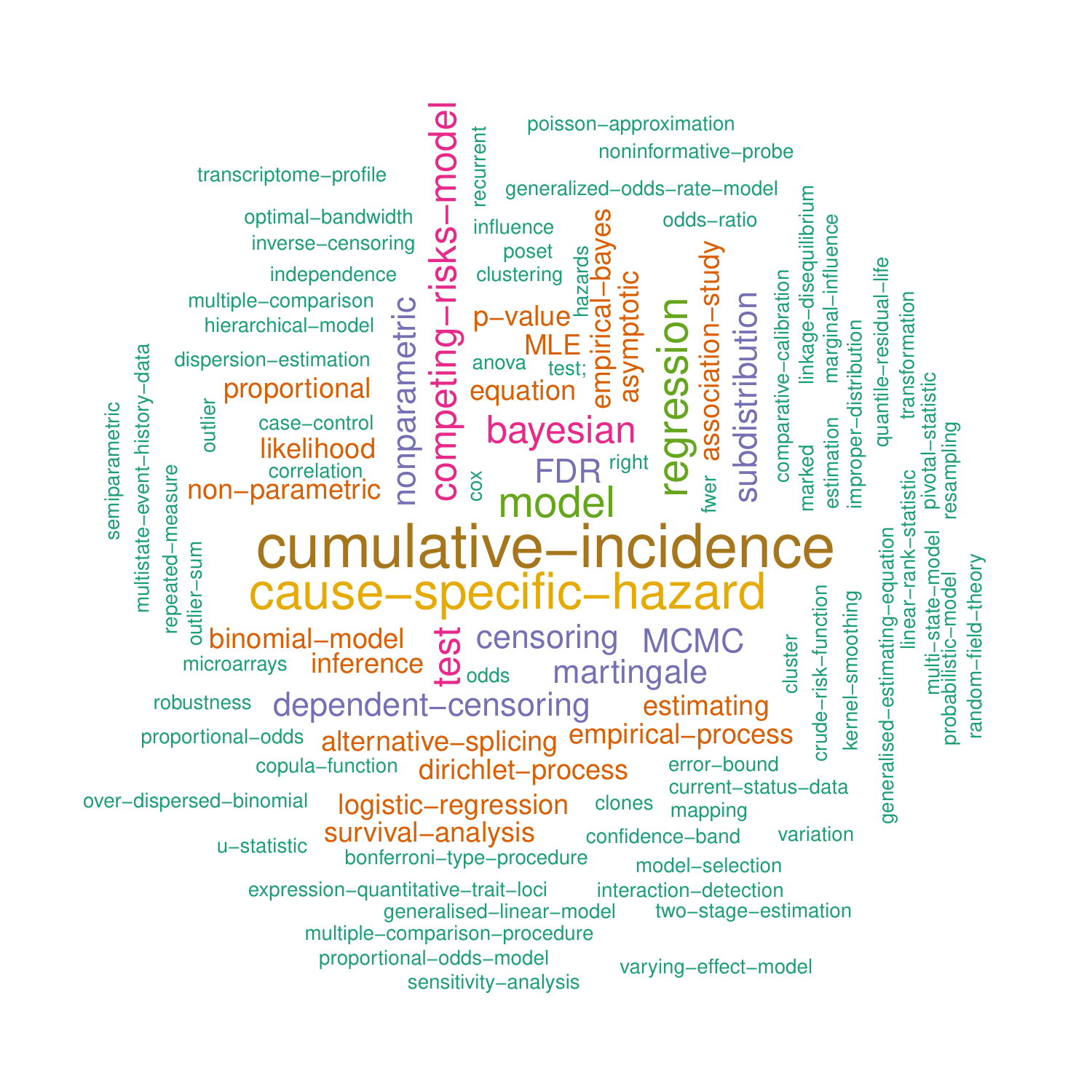}}
    \end{subfigure}
 \vspace{1cm}   
 \begin{subfigure}[b]{1\textwidth}
 \caption{Covid-19 (SOC)} \label{fig:case_covid_soc}
 \vspace{-0.2cm}
     \scalebox{0.5}{\includegraphics{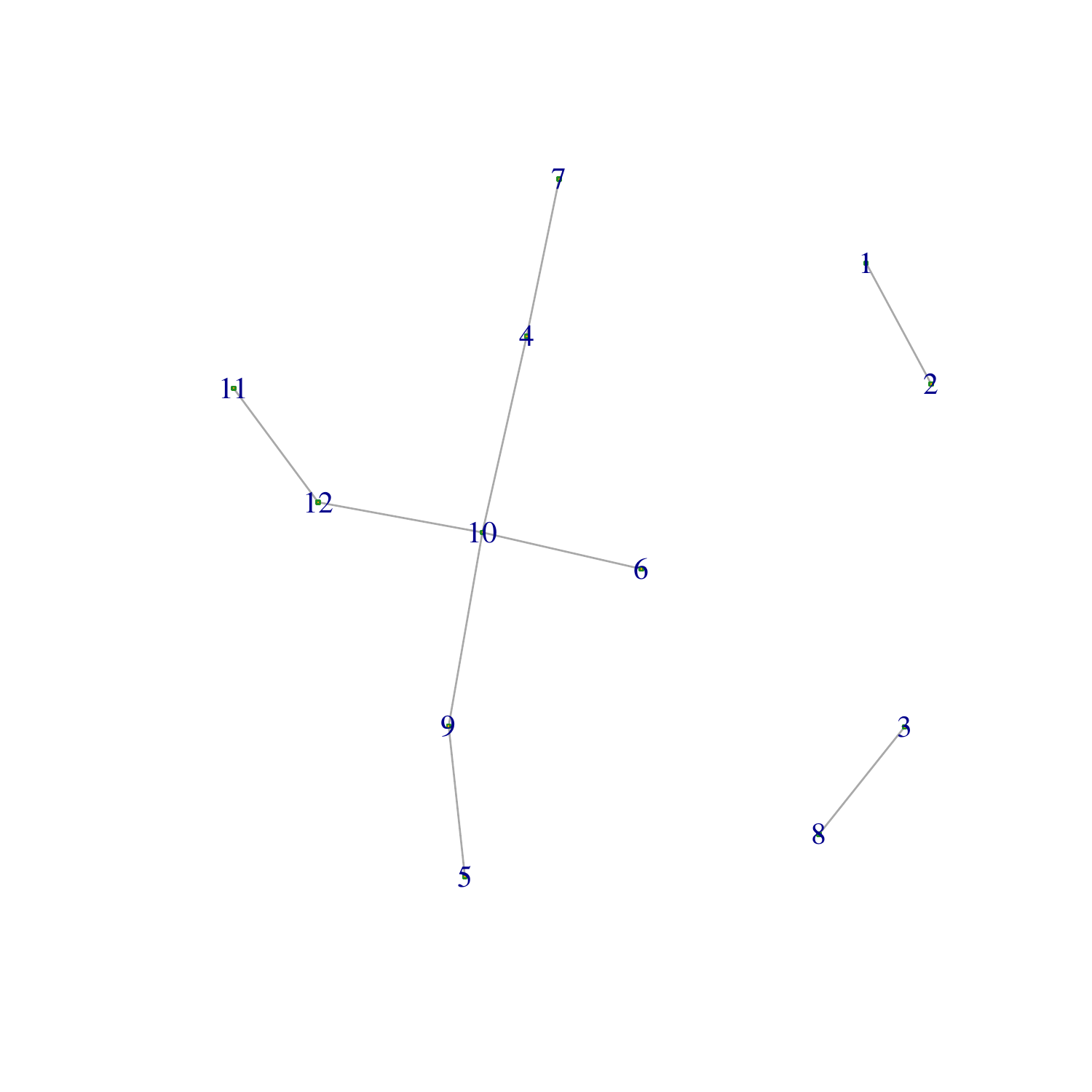}}
    \scalebox{0.45}{\includegraphics{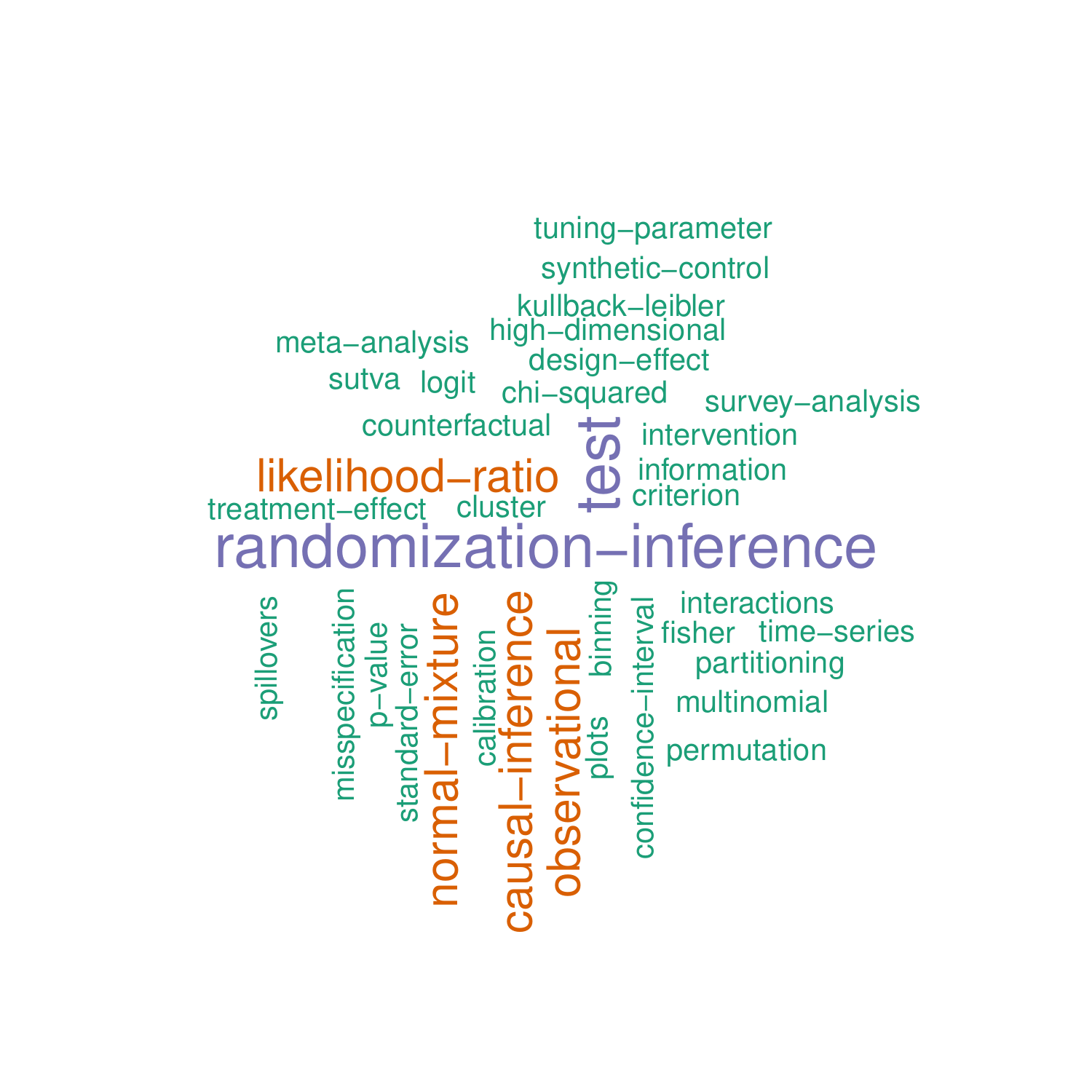}} \vspace{-1cm}
    \end{subfigure}
\caption{Networks and word clouds generated from the source papers found by local clustering for each topic.}\label{fig:case_2}
\end{figure}

\begin{figure}[h]
\centering
\begin{subfigure}[b]{0.48\textwidth}
    \centering
   \includegraphics[width=\textwidth]{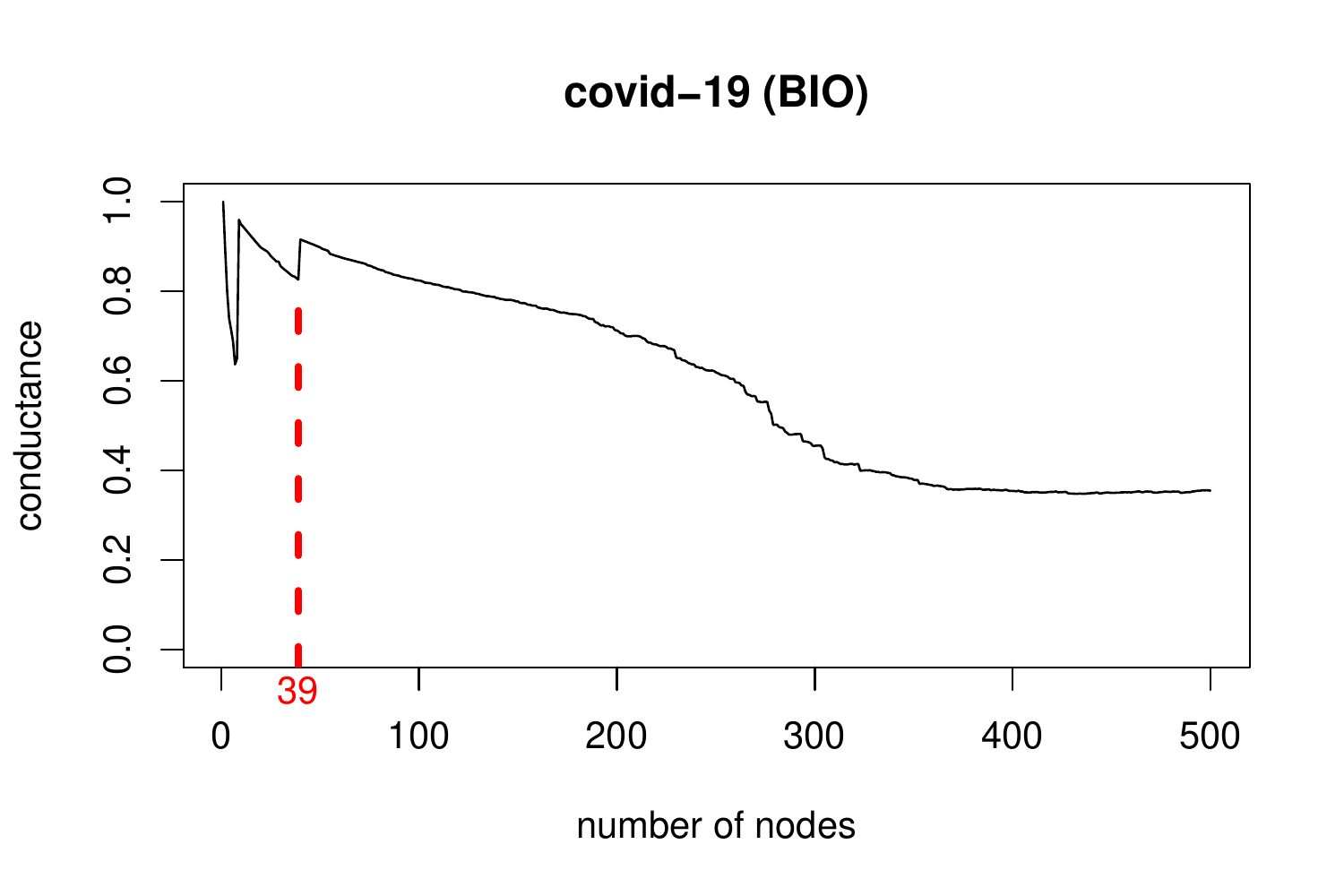}
     \caption{} 
\end{subfigure}
\begin{subfigure}[b]{0.48\textwidth}
    \centering
   \includegraphics[width=\textwidth]{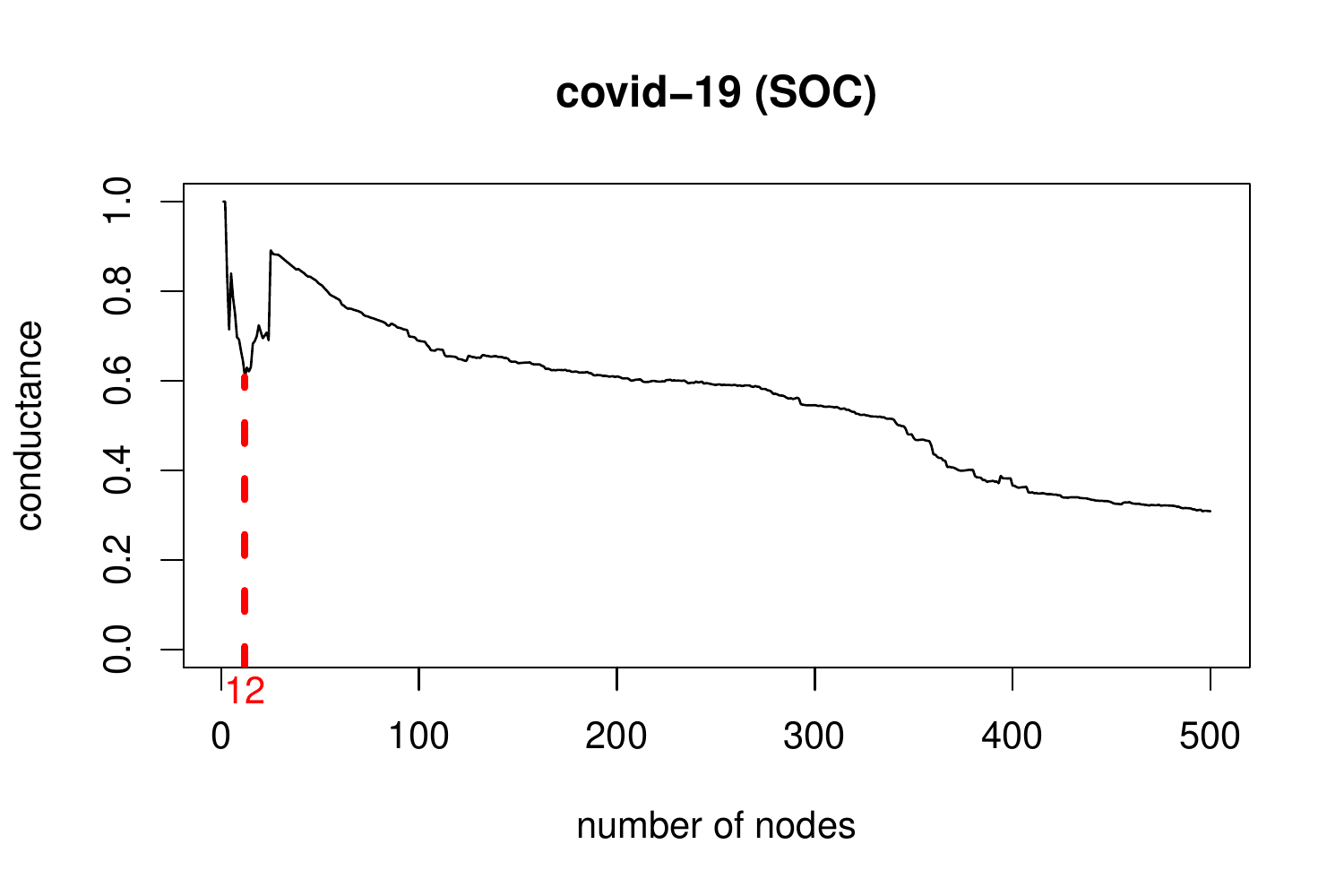}
     \caption{} 
\end{subfigure}
\caption{Conductance plots for: (a) Covid-19 BIO; (b) Covid-19 SOC.}
\end{figure}

\end{document}